\definecolor{dark-red}{rgb}{0.4,0.15,0.15}
\definecolor{dark-blue}{rgb}{0.15,0.15,0.4}
\definecolor{medium-blue}{rgb}{0,0,0.5}
\definecolor{mycomment}{rgb}{0.3,0.7,0.8}
\definecolor{mygray}{rgb}{0.5,0.5,0.5}
\definecolor{lightgray}{rgb}{0.95,0.95,0.95}
\definecolor{mymauve}{rgb}{0.58,0,0.82}
\pgfplotsset{compat=1.18}
\DeclareMathOperator{\C}{\mathbb{C}} %
\DeclareMathOperator{\polyOp}{poly}
\DeclareMathOperator{\Ex}{\mathbf{E}}
\DeclareMathOperator{\Prob}{P}
\DeclareMathOperator{\tr}{tr}
\DeclareMathOperator{\fidelityOp}{\kl[fidelity]{F}}
\DeclareMathOperator{\spanOp}{span}
\DeclareMathOperator{\RealOp}{Re}
\DeclareMathOperator{\entropyOp}{\kl[entropy]{S}}
\DeclareMathOperator{\informationOp}{\kl[\informationOp]{I_H}}
\knowledge \informationOp {notion}
\DeclareMathOperator*{\selectOp}{select}
\DeclareMathOperator{\idm}{\mathbf{1}}
\DeclareMathOperator{\swaprot}{\kl[\swaprot]{W}}
\DeclareMathOperator{\Prep}{prep}
\knowledge \swaprot {notion}
\DeclareMathOperator{\sym}{\kl[\sym]{Sym}}
\knowledge \sym {notion}
\knowledgenewrobustcmd{\decoder}[0]{\cmdkl{D}}
\knowledgenewrobustcmd{\apxRefCh}{\cmdkl{\mathcal E}}
\newcommand{\rotChannel}{\ensuremath{\kl[rotation channel]{\mathcal R}}}
\newcommand{\refChannel}{\ensuremath{\kl[reflection channel]{\mathcal R}}}
\DeclarePairedDelimiter{\of}{\lparen}{\rparen}
\DeclarePairedDelimiter\ceil{\lceil}{\rceil}
\newcommand{\ct}{^{\dagger}}
\newcommand{\tp}{^{\mathsf{T}}}
\newcommand{\xp}[1]{^{\otimes #1}}
\newcommand{\defeq}{\coloneqq} %
\DeclarePairedDelimiter\paren{(}{)}
\DeclarePairedDelimiter\abs{\lvert}{\rvert}
\DeclarePairedDelimiter{\norm}{\lVert}{\rVert}
\DeclarePairedDelimiterX\brakett[3]{\langle}{\rangle}%
{#1\,\delimsize\vert\,\mathopen{}#2\,\delimsize\vert\,\mathopen{}#3}
\DeclarePairedDelimiterX\ketbra[2]{\vert}{\vert}{#1 {\delimsize\rangle\langle} #2}
\DeclarePairedDelimiterX{\inp}[2]{\langle}{\rangle}{#1, #2} %
\DeclarePairedDelimiterXPP\bigo[1]{O}{(}{)}{}{#1}
\DeclarePairedDelimiterXPP\littleo[1]{o}{(}{)}{}{#1}
\DeclarePairedDelimiterXPP\bigomega[1]{\Omega}{(}{)}{}{#1}
\DeclarePairedDelimiterXPP\bigtheta[1]{\Theta}{(}{)}{}{#1}
\DeclarePairedDelimiterXPP\poly[1]{\polyOp}{(}{)}{}{#1}
\DeclarePairedDelimiterXPP\probability[1]{\Prob}{[}{]}{}{#1}
\DeclarePairedDelimiterXPP\probabilityq[2]{\underset{#1}{\Prob}}{[}{]}{}{#2}
\DeclarePairedDelimiterXPP\expectation[1]{\Ex}{[}{]}{}{#1}
\DeclarePairedDelimiterXPP\expectationq[2]{\underset{#1}{\Ex}}{[}{]}{}{#2}
\DeclarePairedDelimiterXPP\entropy[1]{\entropyOp}{(}{)}{}{#1}
\DeclarePairedDelimiterXPP\information[1]{\informationOp}{(}{)}{}{#1}
\DeclarePairedDelimiterXPP\trace[1]{\tr}{[}{]}{}{#1}
\DeclarePairedDelimiterXPP\ptrace[2]{\tr_{#1}}{[}{]}{}{#2}
\DeclarePairedDelimiterXPP\spanSet[1]{\spanOp}{\{}{\}}{}{#1}
\DeclarePairedDelimiterXPP\fidelity[2]{\fidelityOp}{(}{)}{}{#1, #2}
\DeclarePairedDelimiterXPP\fidelitywc[2]{\fidelityWCOp}{(}{)}{}{#1, #2}
\DeclarePairedDelimiterXPP\Real[1]{\RealOp}{\{}{\}}{}{#1}
\DeclarePairedDelimiterXPP\select[1]{\selectOp}{(}{)}{}{#1}
\newcommand{\reflection}[1]{\ensuremath{e^{i\pi #1}}}
\newcommand{\opnorm}{\@ifstar\@opnorms\@opnorm}
\newcommand{\@opnorms}[1]{%
	\left|\mkern-1.5mu\left|\mkern-1.5mu\left|
   #1
	\right|\mkern-1.5mu\right|\mkern-1.5mu\right|
}
\newcommand{\@opnorm}[2][]{%
	\withkl{\kl[diamond norm]}{
		\cmdkl{\mathopen{#1|\mkern-1.5mu#1|\mkern-1.5mu#1|}}
		#2
		\cmdkl{\mathclose{#1|\mkern-1.5mu#1|\mkern-1.5mu#1|}}
	}
}
\providecommand\given{}
\newcommand\SetSymbol[1][]{%
\nonscript\:#1\vert
\allowbreak
\nonscript\:
\mathopen{}}
\DeclarePairedDelimiterX\set[1]\{\}{%
\renewcommand\given{\SetSymbol[\delimsize]}
#1
}
\DeclarePairedDelimiterXPP\gammaTwo[1]{\gamma_2}{(}{)}{}{%
\renewcommand{\given}{\SetSymbol[\delimsize]}
#1
}
\newcommand{\apxRef}{\kl[approximate reflection]{R}}
\newcommand{\cyclicPermutation}{\kl[cyclic permutation]{C}}
\newcommand{\unifs}{\kl[uniform superposition]{s}}
\crefname{figure}{Figure}{Figures}
\crefname{equation}{}{} %
\Crefname{equation}{Eq.}{Eqs.} %
\crefname{conjecture}{Conjecture}{Conjectures}
\Crefname{conjecture}{Conjecture}{Conjectures}
\newtheorem{theorem}{Theorem}[section]
\newtheorem{lemma}[theorem]{Lemma}
\newtheorem{fact}[theorem]{Fact}
\newtheorem{corollary}[theorem]{Corollary}
\newtheorem{conjecture}[theorem]{Conjecture}
\theoremstyle{definition}
\newcommand{\Airreps}[2]{\widehat{\mathcal{A}}^{#1}_{#2}}
\newcommand{\A}[2]{\mathcal{A}^{#1}_{#2}}
\newcommand{\GT}[1]{\mathrm{GT}(#1)}
\newcommand{\CG}{\mathrm{CG}}
\knowledge \rotChannel[rotation channel]{notion}
\title{Quantum Programmable Reflections}
\author[1,3]{Eddie Schoute\thanks{\href{mailto:eddie.schoute@ibm.com}{eddie.schoute@ibm.com}}}
\author[2,4,5,6]{Dmitry Grinko}
\author[1]{Yi\u{g}it Suba\c{s}{\i}\thanks{\href{mailto:ysubasi@lanl.gov}{ysubasi@lanl.gov}}}
\author[2]{Tyler Volkoff}
\affil[1]{Computer, Computational and Statistical Sciences Division, Los Alamos National Laboratory, Los Alamos, NM, USA}
\affil[2]{Theoretical Division, Los Alamos National Laboratory, Los Alamos, NM, USA}
\affil[3]{IBM Research, Cambridge, MA, USA}
\affil[4]{QuSoft, Amsterdam, NL}
\affil[5]{Institute for Logic, Language and Computation, University of Amsterdam, Amsterdam, NL}
\affil[6]{Korteweg-de Vries Institute for Mathematics, University of Amsterdam, Amsterdam, NL}
\begin{document}

\date{}

\maketitle

\begin{abstract}\noindent%
    Similar to a classical processor,
    which is an algorithm for reading a program and executing its instructions on input data,
    a universal programmable quantum processor is a fixed quantum channel that reads a quantum program
    $\ket{\psi_{U}}$
    that causes the processor to approximately apply an arbitrary unitary $U$ to a quantum data register.
    The present work focuses on a class of simple programmable quantum processors for implementing reflection operators,
    i.e.\ $U = e^{i \pi \ketbra{\psi}{\psi}}$
    for an arbitrary pure state $\ket\psi$ of finite dimension $d$.
    Unlike quantum programs that assume query access to $U$, our program takes the form of independent copies of the state to be reflected about $\ket{\psi_U} = \ket{\psi}^{\otimes n}$.
    We then identify the worst-case optimal algorithm among all processors of the form
    $\ptrace{\text{Program}}{V \paren{\ketbra{\phi}{\phi} \otimes \paren{\ketbra{\psi}{\psi}}^{\otimes n}} V^\dagger}$
    where the algorithm $V$ is a unitary linear combination of permutations.
    By generalizing these algorithms to processors for arbitrary-angle rotations, $e^{i \alpha \ketbra{\psi}{\psi}}$
    for $\alpha \in \mathbb R$, we give a construction for a universal programmable processor with better scaling in $d$.
    For programming reflections,
    we obtain a tight analytical lower bound on the program dimension
    by bounding the Holevo information of an ensemble of reflections applied to an entangled probe state.
    The lower bound makes use of a block decomposition of the uniform ensemble of reflected states with respect to irreps of the partially transposed permutation matrix algebra,
    and two representation-theoretic conjectures based on extensive numerical evidence.
    Finally, we observe that programming the submanifold of reflections obeys a conjecture by \textcite{Yang2020}
    with the dependence on the dimension being much weaker compared to universal programming.
\end{abstract}
\clearpage%
\tableofcontents

\section{Introduction}
Classical processors receive their instructions from a classical program
and a common paradigm is to instrument quantum processors by classical programs as well.
A natural question is then whether it is possible to engineer a fixed quantum channel
that reads a \emph{quantum} program in order to apply any unitary evolution,
a \emph{universal (quantum) processor}.
Unfortunately, a universal processor can only be approximately constructed~\cite{Nielsen1997}
with a known minimum program dimension~\cite{Yang2020}.
In the programmable metrology algorithm utilized in \cite{Yang2020}
and in other general algorithmic frameworks for programmable quantum processors~\cite{Nielsen1997,PhysRevLett.88.047905,ishi,Hillery2006,,Haah2023},
the program state is prepared by assuming query access to the unitary $U$ and applying these queries to an optimally chosen probe state.
But, when programming just a submanifold of unitaries,
it may be relevant to consider program states that are not obtained by query access to the unitary.
One alternative approach is to assume an information-theoretical encoding of the programmed operation as in the $\epsilon$-net approach to programming compact submanifolds of the set of unitary operations~\cite{PhysRevLett.122.080505, Gschwendtner2021}, although this approach does not give a tight upper bound for the universal case.

In the present work, we consider programming the submanifold of reflections about a state $\ket\psi$ of a finite dimension $d \ge 2$.
This submanifold is homeomorphic to the set of pure states under the map $\psi \mapsto R_{\psi}:= \idm - 2\psi$,
where we denote $\ketbra{\psi}{\psi}$ as $\psi$.
Our quantum program for approximately implementing a reflection does not utilize queries to the reflection operator,
but is given by copies of the state $\ket\psi$ encoding the reflection axis. In this approach, the program complexity is quantified by the number of copies $n$ or, more generally, by the dimension of the symmetric subspace encoding information about $\ket{\psi}$.
Thus, we wish to find a processor $\mathcal C$ independent of $\ket\psi$
such that $\mathcal E_\psi(\phi) \coloneqq \mathcal C(\phi \otimes \psi^{\otimes n})$
minimizes the diamond distance \cref{eq:diamondDistance}, $\opnorm{\mathcal R_\psi - \mathcal E_\psi}$,
for the unitary channel $\mathcal R_\psi(\phi) \coloneqq R_\psi \phi R^\dagger_\psi$. We will focus on processors $\mathcal{C}$ with a non-trivial circuit structure.
While it is possible to program reflections by measuring the copies $\psi^{\otimes n}$, then reflecting about the estimate of $\psi$,
we show in \cref{sec:mr} that $n$ must then scale with $d$
and this is unnecessary.

Despite the ubiquity of reflection operators in quantum query algorithms \cite{Magniez2011,Reichardt2009,Lee2011}, these algorithms often have the goal of approxmiately preparing the state corresponding the reflection axis. The goal of the programmable reflection protocols analyzed in the present work is the opposite: to approximate the reflection using a quantum program consisting of copies of a pure state. This motivation is the same as for the density matrix exponentiation approach to Hamiltonian and Lindbladian simulation \cite{Kimmel2017,go2025,wml} and the \textit{samplizer} approach to sample complexity of computing quantum information-theoretic quantities \cite{wang1,wang2}. In a different programmability setting in which queries to the dynamics are utilized to produce a program state, approximate programmable processors are used to implement the approximate storage and retrieval task \cite{Yang2020,yoshida2025}, which, when optimized, has been shown to be equivalent to optimal unitary estimation, deterministic port-based teleportation, and approximate parallel unitary inversion \cite{yoshida2024}. %

In contrast to approximate programmable processors, constructions of probabilistic exact programmable processors for qudit unitaries have appeared in the literature, in which the program states are linear combinations of maximally entangled states of two qudit registers and the unitary operation is implemented exactly, but with low probability (i.e., the probabilistic setting)~\cite{Hillery2002,Hillery2004}. However, the general goal in the analysis of approximate  programmable quantum processors is to understand how the complexity of the program state relates to the minimal possible error in approximating the target operation. 
For example, one approach for programming reflections about a pure state $\ket{\psi}$ is to identify the orthogonal and perpendicular components of $\ket\phi$ relative to $\ket{\psi}$
by using permutation gates on $\ket{\phi}\ket{\psi}^{\otimes n}$~\cite{Barenco1997}.
\textcite{Harrow2011} give an algorithm for reflections $R_{\psi_{AB}}$ of a bipartite system, using copies $\ket\psi_{AB}$ 
that are distributed between two systems. They show that
approximately identifying which part of the input state $\ket{\phi}_{AB}$ overlaps with $\ket{\psi}_{AB}$
is possible by implementing a global cyclic permutation using only local cyclic permutations and $\log(n)$ qubits of communication in each direction. A non-distributed version of this algorithm is the main example of the class of algorithms that we consider in the present work. 

Instead of global permutation gates,
programmable reflections can also be implemented on a qubit system by coupling to a many-qubit system using Heisenberg interactions~\cite{Mo2019}.
One first prepares a large spin-$\frac{n}{2}$ system in the initial state $\ket{J_{z}=\frac{n}{2}}$
which is then rotated along the representation of $U^{\otimes n}$,
for $U\ket{0} = \ket\psi$.
Then, one couples the spin-$\frac{n}{2}$ system to an input spin-$\frac{1}{2}$ system (essentially a qubit)
with the Hamiltonian $X \otimes J_x + Z \otimes J_z + Y \otimes J_y$,
for $SU(2)$ generators $J_x,J_z,$ and $J_y$ in spin-$\frac{n}{2}$, and Pauli operators $X,Z$, and $Y$,
that reflects the spin-$\frac{1}{2}$ system about the state  $\ket{\psi}$. 
This algorithm has worst-case fidelity $O(1/n)$ with the reflection channel $\mathcal{R}_{\psi}$,
the same as the algorithms considered in the present work for system dimension $d=2$.
Although this approach allows for programming $SU(2)$ rotations on systems of higher spin,
reflections about pure states in the higher spin $SU(2)$ representations do not correspond to elements of $SU(2)$ group,
so we do not pursue this class of spin-interaction algorithms in the present work.

Copies of a state $\rho$ also constitute a program state for approximately implementing the unitary operation given by the time evolution $e^{i t \rho}$, for time $t$ and generator $\rho$~\cite{Lloyd2014} (also known as the LMR algorithm).
Taking $t=\pi$ and providing copies of $\ket\psi$ as the program state, this algorithm implements an approximate reflection.
In the context of Hamiltonian simulation, the LMR algorithm is asymptotically optimal using $\bigtheta{\frac{1}{\epsilon}}$ copies,
for $\epsilon$ trace distance~\cite{Kimmel2017,go2025}.
The LMR algorithm falls into the the general class of algorithms that we focus on in the present work.

\AP In \cref{sec:apref}, we introduce and analyze a broad class of algorithms that use copies of an unknown state as the program.
We define the ""approximate reflection channels""
that are any algorithms specified by a quantum channel
\begin{equation}\label{eqn:appxreflchan}
	\intro*\apxRefCh_{\psi,V}(\phi) \coloneqq \ptrace*{P}{V \phi_S \otimes \psi^{\otimes n}_P V^\dagger}
\end{equation}
where the unitary $V \in U(d^{n+1})$ acts on the subspace $\mathbb{C}^{d} \otimes \sym^{n}(\mathbb{C}^{d})$ as a unitary linear combination of permutation operators, i.e., a unitary element of $\mathbb C[S_{n+1}]$.
The unitary $V$ must also be independent of the state $\ket{\psi}$ so that it is an approximate quantum processor.
For "approximate reflection channels",
we go beyond asymptotics and find the exact minimum diamond distance,
improving on \cite{Harrow2011}.
We then provide a simple implementation (\cref{fig:algorithm}) within this class that attains the minimum distance.

In \cref{sec:unitaryreflections}, we generalize our results to rotations $e^{i \alpha \psi}$, for $\alpha \in \mathbb R$.
We give optimal algorithms in terms of an angle parameter that we obtain numerically.
Moreover, we show that the LMR algorithm with $\rho = \psi$ is an "approximate reflection channel"
that only accesses copies of the program register sequentially.
We also give improved rotation angles for this algorithm at any finite $n$.

In \cref{sec:lower} we derive a lower bound for program dimension necessary to approximately program reflections
by making some assumptions on the existence of optimal probe states which have the general form of an optimal state for estimation of an element of the unitary group $U(d)$ \cite{PhysRevA.72.042338}.
The problem of existence of these probe states is formulated as a solution to a system of linear equations,
which we numerically confirm up to large $n$ and $d$.
In contrast to our algorithms,
this proof does not assume that the program state takes the form of independent copies or a state in the symmetric subspace, but rather allows information about the reflection to be more generally encoded in entangled states of the ampliated Hilbert space $(\mathbb{C}^{d})^{\otimes n}\otimes (\mathbb{C}^{d})^{\otimes n}$.
Our lower bound is tight, 
and recovers the bound of \cite{Kimmel2017},
which only applies when the program state consists of copies of $\psi$.
Moreover, the minimal program dimension is much smaller than needed for universal programming.

Finally, we show that a sequence of programmable rotations can be used to construct
a universal quantum processor that has improved scaling in $d$ when compared with \cite{Yang2020} in \cref{sec:universal}.
We conclude (\cref{sec:conclusion}) by fitting our results into a conjecture by \textcite{Yang2020} regarding the necessary program dimension for a parametric family of quantum gates.

\section{Approximate reflection channels\label{sec:apref}}
To start, let us prove that exact programming of an arbitrary reflection, $R_{\psi}$,
using the program state $\ket{\psi}^{\otimes n}$ is impossible for finite $n$
and, therefore, we can only program reflections approximately.
Similar statements exist for programming other general classes of operations (for example, all of $U(d)$ \cite{Nielsen1997} or the qubit NOT operation \cite{Buzek1999}).
Intuitively, this result is similar to the no-go theorem for universal programming of unitary gates~\cite{Nielsen1997},
namely that the Hilbert space of program states can never be big enough to encode the information about the reflection operators that they program.

We define a rotation $R_\psi(\alpha) \coloneqq e^{i \alpha \ketbra{\psi}{\psi}}$, for $\alpha \in \mathbb R$,
then we let a reflection be $R_\psi \coloneqq R_\psi(\pi)$.
We define the rotation channel to be the unitary channel constructed by conjugation with $R_\psi(\alpha)$, 
\begin{equation}\label{eq:rotationChannel}
    \intro*\rotChannel_\psi(\alpha)(X) \coloneqq R_\psi(\alpha) X R_\psi^\dagger(\alpha),
\end{equation}
then the reflection channel is $\intro*\refChannel_\psi \coloneqq \mathcal R_\psi(\pi)$.

\begin{theorem}[No-go for perfect reflection with copies of a pure state]\label{thm:imposs}
Let $P \cong (\mathbb{C}^{d})^{\otimes n}$ be a program register, $S \cong (\mathbb{C}^{d})^{\otimes k}$ an input register, $E$ an arbitrary environment register, and $V: SP \rightarrow SPE$ an isometry. For any $n,k\in \mathbb{N}$, there is no quantum channel $C_{\psi}$ on states of $S$ of the form 
\begin{equation}\label{eqn:proc}
    C_{\psi}(\phi):= \ptrace*{PE}{V \phi_S \otimes (\psi^{\otimes n})_{P}  V^{\dagger}}
\end{equation}
such that for all normalized pure states $\ket{\psi}\in \mathbb{C}^{d}$
\begin{equation}
    C_{\psi}(\phi)=\mathcal{R}_{\psi}^{\otimes k}(\phi).
\end{equation}
\end{theorem}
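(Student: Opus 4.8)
The plan is to derive a contradiction from a dimension/information-counting argument, formalizing the intuition stated in the text that the program Hilbert space is "never big enough." The essential point is that if $C_\psi$ perfectly implemented $\mathcal R_\psi^{\otimes k}$ for every $\ket\psi$, then the single fixed program state $\ket\psi^{\otimes n}$ would have to carry enough information to distinguish the continuum of distinct reflection channels $\{\mathcal R_\psi\}$, which live on the infinite set of pure states, through only a finite-dimensional symmetric register.

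First I would reduce to the simplest informative case and use the standard Nielsen--Chuang-style argument \cite{Nielsen1997}. Suppose such a $V$ and $C_\psi$ exist. Consider two distinct normalized states $\ket\psi$ and $\ket{\psi'}$ with $R_\psi \neq R_{\psi'}$. Because the processor applies $\mathcal R_\psi^{\otimes k}$ exactly, acting on a fixed purification of an input (e.g. feeding half of a maximally entangled state on $S$ together with a reference), the output isometry $R_\psi^{\otimes k}$ on $S$ must be reproduced by tracing out $PE$. The key algebraic step is to show that the program states $\ket\psi^{\otimes n}_P$ and $\ket{\psi'}^{\otimes n}_P$ must then be \emph{orthogonal} whenever $R_\psi \neq R_{\psi'}$: if they had nonzero overlap, the linearity of $V$ and the trace-out would force a single output state to equal two genuinely different pure outputs $R_\psi^{\otimes k}\ket\phi$ and $R_{\psi'}^{\otimes k}\ket\phi$ for a suitable input $\ket\phi$, contradicting the exactness requirement. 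Concretely, one chooses an input $\ket\phi$ on which $R_\psi$ and $R_{\psi'}$ act differently (which exists since $R_\psi \neq R_{\psi'}$), and compares fidelities/inner products of the two global states before and after $V$.

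From the orthogonality claim I would extract the contradiction. The map $\psi \mapsto \ket\psi^{\otimes n}$ sends the set of pure states into $\sym^n(\mathbb C^d)$, a space of finite dimension $\binom{n+d-1}{d-1}$. But orthogonality of $\ket\psi^{\otimes n}$ and $\ket{\psi'}^{\otimes n}$ would have to hold for every pair with $R_\psi \neq R_{\psi'}$, i.e.\ essentially whenever $\ket{\psi'}$ is not a global phase of $\ket\psi$. Since $\braket{\psi}{\psi'}^n$ is the overlap of the program states, this overlap is nonzero for any $\ket{\psi'}$ not orthogonal to $\ket\psi$, so in particular for $\ket{\psi'}$ close to $\ket\psi$ it is close to $1$ and certainly nonzero, while $R_\psi \neq R_{\psi'}$ still holds. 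This directly contradicts the required orthogonality, completing the proof.

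The main obstacle I anticipate is making the orthogonality step fully rigorous with the environment $E$ and the arbitrary input register $S \cong (\mathbb C^d)^{\otimes k}$ present, rather than hand-waving it from the $k=1$, no-environment case. The cleanest route is probably to avoid proving strict orthogonality and instead work directly with overlaps: for any two programs, bound $\abs{\braket{\psi}{\psi'}}^n = \abs{\braket{\psi^{\otimes n}}{\psi'^{\otimes n}}}$ from below (it is bounded away from $0$ for nearby states) and show that exactness forces this same overlap to be controlled by $\trace{R_\psi^{\otimes k} (R_{\psi'}^{\otimes k})^\dagger \cdot (\text{something})}$, which can be made strictly smaller, yielding an inequality that fails. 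Handling the partial trace over $PE$ so that the comparison of output fidelities is exact (not merely approximate) is the delicate part, and I would lean on the fact that exact equality of channels forces exact equality of the corresponding Choi states to keep the argument airtight.
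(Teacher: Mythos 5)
Your proposal is correct in its core idea, but it is a genuinely different proof from the paper's. The paper argues quantitatively: fixing two states with $\abs{\braket{\psi_1}{\psi_2}} = \cos\varphi \neq 0$, exactness plus monotonicity of the trace norm under $\ptrace{PE}{\cdot}$ gives $\opnorm{\mathcal{R}_{\psi_1}^{\otimes k} - \mathcal{R}_{\psi_2}^{\otimes k}} \le \norm{\psi_1^{\otimes n} - \psi_2^{\otimes n}}_1 = 2\sqrt{1-\cos^{2n}\varphi} < 2$, while the arc-length formula for the diamond distance between tensor powers of unitary channels \cite[Proposition~18]{nech} forces that same distance to equal exactly $2$ once $4k\varphi \ge \pi$ (a single choice such as $\varphi = \pi/3$ handles every $k \ge 1$); these two statements contradict each other. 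You instead run the Nielsen--Chuang rigidity argument \cite{Nielsen1997}: purity of the exact output forces the product form $V\ket{\phi}_S\ket{\psi}_P^{\otimes n} = (R_\psi^{\otimes k}\ket{\phi})_S \otimes \ket{E_{\psi,\phi}}_{PE}$, and preservation of inner products by the isometry $V$ turns a nonzero program overlap $\braket{\psi}{\psi'}^n$ into the statement that $\bra{\phi}(R_\psi R_{\psi'})^{\otimes k}\ket{\phi}$ is a nonzero constant over all unit vectors $\ket{\phi}$, hence $(R_\psi R_{\psi'})^{\otimes k} \propto \idm$; choosing $\psi' \neq \psi$ non-orthogonal then yields the contradiction. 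Your route is more elementary (only inner products; no imported spectral formula for diamond distances) and isolates the rigidity mechanism behind all exact no-programming theorems; the paper's route is quantitative, which is why its two-estimate structure also survives in the approximate setting the rest of the paper studies. Incidentally, your fallback plan of comparing the program overlap against a channel-level trace quantity is, in spirit, exactly the paper's strategy.

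Three points in your sketch need tightening, and all three are fixable. (i) The product form does not follow from ``linearity of $V$ and the trace-out'' as you first phrase it; it follows because $V\ket{\phi}\ket{\psi}^{\otimes n}$ is pure and its marginal on $S$, namely $R_\psi^{\otimes k}\phi R_\psi^{\otimes k}$, is pure, so the global state factorizes. (ii) Before comparing two different programs you need $\ket{E_{\psi,\phi}}$ to be independent of $\phi$: compare $V\ket{\phi_1}\ket{\psi}^{\otimes n}$ with $V\ket{\phi_2}\ket{\psi}^{\otimes n}$ for $\braket{\phi_1}{\phi_2} \neq 0$ to get $\braket{E_{\psi,\phi_1}}{E_{\psi,\phi_2}} = 1$, and chain through a third state for orthogonal pairs. (iii) Your claimed dichotomy ``programs orthogonal whenever $R_\psi \neq R_{\psi'}$'' is not what falls out directly; what falls out is ``orthogonal or $(R_\psi R_{\psi'})^{\otimes k} = c\idm$,'' and you must close the second horn: if $R_\psi R_{\psi'}$ had two distinct eigenvalues $\mu_1 \neq \mu_2$, then $(R_\psi R_{\psi'})^{\otimes k}$ would have the distinct eigenvalues $\mu_1^k$ and $\mu_1^{k-1}\mu_2$, so in fact $R_\psi R_{\psi'} = \mu \idm$, i.e.\ $R_{\psi'} = \mu R_\psi$ with $\mu = \pm 1$ by Hermiticity. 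The case $\mu = 1$ gives $\psi' = \psi$; the case $\mu = -1$ gives $\psi + \psi' = \idm$, which is possible only for $d = 2$ with $\ket{\psi'}$ orthogonal to $\ket{\psi}$, and is therefore excluded by your choice of a non-orthogonal pair. With these patches your argument is complete and correct.
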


\AP For the proof, we introduce some common notation. 
The ""trace norm"" of a linear map, $\Psi$, is defined as
\begin{equation}
	\norm{\Psi}_1 \coloneqq \max_{X : \norm{X}_1 \le 1} \norm{\Psi(X)}_1,
\end{equation}
for a maximization over linear operators fitting the map $\Psi$.
Then the "diamond norm" (or completely bounded trace norm) can be defined as
\begin{equation}\label{eq:diamondDistance}
	\intro*\opnorm{\Psi} \coloneqq \norm{\Psi \otimes \idm}_1,
\end{equation}
where the identity map is of the same dimensionality as $\Psi$.
The "diamond norm" measures that maximum "trace norm" of a map even when considering some entangled ancilla register of arbitrary dimension.
The difference map
\begin{equation}
	\paren{\Psi - \Phi}\paren{X} \coloneqq \Psi(X) - \Phi(X)
\end{equation}
is used to define the ""diamond distance"" between two arbitrary channels, $\Phi$ and $\Psi$,
as $\opnorm{\Phi - \Psi}$.

\begin{proof}[Proof of \cref{thm:imposs}]
Note that $C_{\psi}$ depends on the $\psi$ only through the program register. Therefore,  (\ref{eqn:proc}) expresses the Stinepring dilation of a programmable quantum processor.
Assume that $C_{\psi_{j}}(\phi)=\mathcal{R}_{\psi_{j}}^{\otimes k}(\phi)$ for $j=1,2$, where $\mathcal{R}_{\psi_{j}}$ is the unitary channel associated with $R_{\psi_{j}}$,
and assume that $\abs*{\braket{\psi_{1}}{\psi_{2}}} =\cos \varphi \neq 0$. One finds that
\begin{align}
    \opnorm*{\mathcal{R}_{\psi_{1}}^{\otimes k}-\mathcal{R}_{\psi_{2}}^{\otimes k}}
    &= \sup_{\ket{\phi}\in (\mathbb{C}^{d})^{\otimes k}} \norm*{(\mathcal{R}_{\psi_{1}}^{\otimes k}-\mathcal{R}_{\psi_{2}}^{\otimes k})(\phi)}_{1} \nonumber \\
    &\le \sup_{\ket{\phi}\in (\mathbb{C}^{d})^{\otimes k}} \norm*{V  \phi \otimes (\psi_{1}^{\otimes n}  - \psi_{2}^{\otimes n})V^{\dagger}}_{1} \nonumber  \\
    &= \norm*{\psi_{1}^{\otimes n} -\psi_{2}^{\otimes n}}_{1} \nonumber \\
    &= 2\sqrt{1-\cos^{2n}\varphi}.
\label{eqn:oqoq}
\end{align}
The first inequality is monotonicity of the trace distance under the quantum channel $\text{tr}_{PE}$. The diamond distance between unitary channels $\mathcal{U}^{\otimes k}$ and $\mathcal{V}^{\otimes k}$ is given by 2 if the smallest arc $\Omega$ on the unit circle containing the spectrum $\sigma((U^{\dagger}V)^{\otimes k})$ satisfies $\vert \Omega \vert \ge \pi$, and is given by $2\sin \frac{\abs{\Omega}}{2}$ otherwise \cite[Proposition~18]{nech}.
For $\mathcal{U}=\mathcal{R}_{\psi_{1}}$, $\mathcal{V}=\mathcal{R}_{\psi_{2}}$, this arc has length $4k\varphi$. Therefore, the diamond distance is equal to 2 when $k> \frac{\pi}{4\varphi}$.
But the right hand side of (\ref{eqn:oqoq}) is less than 2 regardless of $n$, a contradiction.
One can consider $\abs{\braket{\psi_{1}}{\psi_{2}}}$ large enough that the contradiction holds for all nonzero $k$.
\end{proof}

A channel $\Phi$ that maps between spaces of the same dimension
is \intro[covariant]{covariant} with respect to a group $G$ if there exist unitary representations $\set{U_g}_{g\in G}$ such that
\begin{equation}
	\Phi = \mathcal{U}_g^\dagger \circ \Phi \circ \mathcal{U}_g,
\end{equation}
for all $g\in G$~\cite{Khatri2024}\footnote{Note that this is a stronger condition than usually required for covariant channels.}.

In \cref{thm:mustCovariant} below, we show that the channels that best approximate $\mathcal R_\psi$ in diamond distance
are covariant with respect to the group $\langle R_{\psi}, \idm\rangle'$,
which is the commutant of the group $\set{R_\psi, \idm}$.
Subsequently, we restrict our analysis to channels that are covariant with respect to $\langle R_{\psi}, \idm\rangle'$. 
Let us briefly show that $\langle R_\psi, \idm\rangle'$ is equivalent to the $U(d-1)$ subgroup of $U(d)$ that leaves $\ket{\psi}$ invariant up to a phase.
First, if $g \in  \langle R_{\psi},\idm\rangle '$ then $[U_g, R_\psi] = 0$ and $U \ket\psi = c\ket\psi$ for a unit modulus complex number $c$.
Therefore, $\mathcal{U}(\psi)=\psi$.
Conversely, if $U\ket{\psi}=c\ket{\psi}$,
then $U$ commutes with $\ketbra{\psi}{\psi}$. Therefore, $[U,R_{\psi}]=0$.

\begin{theorem}\label{thm:mustCovariant}
    Given any quantum channel $\mathcal E$ that approximates a reflection,
    then let
    \begin{equation}
        \bar{\mathcal{E}}(\phi) \coloneqq \int dU \, U^{\dagger}\mathcal{E}(U \phi U^{\dagger})U = \int dU \, \mathcal{U}^{*}\circ \mathcal{E}\circ \mathcal{U}(\phi)
    \end{equation}
    with $dU$ the Haar measure on $\langle R_{\psi},\idm \rangle'$.
    Then
    \begin{equation}
        \opnorm{\mathcal R_\psi - \bar{\mathcal E}} \le \opnorm{\mathcal R_\psi - \mathcal E}.
    \end{equation}
\end{theorem}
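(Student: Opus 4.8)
The plan is to run a standard twirling (symmetrization) argument. The two structural ingredients are that the target channel $\mathcal R_\psi$ is a fixed point of the averaging map over $\langle R_\psi,\idm\rangle'$, and that the "diamond norm" is both convex and invariant under pre- and post-composition with unitary channels. Combining these, the averaged channel $\bar{\mathcal E}$ cannot be farther from $\mathcal R_\psi$ than $\mathcal E$ is, which is exactly the claim.

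First I would verify that $\mathcal R_\psi$ is left invariant by the twirl. For any $U$ in $\langle R_\psi,\idm\rangle'$ we have already shown that $U\ket\psi = c\ket\psi$ for a unit-modulus $c$, and hence $[U,R_\psi]=0$. With the conventions $\mathcal U(\phi)=U\phi U^\dagger$ and $\mathcal U^*(\phi)=U^\dagger\phi U$, this commutation gives
\begin{equation}
  \mathcal U^* \circ \mathcal R_\psi \circ \mathcal U(\phi) = U^\dagger R_\psi U\, \phi\, U^\dagger R_\psi^\dagger U = R_\psi \phi R_\psi^\dagger = \mathcal R_\psi(\phi),
\end{equation}
so that $\int dU\, \mathcal U^* \circ \mathcal R_\psi \circ \mathcal U = \mathcal R_\psi$, using that $dU$ is a normalized Haar (probability) measure on the compact group $\langle R_\psi,\idm\rangle'$.

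Next I would subtract. Since $\mathcal R_\psi$ equals its own twirl while $\bar{\mathcal E}$ is by definition the twirl of $\mathcal E$, linearity of the averaging gives
\begin{equation}
  \mathcal R_\psi - \bar{\mathcal E} = \int dU \, \mathcal U^* \circ (\mathcal R_\psi - \mathcal E) \circ \mathcal U.
\end{equation}
Applying the integral triangle inequality (convexity of the diamond norm, valid because the integrand is linear in $U$ and the measure is a finite probability measure) yields
\begin{equation}
  \opnorm{\mathcal R_\psi - \bar{\mathcal E}} \le \int dU \, \opnorm{\mathcal U^* \circ (\mathcal R_\psi - \mathcal E) \circ \mathcal U}.
\end{equation}
Finally, the diamond norm is unchanged under conjugation by the unitary channels $\mathcal U$ and $\mathcal U^*$, so each integrand equals the constant $\opnorm{\mathcal R_\psi - \mathcal E}$; integrating a constant against a probability measure then gives the bound.

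The computation is essentially routine, so the only points needing care are the two analytic facts about the "diamond norm". The invariance $\opnorm{\mathcal U^* \circ \Psi \circ \mathcal U} = \opnorm{\Psi}$ follows from the definition in \cref{eq:diamondDistance}, since composing with unitary channels preserves the "trace norm" and commutes with tensoring by the identity used in the ampliation. Convexity is just that $\opnorm\cdot$ is a norm, extended to the integral Minkowski form. I expect no genuine obstacle; the one thing to state precisely rather than assume is the covariance identity for $\mathcal R_\psi$, which I would derive from $[U,R_\psi]=0$ as above rather than quoting it.
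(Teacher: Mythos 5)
Your proof is correct and follows essentially the same route as the paper's: both are twirling arguments that combine convexity of the diamond norm (integral triangle inequality), its invariance under composition with unitary channels, and the covariance $\mathcal U^* \circ \mathcal R_\psi \circ \mathcal U = \mathcal R_\psi$ for $U \in \langle R_\psi, \idm\rangle'$. The only cosmetic differences are that you fold $\mathcal R_\psi$ into the twirl before subtracting and derive covariance explicitly from $[U,R_\psi]=0$, whereas the paper justifies convexity via the supremum representation of the diamond distance and pulls the integral out of the norm directly.
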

\begin{proof}
    By an alternative definition of the diamond distance~\cite{WatrousBook}
    \begin{align}
        \opnorm{ \mathcal{R}_{\psi}  -\mathcal{E} } &= \sup_{\phi_{RS}} \norm*{(\text{Id}_{R}\otimes \mathcal{R}_{\psi} - \text{Id}_{R}\otimes \mathcal{E})(\phi_{RS})}_{1} \\
        &= \sup_{0\preceq X \preceq \idm_{RS}}\sup_{\phi_{RS}} \abs*{\trace*{X(\text{Id}_{R}\otimes \mathcal{R}_{\psi} - \text{Id}_{R}\otimes \mathcal{E})(\phi_{RS})}}
    \end{align}
    we see that the diamond distance is a supremum of convex functions on the set of channels, so it is convex.
    One finds that
    \begin{align}
        \opnorm{\mathcal R_\psi - \bar{\mathcal E}}
        &= \sup_{\phi_{RS}} \norm*{ \text{Id}_R \otimes \paren*{\mathcal R_\psi - \int dU\, \mathcal U^* \circ \mathcal E \circ \mathcal U} \paren{\phi_{RS}}}_1 \\
        &\le \sup_{\phi_{RS}} \int dU\, \norm{ \text{Id}_R \otimes \paren{\mathcal R_\psi - \mathcal U^* \circ \mathcal E \circ \mathcal U} \paren{\phi_{RS}}}_1, \\
        \intertext{by convexity of the diamond distance,}
        &\le \int dU \opnorm{\mathcal U \circ \mathcal R_\psi - \mathcal E \circ \mathcal U} \\
        &= \opnorm{\mathcal R_\psi - \mathcal E},
    \end{align}
    by the unitary invariance of trace norm
    and the covariance of $\mathcal R_\psi$ with respect to $\langle R_\psi, \idm\rangle'$.
\end{proof}
Then, by definition, $\mathcal{E}$ is covariant with respect to this unitary subgroup if and only if $\bar{\mathcal{E}}=\mathcal{E}$.
Therefore, we restrict to channels such that $\bar{\mathcal{E}}=\mathcal{E}$
and with \cref{thm:imposs}, define "approximate reflection channels" in \cref{eqn:appxreflchan}.

We show that $\apxRefCh_{\psi,V}$ is covariant with respect to $\langle R_\psi, \idm \rangle'$.
\begin{lemma}[Covariance property]\label{lem:covariance}
	Given $n \in \mathbb N$, a state $\ket\psi$,
	and any unitary channel $\mathcal U$ satisfying $\mathcal U(\psi) = \psi$,
	then
	\begin{equation}
		\mathcal U \circ \apxRefCh_{\psi,V}(\Phi) = \apxRefCh_{\psi,V} \circ \mathcal U (\Phi).
  \label{eqn:lklk}
	\end{equation}
\end{lemma}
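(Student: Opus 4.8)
The plan is to reduce the identity \eqref{eqn:lklk} to two structural facts and then rearrange conjugations so that $U$ migrates from the input register $S$ onto the program register $P$, where the partial trace discards it. The first fact is that $V$ commutes with every tensor power $U\xp{n+1}$; the second is that the hypothesis $\mathcal U(\psi)=\psi$ leaves the program state $\psi\xp n$ invariant. Both are standard, so I do not expect a deep obstacle — the content is purely algebraic bookkeeping.

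First I would unpack the hypothesis. Since $\mathcal U(\psi)=\psi$ is equivalent to $U\ket\psi = c\ket\psi$ for a unit-modulus $c$ (as recalled before \cref{thm:mustCovariant}), the phase cancels in the conjugation, so $U \psi U\ct = \psi$ and hence $U\xp n \psi\xp n U\ctxp n = \psi\xp n$ on $P$. Second I would invoke Schur--Weyl duality: each permutation operator $P_\sigma$ on $(\mathbb C^d)\xp{n+1}$ commutes with $U\xp{n+1}$, so the defining element $\sum_\sigma c_\sigma P_\sigma \in \mathbb C[S_{n+1}]$ does as well, giving $V\,U\xp{n+1} = U\xp{n+1}\,V$, where I split $U\xp{n+1} = U_S \otimes U\xp n_P$.

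With these in hand the argument is a short chain of rewrites starting from the right-hand side of \eqref{eqn:lklk}:
\begin{align}
    \apxRefCh_{\psi,V}\circ\mathcal U(\Phi)
    &= \ptrace*{P}{V\paren*{U_S \Phi U_S\ct \otimes \psi\xp n} V\ct} \\
    &= \ptrace*{P}{V\, U\xp{n+1}\paren*{\Phi \otimes \psi\xp n} U\ctxp{n+1}\, V\ct} \\
    &= \ptrace*{P}{U\xp{n+1}\, V\paren*{\Phi \otimes \psi\xp n} V\ct\, U\ctxp{n+1}},
\end{align}
where the second line inserts $\psi\xp n = U\xp n_P \psi\xp n U\ctxp n$ on $P$ to merge the two conjugations into the single tensor power $U\xp{n+1}$, and the third commutes $V$ past $U\xp{n+1}$ using the second fact.

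Finally I would discard the program-register part of $U\xp{n+1}$. Because $U_S$ acts trivially on the traced-out register and $U\xp n_P$ is unitary on $P$, the partial trace is invariant under conjugation by $U\xp n_P$ and pulls $U_S$ outside, i.e.\ $\ptrace{P}{U\xp{n+1} M\, U\ctxp{n+1}} = U_S \ptrace{P}{M} U_S\ct$ for any $M$; applying this with $M = V(\Phi\otimes\psi\xp n)V\ct$ gives $U_S\,\apxRefCh_{\psi,V}(\Phi)\,U_S\ct = \mathcal U\circ\apxRefCh_{\psi,V}(\Phi)$, which is the claim. The only point needing a word of care is that an individual permutation in $V$ need not preserve the subspace $\mathbb C^d \otimes \sym^n(\mathbb C^d)$, so the commutation $[V,U\xp{n+1}]=0$ and the rewrites above should be read as identities of operators on the full space $(\mathbb C^d)\xp{n+1}$ applied to inputs supported on the $U\xp{n+1}$-invariant subspace $\mathbb C^d \otimes \sym^n(\mathbb C^d)$; with that reading, every step is routine.
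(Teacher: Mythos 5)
Your proof is correct and follows essentially the same route as the paper's: use $U\ket{\psi}=c\ket{\psi}$ to absorb the program state's invariance and merge the conjugations into $\mathcal U^{\otimes n+1}$, commute $V$ past $U^{\otimes n+1}$ since permutations commute with tensor powers, and then pull $U_S$ out of the partial trace. Your closing remark about reading the commutation on the full space $(\mathbb C^d)^{\otimes n+1}$ is a sound (if minor) point of extra care not made explicit in the paper.
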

\begin{proof}
	First, noting that $U\ket{\psi}=c\ket{\psi}$ for a complex number of modulus 1 (this condition is equivalent to $\mathcal{U}(\psi)=\psi$), we find
	\begin{align}
		\apxRefCh_{\psi,V} \circ \mathcal U(\Phi) &= \ptrace*{P}{V \paren*{\mathcal U^{\otimes n+1} \paren*{\Phi \otimes \psi^{\otimes n}}}V^{\dagger}} \\
		&= \ptrace*{P}{\mathcal U^{\otimes n+1}\paren*{V (\Phi \otimes \psi^{\otimes n}) V^{\dagger}}} \\
		&= \mathcal U \circ \apxRefCh_{\psi,V}(\Phi),
	\label{eqn:lklklk}
 \end{align}
	where in the second line we used the fact $U^{\otimes n+1}$ commutes with the action of any permutation of the $n+1$ registers and in the last line we used cyclicity of the trace on the $P$ register when taking the partial trace.
\end{proof}

Clearly, $\refChannel_\psi$ is also covariant under $\langle R_{\psi},\idm\rangle'$
and so is the difference map $\refChannel_{\psi} - \apxRefCh_{\psi,V}$.

\subsection{Worst-case states for the diamond distance}
We now show that the maximization involved in the computation of the diamond distance can be performed over a family of states parametrized by a real number in $[0,1]$ without loss of generality,
significantly reducing the search space.

\begin{lemma}\label{lem:twirledStates}
    The family of one-parameter states, with $p \in [0,1]$,
    \begin{equation}
        \ket{\phi_p}_{RS} \coloneqq \sqrt p \ket{0}_R \ket\psi_S + \sqrt{\frac{1-p}{d-1}} \sum_{i=1}^{d-1}\ket{i}_R \ket{\psi_i}_S,
    \end{equation}
    for $\ket{\psi}, \ket{\psi_{1}}, \ldots,\ket{\psi_{d-1}}$ an orthonormal basis of $S$,
    maximizes the diamond distance so that
    \begin{equation}
        \opnorm{\rotChannel_\psi(\alpha) - \apxRefCh_{\psi, V}} = \max_{p \in [0,1]} \Vert\mathrm{Id}_R \otimes \paren*{\rotChannel_\psi(\alpha) - \apxRefCh_{\psi, V}}(\phi_p)\Vert_{1}.
    \end{equation}
\end{lemma}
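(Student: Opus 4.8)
The plan is to reduce the diamond-distance optimization, which a priori ranges over all bipartite input states $\phi_{RS}$, in three stages: first to a function of the reduced state $\phi_S=\ptrace{R}{\phi_{RS}}$ alone, then to the invariant states under the group $G\coloneqq\langle R_\psi,\idm\rangle'$, which form exactly the one-parameter family. Throughout write $\Delta\coloneqq\rotChannel_\psi(\alpha)-\apxRefCh_{\psi,V}$ for the difference map; by \cref{lem:covariance} and the fact that $R_\psi(\alpha)=e^{i\alpha\ketbra{\psi}{\psi}}$ commutes with every $U\in G$, the map $\Delta$ commutes with $\mathcal U$, i.e.\ $\mathcal U\circ\Delta=\Delta\circ\mathcal U$ for all $U\in G$.

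First I would argue the objective depends only on $\phi_S$. By definition of the diamond norm one may take $\dim R=d$ and restrict to pure $\phi_{RS}$; any two purifications of a given $\phi_S$ differ by a unitary $W_R$ on the reference, and since $W_R$ commutes with $\idm_R\otimes\Delta$ and the trace norm is unitarily invariant, the value $\norm{(\idm_R\otimes\Delta)(\phi_{RS})}_1=:g(\phi_S)$ is well defined. Thus $\opnorm{\Delta}=\sup_{\phi_S}g(\phi_S)$ over all states on $\mathbb C^d$. Covariance then yields $G$-invariance of $g$: a purification of $U\phi_S U^\dagger$ is $(\idm_R\otimes U)\ket{\phi}$, and pulling $U$ through $\Delta$ gives $g(U\phi_S U^\dagger)=g(\phi_S)$ for $U\in G$.

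The crux is concavity of $g$, which I would obtain from a semidefinite reformulation of the trace norm. Using the purification $\ket{\phi_\rho}=(\sqrt{\rho\tp}_R\otimes\idm_S)\ket\Omega$ with $\ket\Omega=\sum_i\ket{i}_R\ket{i}_S$ (whose reduced state on $S$ is $\rho$), one gets $g(\rho)=\norm{(\sqrt{\rho\tp}\otimes\idm)\,J\,(\sqrt{\rho\tp}\otimes\idm)}_1$, where $J=(\idm_R\otimes\Delta)(\ketbra{\Omega}{\Omega})$ is the Choi operator of $\Delta$. Since $J$ is Hermitian, $\norm{Y}_1=\max_{-\idm\preceq X\preceq\idm}\trace{XY}$, and the substitution $Y=(\sqrt{\rho\tp}\otimes\idm)X(\sqrt{\rho\tp}\otimes\idm)$ gives
\begin{equation}
    g(\rho)=\max\set*{\trace{YJ}\given -\rho\tp\otimes\idm\preceq Y\preceq\rho\tp\otimes\idm}.
\end{equation}
The feasible region is now linear in $\rho$: if $Y_i$ is feasible for $\rho_i$, then $tY_1+(1-t)Y_2$ is feasible for $t\rho_1+(1-t)\rho_2$, so since $\trace{YJ}$ is linear in $Y$ we obtain $g(t\rho_1+(1-t)\rho_2)\ge t\,g(\rho_1)+(1-t)\,g(\rho_2)$. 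I expect this concavity to be the main obstacle, as it is the one place the analytic structure of the problem enters; the rest is symmetry bookkeeping.

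Finally I would combine concavity with $G$-invariance. The $G$-twirl $\bar\rho\coloneqq\int_G U\rho U^\dagger\,dU$ projects onto the $G$-invariant operators, which are exactly $\spanSet{\ketbra{\psi}{\psi},\idm_{H'}}$ for $H'\coloneqq\ket{\psi}^\perp$; concretely $\bar\rho=p\,\ketbra{\psi}{\psi}+\tfrac{1-p}{d-1}\idm_{H'}$ with $p=\brakett{\psi}{\rho}{\psi}$, which is precisely the reduced state $\ptrace{R}{\phi_p}$. Jensen's inequality for the concave, $G$-invariant $g$ gives $g(\bar\rho)\ge\int_G g(U\rho U^\dagger)\,dU=g(\rho)$, so the supremum is attained on the invariant states. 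As $p$ ranges over $[0,1]$ these are exactly the reduced states of the family $\set{\phi_p}$, and since $\ket{\phi_p}$ is a purification of $\bar\rho$ the claimed identity follows.
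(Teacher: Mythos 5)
Your proof is correct, and its skeleton coincides with the paper's: both exploit covariance of the difference map under $G=\langle R_\psi,\idm\rangle'$ (\cref{lem:covariance}), reduce the diamond-norm maximization to inputs whose reduced state on $S$ is fixed by the $G$-twirl, and then compute that the twirl-invariant states are exactly $p\,\ketbra{\psi}{\psi}+\tfrac{1-p}{d-1}\paren*{\idm-\ketbra{\psi}{\psi}}$, purified by $\ket{\phi_p}$. Where you genuinely diverge is in how the reduction step is justified: the paper invokes it as a black box, citing Proposition~4.20 of the Khatri--Wilde text for Hermiticity-preserving maps that commute with the twirl, whereas you prove that statement from first principles --- well-definedness of $g(\phi_S)$ over purifications, the Choi representation $g(\rho)=\norm{(\sqrt{\rho\tp}\otimes\idm)\,J\,(\sqrt{\rho\tp}\otimes\idm)}_1$, concavity of $g$ via the variational form of the trace norm (this is essentially Watrous's semidefinite program for the diamond norm), and Jensen's inequality against the Haar average. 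Your route buys self-containedness and makes visible where the analytic structure enters (concavity in the reduced input state); the paper's route buys brevity. One point to tighten: equality of your two maximizations requires that every $Y$ with $-\rho\tp\otimes\idm\preceq Y\preceq\rho\tp\otimes\idm$ actually arises as $(\sqrt{\rho\tp}\otimes\idm)X(\sqrt{\rho\tp}\otimes\idm)$ with $-\idm\preceq X\preceq\idm$; for singular $\rho$ this needs a short support argument (any such $Y$ is automatically supported on $\mathrm{supp}(\rho\tp)\otimes\mathbb{C}^d$, so one can invert on the support). Without it you only get that the SDP value upper bounds $g$, and Jensen would then apply to the upper bound rather than to $g$ itself, forcing an extra continuity step at the singular invariant states $p\in\set{0,1}$.
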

\begin{proof}
    Let $G = \langle \idm, R_\psi \rangle'$, then \cref{lem:covariance} implies $\apxRefCh_{\psi,V}$ is covariant with respect to $G$,
    and $\refChannel_\psi(\alpha)$ obviously also is.
    We define the twirling map
    \begin{equation}
        \mathcal T^G(\rho) \coloneqq \int_{g \in G} dg \mathcal U^g(\rho),
    \end{equation}
    where $\mathcal U^g$ are the unitary channel representations of the group $G$.
    We have that $\mathcal T^G \circ \paren{\rotChannel_\psi(\alpha) - \apxRefCh_{\psi, V}} = \paren{\rotChannel_\psi(\alpha) - \apxRefCh_{\psi, V}} \circ \mathcal T^G$.
    Now, since $\rotChannel_\psi(\alpha) - \apxRefCh_{\psi, V}$ is Hermiticity-preserving,
    \cite[Proposition~4.20]{Khatri2024} implies
    \begin{equation}
        \opnorm{\rotChannel_\psi(\alpha) - \apxRefCh_{\psi, V}} = \max_{\ket{\phi} : \ptrace{R}{\phi} = \mathcal T^G(\ptrace{R}{\phi})} \Vert\text{Id}_R \otimes \paren*{\rotChannel_\psi(\alpha) - \apxRefCh_{\psi, V}}(\phi)\Vert_{1}.
    \end{equation}
    The twirling map, $\mathcal T^G$, projects any arbitrary state $\rho$ onto two distinct subspaces as
    \begin{align}
        \mathcal T^G(\rho) &= \int_G dg \mathcal U^g(\rho) \\
                           &= \trace{\Pi_\psi \rho} \Pi_\psi + \int_G dg \mathcal U^g(\Pi^\perp_\psi \rho \Pi^\perp_\psi) \\
                           &= \trace{\Pi_\psi \rho} \Pi_\psi + \trace{\Pi^\perp_\psi \rho} \frac{\Pi^\perp_\psi}{d-1}
    \end{align}
    where $\Pi_\psi = \ketbra{\psi}{\psi}$ and $\Pi^\perp_\psi = \idm - \Pi_\psi$.
    Thus any state $\rho$ that is invariant under the twirling map, $\mathcal T^G(\rho) = \rho$,
    must be a diagonal mixed state of the form $\rho = p \Pi_\psi + \frac{1-p}{d-1} \Pi^\perp_\psi$
    with $p \in [0,1]$.
    Up to an unimportant isometry on the reference system, any purification of $\rho$ equals $\ket{\phi_p}$.
\end{proof}

\subsection{Characterizing unitary elements of symmetric algebra}\label{sec:cyclicPermutation}%
\AP \textcite{Harrow2011} realized that it is sufficient to consider only cyclic permutations as opposed to the full permutation group
to minimize the communication complexity.
Similarly, we show that it is sufficient to consider $V$ to only be a linear combination of cyclic permutations,
$\intro*\cyclicPermutation \coloneqq ( 1\; 2 \; \dots \; n)$.

\begin{lemma}\label{lem:cylicPermutationAlgebra}
	Let $\apxRefCh_{\psi,V}$ be an approximate reflection channel \cref{eqn:appxreflchan} and let $C$ be a cyclic permutation on $(\mathbb{C}^{d})^{\otimes n+1}$. Then there exist unique coefficients $c_{\ell}\in \mathbb{C}$, $\ell=0,\ldots,n$ such that $\sum_{\ell=0}^{n}c_{\ell}C^{\ell}$ is unitary and
    \begin{equation}
	    \apxRefCh_{\psi,V}(\phi) = \ptrace*{P}{\sum_{\ell,\ell'=0}^{n}c_{\ell}\overline{c_{\ell'}} \cyclicPermutation^{\ell}\psi^{\otimes n} \otimes \phi \cyclicPermutation^{-\ell '}}.
	\end{equation}
	In words, the action of the unitary operator $V$ in \cref{eqn:appxreflchan} agrees with that of a unitary element of the group subalgebra of cyclic permutations.
\end{lemma}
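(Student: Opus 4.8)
The plan is to exploit the permutation symmetry of the program state $\psi^{\otimes n}$ to collapse the full group algebra $\mathbb C[S_{n+1}]$ onto the cyclic subalgebra. Write $V=\sum_{\sigma\in S_{n+1}}v_\sigma\,\sigma$ and set $\rho\coloneqq\phi_S\otimes\psi^{\otimes n}_P$, labelling the input register $S$ by position $1$ and the $n$ program copies by positions $2,\dots,n+1$. Let $H\cong S_n$ be the subgroup of $S_{n+1}$ that permutes only the program positions $\{2,\dots,n+1\}$. Because $\ket\psi^{\otimes n}$ is symmetric, every $h\in H$ fixes $\ket\phi\otimes\ket\psi^{\otimes n}$, and hence $h\rho=\rho h=\rho$. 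It follows that $\sigma\rho$ depends only on the left coset $\sigma H$ and that $\rho\tau^{-1}$ depends only on $\tau H$.

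First I would fix coset representatives: the $n+1$ left cosets of $H$ are distinguished by the value $\sigma(1)$, so the powers $C^0,\dots,C^n$ of the $(n+1)$-cycle $C$ form a complete set of representatives. Grouping the double sum $V\rho V^\dagger=\sum_{\sigma,\tau}v_\sigma\overline{v_\tau}\,\sigma\rho\tau^{-1}$ according to these cosets then gives $V\rho V^\dagger=\sum_{\ell,\ell'=0}^{n}c_\ell\overline{c_{\ell'}}\,C^\ell\rho\,C^{-\ell'}$, where $c_\ell\coloneqq\sum_{\sigma\in C^\ell H}v_\sigma$. Applying $\ptrace{P}{\cdot}$ to both sides produces the asserted identity, with $W\coloneqq\sum_{\ell}c_\ell C^\ell$ the desired cyclic element.

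It then remains to check that $W$ is unitary on the relevant subspace and that the $c_\ell$ are unique. For unitarity I would run the same coset collapse on the left action alone, obtaining $W\ket\xi=V\ket\xi$ for every $\ket\xi=\ket\phi\otimes\ket\psi^{\otimes n}$; since such product vectors span $\mathbb C^d\otimes\sym^n(\mathbb C^d)$ (the powers $\ket\psi^{\otimes n}$ span $\sym^n(\mathbb C^d)$), $W$ agrees with the unitary $V$ on this whole subspace and is therefore an isometry on it, i.e.\ unitary in the sense meant for $V$. For uniqueness I would establish that the restrictions $C^0\Pi,\dots,C^n\Pi$ are linearly independent, where $\Pi$ projects onto $\mathbb C^d\otimes\sym^n(\mathbb C^d)$: feeding the single-excitation vector $\ket1\otimes\ket0^{\otimes n}$ to $\sum_\ell a_\ell C^\ell$ yields the $n+1$ distinct basis states carrying exactly one $\ket1$, which are independent, so $\sum_\ell a_\ell C^\ell\Pi=0$ forces every $a_\ell=0$. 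Thus the coefficients reproducing $W\Pi=V\Pi$ are unique.

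The conceptual core is the coset reduction, which is essentially automatic once $h\rho=\rho$ is in hand. I expect the only real subtlety to lie in the unitarity and uniqueness step: one must keep track of the fact that ``unitary'' is meant on $\mathbb C^d\otimes\sym^n(\mathbb C^d)$ rather than on the full $(\mathbb C^d)^{\otimes n+1}$, and must work with the isometries $C^\ell\Pi$ rather than the compressions $\Pi C^\ell\Pi$ — the latter span only the (two-dimensional) commutant of $U(d)$ on $\mathbb C^d\otimes\sym^n(\mathbb C^d)$ and are highly degenerate, whereas the former are linearly independent and pin the coefficients down.
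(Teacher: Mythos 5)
Your coset-collapse argument is exactly the paper's own proof: the paper likewise writes the algebra element as $V=\sum_{\ell=0}^{n}\sum_{h\in H}c_{\ell,h}C^{\ell}h$ over the left cosets $C^{\ell}H$ of the stabilizer $H\cong S_n$ of the input position, uses the invariance of $\psi^{\otimes n}$ under $H$, and collapses to $c_{\ell}=\sum_{h}c_{\ell,h}$. That part of your write-up is correct, and it is all the paper itself proves --- the published proof is silent on both unitarity of $\sum_{\ell}c_{\ell}C^{\ell}$ and uniqueness, so your last two steps are additions. The uniqueness step is sound: the operators $C^{\ell}\Pi$ are indeed linearly independent (your test vector $\ket{1}\otimes\ket{0}^{\otimes n}$ does the job), which pins down the coefficients once one demands $W\Pi=V\Pi$. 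Note only that the lemma's displayed identity involves the coefficients solely through the products $c_{\ell}\overline{c_{\ell'}}$, so "uniqueness'' relative to that identity can hold at best up to a global phase; your stronger normalization $W\Pi=V\Pi$ is what makes the claim literally true.

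The unitarity step, however, contains a genuine gap. "Unitary'' in \cref{lem:cylicPermutationAlgebra} must mean unitary as an element of the group algebra: that is what \cref{lem:cyclicCoefficients} characterizes ($\abs{\tilde c_k}=1$ for \emph{every} $k$) and what \cref{thm:optimalDistance} relies on when it optimizes over the torus. Isometry of $W$ on $\mathcal S=\mathbb C^d\otimes\operatorname{Sym}^n(\mathbb C^d)$ is strictly weaker, precisely because $\mathcal S$ is not $C$-invariant and --- as you observe yourself in your closing remark --- the compressions $\Pi C^{\ell}\Pi$ span only a two-dimensional space. Concretely, writing $W^{\dagger}W=\sum_{k}\abs{\tilde c_k}^2P_k$ with $P_k$ the spectral projectors of $C$, one computes $\Pi P_k\Pi=\delta_{k0}\Pi_{\mathrm{sym}}+\beta_k\paren*{\Pi-\Pi_{\mathrm{sym}}}$ with $\beta_0=0$ and $\beta_k=1/n$ for $k\neq 0$, where $\Pi_{\mathrm{sym}}$ projects onto $\operatorname{Sym}^{n+1}(\mathbb C^d)$. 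Hence the isometry condition $\Pi\paren*{W^{\dagger}W-\mathbf 1}\Pi=0$ amounts to only two scalar constraints, $\abs{\tilde c_0}=1$ and $\sum_{k\neq 0}\abs{\tilde c_k}^2=n$. For $n=2$ the choice $(\tilde c_0,\tilde c_1,\tilde c_2)=(1,\sqrt{2},0)$ satisfies both, so the corresponding $W$ is isometric on $\mathcal S$ yet is certainly not a unitary element of the algebra. So "$W$ agrees with the unitary $V$ on $\mathcal S$'' does not deliver the unitarity the lemma asserts; closing this hole (i.e., showing the coset collapse of a unitary element of $\mathbb C[S_{n+1}]$ can be chosen unitary in the cyclic subalgebra, or else weakening the lemma to the two constraints above) is a real missing step --- one that, to be fair, the paper's own proof does not address either.
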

\begin{proof}
We decompose the symmetric group on $n+1$ elements, $S_{n+1}$, using the subgroup $H=S_n$ of size $n!$ that permutes the last $n$ factors in $(\mathbb C^d)^{n+1}$ and the subgroup of cyclic permutations,
$\cyclicPermutation^k$ for $k=1,\dots,n$, on $n+1$ elements.
We use the symbol $C$ both for a cyclic permutation in $S_{n+1}$ and its corresponding standard permutation representation on $(\mathbb{C}^{d})^{\otimes n+1}$.
It is a basic fact that we can write distinct left cosets $S_{n+1} = \set{\cyclicPermutation^0H, \cyclicPermutation H, \cyclicPermutation^2H, \dots, \cyclicPermutation^nH}$.
Therefore, as a general element of the group algebra $\mathbb{C}[S_{n+1}]$,
$V$ can be written
\begin{equation}
    V=\sum_{\ell=0}^{n}\sum_{h\in H}c_{\ell, h} \cyclicPermutation^{\ell}h
\end{equation}
because every $g\in S_{n+1}$ can be written as $g=\cyclicPermutation^{j}h$ for some $j$ and some $h\in H$.
Since $\psi^n$ is permutation-invariant, we can now simplify
\begin{align}
	\apxRefCh_{\psi,V}(\phi)&= \ptrace*{P}{\sum_{\ell,\ell'=0}^{n}\sum_{h,h'\in H}c_{\ell, h} \overline{c_{\ell',h'}} \cyclicPermutation^{\ell}h\psi^{\otimes n}h^{'-1} \otimes \phi \cyclicPermutation^{-\ell '}} \\
    &= \ptrace*{P}{\sum_{\ell,\ell'=0}^{n}c_{\ell} \overline{c_{\ell'}} \cyclicPermutation^{\ell}\psi^{\otimes n} \otimes \phi \cyclicPermutation^{-\ell '}}
\end{align}
where $c_{\ell}:=\sum_{h\in H}c_{\ell,h}$.
\end{proof}

Note that the coefficients $c_\ell$ in \cref{lem:cylicPermutationAlgebra} are constrained due to unitarity of $V$.
The following lemma characterizes the unitary linear combinations of cyclic permutations by using the discrete Fourier transform of the coefficients.
\begin{lemma}\label{lem:cyclicCoefficients}
	An element $\sum_{\ell=0}^{n}c_{\ell} \cyclicPermutation^{\ell} \in \mathbb{C}[S_{n+1}]$ is unitary if and only if $\vert \tilde{c}_{k}\vert=1$ for $k=0,\ldots,n$
	where $(\tilde{c}_{0},\ldots,\tilde{c}_{n})$ is related to $(c_{0},\ldots,c_{n})$ by the discrete Fourier transform
	\begin{align}
	    \tilde{c}_{k}&=\sum_{\ell=0}^{n}e^{2\pi i \ell k \over n+1}c_{\ell} \label{eq:fourierCoefficients}\\
        c_{\ell}&= {1\over n+1}\sum_{k=0}^{n}e^{- \frac{2\pi i \ell k}{n+1}}\tilde{c}_{k}.
	\end{align}
    For $n+1$ odd, any unitary linear combination of cyclic permutations can be written as
    \begin{equation}
        e^{i c_{0}\idm+i\sum_{j=1}^{{n\over 2}}\overline{c}_{j} \cyclicPermutation^{j} + c_{j} \cyclicPermutation^{-j}},
        \label{eqn:uhuh}
    \end{equation}
    with $c_{0}\in \mathbb{R}$ and $c_{j} \in \mathbb{C}$ for $j\neq 0$; and for $n+1$ even, 
    any unitary linear combination of cyclic permutations can be written as \begin{equation}
        e^{i c_{0}\idm+ic_{{n+1\over 2}} \cyclicPermutation^{{n+1\over 2}}+i\sum_{j=1}^{{n-1\over 2}}\overline{c}_{j} \cyclicPermutation^{j} + c_{j} \cyclicPermutation^{-j}},
        \label{eqn:uhuh2}
    \end{equation}
    with $c_{0},c_{{n+1\over 2}}\in \mathbb{R}$ and $c_{j} \in \mathbb{C}$ for $j\neq 0,{n+1\over 2}$.
\end{lemma}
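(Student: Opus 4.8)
The plan is to exploit that the $n+1$ powers $\{C^\ell\}_{\ell=0}^{n}$ span a \emph{commutative} subalgebra of $\mathbb C[S_{n+1}]$: since $C$ is a cyclic permutation of $n+1$ symbols, $C^{n+1}=\idm$, so $\langle C\rangle \cong \mathbb C[x]/(x^{n+1}-1)$. I would first diagonalize this algebra. Factoring $x^{n+1}-1=\prod_{k=0}^{n}(x-\omega^{k})$ with $\omega=e^{2\pi i/(n+1)}$, the Chinese remainder theorem gives $\mathbb C[x]/(x^{n+1}-1)\cong\bigoplus_{k=0}^{n}\mathbb C$, where the $k$-th coordinate of $\sum_\ell c_\ell C^\ell$ is its evaluation at $x=\omega^{k}$, namely $\tilde c_k=\sum_\ell c_\ell\,\omega^{\ell k}$. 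Equivalently, $C$ has eigenvalues $\omega^{k}$ and on the $\omega^{k}$-eigenspace the element acts as the scalar $\tilde c_k$; all $n+1$ eigenvalues occur both as algebra characters and, for $d\ge2$, on the (aperiodic) basis strings of the standard representation, so group-algebra unitarity and operator unitarity coincide. Thus $(\tilde c_0,\dots,\tilde c_n)$ is exactly the spectrum, which is the DFT in the statement, and the inverse-DFT formula for $c_\ell$ is just character orthogonality, $\frac{1}{n+1}\sum_k \omega^{-\ell k}\omega^{\ell' k}=\delta_{\ell\ell'}$.

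With the spectrum in hand the equivalence is immediate. Under the involution $C^\ell\mapsto C^{-\ell}$ extended antilinearly, the $k$-th spectral coordinate of $(\sum_\ell c_\ell C^\ell)^\dagger$ equals $\overline{\tilde c_k}$, so the element is unitary iff every diagonal entry satisfies $\tilde c_k\,\overline{\tilde c_k}=1$, i.e.\ $\abs{\tilde c_k}=1$ for all $k$. This settles the first claim.

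For the exponential parametrizations I would pass through a Hermitian logarithm inside the same commutative algebra. Writing each unit-modulus eigenvalue as $\tilde c_k=e^{i\theta_k}$ with $\theta_k\in\mathbb R$, define $H$ to be the element whose spectral coordinates are the real numbers $\theta_k$; then $\sum_\ell c_\ell C^\ell=e^{iH}$, and surjectivity onto all unitary elements is clear because $\theta\mapsto e^{i\theta}$ covers the whole unit circle. Inverting the DFT gives $H=\sum_\ell h_\ell C^\ell$ with $h_\ell=\frac{1}{n+1}\sum_k e^{-2\pi i\ell k/(n+1)}\theta_k$, and since the $\theta_k$ are real the coefficients obey $h_{n+1-\ell}=\overline{h_\ell}$, i.e.\ $H$ is Hermitian. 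It then remains to group the basis $\{C^\ell\}$ under the pairing $\ell\leftrightarrow n+1-\ell$ and read off the constraints.

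The remaining step is the even/odd bookkeeping. For $n+1$ odd the only fixed point of the pairing is $\ell=0$, forcing $h_0\in\mathbb R$, while the other powers split into $n/2$ conjugate pairs $(C^{j},C^{-j})$ with free $h_j\in\mathbb C$, yielding \cref{eqn:uhuh}; for $n+1$ even there are two fixed points, $\ell=0$ and $\ell=\frac{n+1}{2}$, forcing $h_0,h_{(n+1)/2}\in\mathbb R$, plus $\frac{n-1}{2}$ conjugate pairs, yielding \cref{eqn:uhuh2}. The spectral/DFT arguments are routine; I expect the only (modest) obstacle to be this final step — matching indices modulo $n+1$, correctly isolating the self-paired middle term $C^{(n+1)/2}$ in the even case, and checking the reality constraints $h_0,h_{(n+1)/2}\in\mathbb R$ — rather than any conceptual difficulty.
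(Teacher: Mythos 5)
Your proposal is correct and follows essentially the same route as the paper: diagonalize the cyclic subalgebra so that $\sum_\ell c_\ell \cyclicPermutation^\ell$ has spectrum $\{\tilde c_k\}$ (the DFT), read off unitarity as $\abs{\tilde c_k}=1$, and obtain the exponential forms from a Hermitian logarithm whose coefficients satisfy the conjugate symmetry $h_{n+1-\ell}=\overline{h_\ell}$ forced by the reality of the phases $\theta_k$. The only difference is presentational — you phrase the diagonalization via $\mathbb{C}[x]/(x^{n+1}-1)$ and explicitly note that all $n+1$ eigenvalues occur in the tensor representation for $d\ge 2$ (a point the paper glosses by working in the defining representation), whereas the paper verifies the parametrized forms are unitary and then argues surjectivity via the DFT of the real sequence $\{\theta_k\}$.
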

\begin{proof}
Let $S_{n+1}$ act in its defining representation, in which the cyclic permutation $C$ is a $(n+1)\times (n+1)$ permutation matrix having simple eigenvalues $\set{e^{\frac{2\pi i k}{n+1}}}_{k=0}^n$ and associated orthonormal eigenvectors $\lbrace \ket{e_{k}}\rbrace_{k=0}^{n}$. Then 
\begin{equation}
    \sum_{\ell=0}^{n}c_{\ell} \cyclicPermutation^{\ell} = \sum_{\ell=0}^{n}c_{\ell}\sum_{k=0}^{n}e^{{2\pi i \ell k \over n+1}}\ket{e_{k}}\bra{e_{k}} = \sum_{k=0}^{n}\tilde{c}_{k}\ket{e_{k}}\bra{e_{k}}
\end{equation}
so the unit modulus condition $\vert \tilde{c}_{k}\vert=1$  for unitarity follows from the spectral theorem for unitary matrices.

To prove \cref{eqn:uhuh,eqn:uhuh2}, note that when $n+1$ is even, the subalgebra of cyclic permutations  has two self-inverse elements, $\idm$ and $C^{{n+1\over 2}}$. The proofs for $n+1$ even or odd are the same, so we simply consider $n+1$ to be odd.
First, note that the operator \cref{eqn:uhuh} is unitary because its logarithm is anti-Hermitian. Mapping $c_{0}\mapsto 2c_{0}$ in \cref{eqn:uhuh},
the operator \cref{eqn:uhuh} becomes 
\begin{equation}
    \sum_{k=0}^{n}e^{i\sum_{j=0}^{{n\over 2}} \left[\overline{c}_{j}e^{2\pi i j k\over n+1} + c.c.\right]}\ket{e_{k}}\bra{e_{k}}.
    \label{eqn:dfdf}
\end{equation}
But any unitary element of  the cyclic permutation subalgebra of $\mathbb{C}[S_{n+1}]$ can be written $\sum_{k=0}^{n}e^{i\theta_{k}}\ket{e_{k}}\bra{e_{k}}$ for $\theta_{k}\in\mathbb{R}$. The real sequence $\lbrace \theta_{k}\rbrace_{k=0}^{n}$ has a discrete Fourier transform specified by $c_{0}\in \mathbb{R}$ and ${n\over 2}$ complex numbers according to the exponent in (\ref{eqn:dfdf}).
\end{proof}

We have described the quantum channel $\apxRefCh_{\psi,V}$ in terms of its Stinespring unitary. The Kraus operators for the channel can be found by first noting that the orthonormal basis \begin{equation}
      {1\over \sqrt{d-1}}\sum_{j=1}^{d-1}\ket{\psi_{j}}\ket{\psi_{j}} \cup \lbrace \ket{\psi}\ket{\psi_{j}}\rbrace_{j=0}^{d-1}
    \label{eqn:bas2}
\end{equation}
with $\ket{\psi_{0}}:=\ket{\psi}$ contains the support of the Choi operator
\begin{equation}
    (\apxRefCh_{\psi,V}\otimes \text{Id}_{d})\text{vec}\idm_{d}(\text{vec}\idm_{d})^{\intercal} \in \text{End}((\mathbb{C}^{d})^{\otimes 2}),
    \label{eqn:choiop}
\end{equation}
where for complex vector spaces $X$, $Y$ with respective bases $\lbrace \ket{x_{j}}\rbrace_{j}$ and $\lbrace \ket{y_{k}}\rbrace_{k}$, we define the linear $\text{vec}$ mapping by
\begin{align}
    &{}\text{vec}: \text{Hom}(X,Y) \rightarrow Y\otimes X \nonumber \\
    &{} \text{vec}\left( \ket{y_{k}}\bra{x_{j}} \right) := \ket{y_{k}}\ket{x_{j}}. 
\end{align}
In the basis (\ref{eqn:bas2}), the Choi operator is given by
\begin{align}
    &{} \begin{pmatrix}
         (d-1)\vert c_{0}\vert^{2} & \overline{c}_{0}\tilde{c}_{0} \sqrt{d-1} & & & \\
        c_{0}\overline{\tilde{c}}_{0} \sqrt{d-1} & \sum_{\ell,\ell'}c_{\ell}\overline{c}_{\ell'} & & & \\
        & & \sum_{\ell>0}\vert c_{\ell}\vert^{2} & & \\
        & & & \ddots & \\
        & &  & & \sum_{\ell>0}\vert c_{\ell}\vert^{2}
    \end{pmatrix}  \nonumber \\
    &{} = \text{vec}(\tilde{c}_{0}\ket{\psi}\bra{\psi} + c_{0}\sum_{j>0}\ket{\psi_{j}}\bra{\psi_{j}})\left(\text{vec}(\tilde{c}_{0}\ket{\psi}\bra{\psi} + c_{0}\sum_{j>0}\ket{\psi_{j}}\bra{\psi_{j}})\right)^{\dagger} \nonumber \\
    &+ \sum_{\ell>0}\vert c_{\ell}\vert^{2}\sum_{j=1}^{d-1}\text{vec}(\ket{\psi}\bra{\psi_{j}})\left(\text{vec}(\ket{\psi}\bra{\psi_{j}})\right)^{\dagger}
\end{align}
where the matrix elements follow from using $\idm_{d} = \ket{\psi}\bra{\psi}+\sum_{j=1}^{d-1}\ket{\psi_{j}}\bra{\psi_{j}}$ in the definition (\ref{eqn:choiop}) and applying $\apxRefCh_{\psi,V}\otimes \text{Id}_{d}$.
This shows that the Choi rank of $\apxRefCh_{\psi,V}$ is $d$. The $d$ Kraus operators can now be read off
\cite[Proposition~2.20]{WatrousBook}\begin{align}
    K_{0}&:= \tilde{c}_{0}\ket{\psi}\bra{\psi} + c_{0}\sum_{j>0}\ket{\psi_{j}}\bra{\psi_{j}} \nonumber \\
    K_{j}&:= \sqrt{\sum_{\ell>0}\vert c_{\ell}\vert^{2}}\ket{\psi}\bra{\psi_{j}} \nonumber \\
    &= \sqrt{1-\vert c_{0}\vert^{2}} \ket{\psi}\bra{\psi_{j}}\; , \; j=1,\ldots, d-1
    \label{eqn:krausdirect}
\end{align}
with $\sum_{j=0}^{d-1}K_{j}^{\dagger}K_{j} = \idm_{d}$ following from the unitarity of $V$.
Specifically, $\vert \tilde{c}_{0}\vert^{2}=1$ from Lemma \ref{lem:cyclicCoefficients}, and the corresponding Parseval relation gives $\sum_{\ell=0}^{n}\vert c_{\ell}\vert^{2}={1\over n+1}\sum_{k=0}^{n}\vert \tilde{c}_{k}\vert^{2}=1$.

The complementary channel $\hat{\mathcal{E}}_{\psi,V}$ is defined by taking the trace over $S$ instead of the trace over $P$ in the definition (\ref{eqn:appxreflchan}) and defines the dynamics of the program register. For the purposes of calculating the corresponding Choi operator in $\text{End}(PS)$, we introduce the following orthonormal set in $P$
\begin{align}
    &{}\ket{\psi}^{\otimes n}_{P} \cup \lbrace \ket{v_{j}}_{P}\rbrace_{j=1}^{d-1}\nonumber \\
    &{}\ket{v_{j}}:={1\over \sqrt{\sum_{\ell=1}^{n}\vert c_{\ell}\vert^{2}}}\sum_{\ell=1}^{n}c_{\ell}C^{\ell}_{P}\ket{\psi}^{\otimes n-1}_{P_{1}\cdots P_{n-1}}\ket{\psi_{j}}_{P_{n}},
    \label{eqn:vvvec}
\end{align}
where $C_{P}$ is the cyclic permutation on the $P$ register only and we note that $C^{n}_{P}=\idm_{P}$. In $PS$, the orthonormal set containing the support of the Choi operator is given by
\begin{align}
    \ket{\psi}^{\otimes n}_{P}\ket{\psi}_{S} \cup {1\over \sqrt{d-1}}\sum_{j>0}\ket{v_{j}}_{P}\ket{\psi_{j}}_{S} \cup \lbrace \ket{\psi}^{\otimes n}_{P}\ket{\psi_{j}}_{S}\rbrace_{j=1}^{d-1},
\end{align}
and the rank-$d$ Choi operator is 
\begin{align}
   &{} \text{vec}\left( K_{0} \right)  \text{vec}\left( K_{0}\right)^{\dagger} +\sum_{j>0}\text{vec}\left( K_{j} \right)  \text{vec}\left( K_{j} \right)^{\dagger},
\end{align}
where the Kraus operators are
\begin{align}
    K_{0}&:= \tilde{c}_{0}\ket{\psi}_{P}^{\otimes n}\bra{\psi}_{S} + \sqrt{1-\vert c_{0}\vert^{2}}\sum_{j>0}\ket{v_{j}}_{P}\bra{\psi_{j}}_{S} \nonumber \\
    K_{j}&:= c_{0}\ket{\psi}^{\otimes n}_{P}\bra{\psi_{j}}_{S}\; , \; j=1,\ldots,d-1.
    \label{eqn:compkraus}
\end{align}

The Kraus operators (\ref{eqn:krausdirect}) will be derived from a different perspective in Appendix \ref{sec:directkraus} and used in Appendix \ref{sec:chanupper} to derive an upper bound on the worst case error the programmable processor for reflections that we consider in Section \ref{sec:unitaryImplementation}. The Kraus operators (\ref{eqn:compkraus}) are again derived in Appendix \ref{sec:cckraus} and then used in Appendix \ref{sec:compchanlower} to derive a lower bound on the worst case error between the complementary channel of a programmable processor for reflections and the program preparation channel.

\subsection{Optimal distance}\label{sec:optimalDistance}
We evaluate, in more generality than needed at the moment (see \cref{sec:unitaryreflections}),
the trace distance of $\apxRefCh_{\psi, V}$ with a rotation of arbitrary angle $\alpha \in \mathbb R$
when given a $\phi_p$ state, defined in \cref{lem:twirledStates}.

\begin{lemma}\label{lem:normPOptimization}
    For any $\alpha \in \mathbb R$, unitary $V = \sum_{\ell=0}^n c_{\ell} \cyclicPermutation^{\ell}$, and $p \in [0,1]$
    it holds that
    \begin{multline}
        \norm*{\idm_R \otimes \paren{\rotChannel_\psi(\alpha) - \apxRefCh_{\psi, V}} (\phi_p)}_1 = 
        (1-p) \paren*{1- {\abs{c_{0}}^{2}}} \\
        + \sqrt{(1-p)^2 \paren*{1 - \abs*{c_0}^2}^2 + 4(1-p)p\abs*{\tilde c_0 \overline{c_0} - e^{i\alpha}}^2},
    \end{multline}
    with $\tilde c_0 = \sum_{\ell = 0}^n c_\ell$ following \cref{eq:fourierCoefficients}.
\end{lemma}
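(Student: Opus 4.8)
The plan is to compute the Hermitian operator $\Delta \coloneqq \idm_R \otimes \paren{\rotChannel_\psi(\alpha) - \apxRefCh_{\psi,V}}\paren{\phi_p}$ in closed form within a low-dimensional orthonormal basis, and then obtain its trace norm as the sum of the absolute values of its eigenvalues.

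I would begin by substituting the cyclic-permutation form of \cref{lem:cylicPermutationAlgebra}, writing $\apxRefCh_{\psi,V}(\cdot) = \sum_{\ell,\ell'=0}^{n} c_\ell \overline{c_{\ell'}}\, \ptrace{P}{\cyclicPermutation^{\ell}(\cdot_S \otimes \psi_P^{\otimes n}) \cyclicPermutation^{-\ell'}}$, and recording the unitarity consequence $\sum_{\ell=0}^{n}\abs{c_\ell}^2 = 1$, which follows from $\abs{\tilde c_k}=1$ (\cref{lem:cyclicCoefficients}) by Parseval's identity for the discrete Fourier transform. Since $\ket{\phi_p}$ is already in Schmidt form with orthonormal Schmidt vectors $\ket{\chi_0}\coloneqq\ket\psi$ (weight $p$) and $\ket{\chi_i}\coloneqq\ket{\psi_i}$ (weight $\tfrac{1-p}{d-1}$), and because $\apxRefCh_{\psi,V}$ acts trivially on the reference $R$, the task reduces to evaluating $\apxRefCh_{\psi,V}(\ketbra{\chi_j}{\chi_k})$ on these vectors.

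The combinatorial core is the single-term partial trace $\ptrace{P}{\cyclicPermutation^{\ell}(\ketbra{\chi_j}{\chi_k}\otimes\psi^{\otimes n})\cyclicPermutation^{-\ell'}}$, which I would evaluate by a case analysis on whether $\ell$ and $\ell'$ vanish modulo $n+1$. Designating the surviving factor as register $1$, the cyclic shift moves $\ket{\chi_j}$ to position $1+\ell$ and $\bra{\chi_k}$ to position $1+\ell'$; tracing out the remaining $n$ factors then yields $\ketbra{\chi_j}{\chi_k}$ when $\ell=\ell'=0$, the terms $\braket{\chi_k}{\psi}\ketbra{\chi_j}{\psi}$ or $\braket{\psi}{\chi_j}\ketbra{\psi}{\chi_k}$ when exactly one of $\ell,\ell'$ vanishes, and a multiple of $\psi$ otherwise, namely $\braket{\chi_k}{\chi_j}\psi$ when $\ell=\ell'\neq0$ and $\braket{\psi}{\chi_j}\braket{\chi_k}{\psi}\psi$ when $\ell\neq\ell'$ are both nonzero. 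Summing against $c_\ell\overline{c_{\ell'}}$ and using $\sum_{\ell\geq1}c_\ell = \tilde c_0 - c_0$ together with $\sum_{\ell\geq1}\abs{c_\ell}^2 = 1-\abs{c_0}^2$ collapses the double sum; inserting $\braket{\psi}{\chi_0}=1$ and $\braket{\psi}{\chi_i}=0$ gives the three identities $\apxRefCh_{\psi,V}(\psi)=\psi$, $\apxRefCh_{\psi,V}(\ketbra{\psi}{\psi_i})=\overline{c_0}\tilde c_0\ketbra{\psi}{\psi_i}$, and $\apxRefCh_{\psi,V}(\ketbra{\psi_i}{\psi_{i'}})=\abs{c_0}^2\ketbra{\psi_i}{\psi_{i'}}+(1-\abs{c_0}^2)\delta_{ii'}\psi$, together with the adjoint of the second. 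This bookkeeping, keeping track of which factor each overlap contracts with across the five cases, is the step I expect to be most error-prone.

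Finally I would assemble $\Delta$ in the orthonormal family $\ket u \coloneqq \ket 0\ket\psi$, $\ket{v_i}\coloneqq\ket i\ket{\psi_i}$, $\ket{w_i}\coloneqq\ket i\ket\psi$ spanning a $(2d-1)$-dimensional space, and collapse the $\ket{v_i}$ directions into the single normalized vector $\ket V \coloneqq (d-1)^{-1/2}\sum_i \ket{v_i}$, in terms of which $\idm_R\otimes\rotChannel_\psi(\alpha)(\phi_p)$ is the rank-one projector onto $\sqrt p\, e^{i\alpha}\ket u + \sqrt{1-p}\ket V$. The operator $\Delta$ then splits as a $2\times2$ Hermitian block on $\spanSet{\ket u,\ket V}$, with vanishing $\ket u$-diagonal entry, $\ket V$-diagonal entry $(1-p)(1-\abs{c_0}^2)$, and off-diagonal entry $\sqrt{p(1-p)}\,(e^{i\alpha}-\tilde c_0\overline{c_0})$, plus a diagonal block $-\tfrac{(1-p)(1-\abs{c_0}^2)}{d-1}\sum_i\ketbra{w_i}{w_i}$. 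The diagonal block has $d-1$ equal eigenvalues summing in absolute value to $(1-p)(1-\abs{c_0}^2)$, while the $2\times2$ block has negative determinant $-p(1-p)\abs{\tilde c_0\overline{c_0}-e^{i\alpha}}^2$, hence two eigenvalues of opposite sign whose magnitudes sum to $\sqrt{(1-p)^2(1-\abs{c_0}^2)^2 + 4p(1-p)\abs{\tilde c_0\overline{c_0}-e^{i\alpha}}^2}$. Adding the two contributions yields the stated value.
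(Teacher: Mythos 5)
Your proposal is correct and follows essentially the same route as the paper's proof: both compute the operator $\idm_R \otimes \paren{\rotChannel_\psi(\alpha) - \apxRefCh_{\psi,V}}(\phi_p)$ explicitly, observe that it splits into a diagonal $(d-1)$-dimensional block on $\spanSet{\ket{j}\ket{\psi}}_{j=1}^{d-1}$ contributing $(1-p)\paren{1-\abs{c_0}^2}$ plus a Hermitian $2\times 2$ block on $\spanSet{\ket{0}\ket\psi,\ (d-1)^{-1/2}\sum_j \ket j\ket{\psi_j}}$, and sum the absolute eigenvalues of the latter. The only difference is bookkeeping: you derive the intermediate identities (such as $\apxRefCh_{\psi,V}(\ketbra{\psi}{\psi_i}) = \tilde c_0 \overline{c_0}\,\ketbra{\psi}{\psi_i}$) by a real-space case analysis of the partial traces over permuted factors, whereas the paper obtains the same expression via the inverse discrete Fourier transform and the orthogonality relation $\sum_{\ell=0}^n e^{2\pi i (k-k')\ell/(n+1)} = (n+1)\delta_{kk'}$; both computations agree, and your sign conventions differ from the paper's displayed $2\times 2$ matrix only by an overall factor of $-1$, which leaves the trace norm unchanged.
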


\begin{proof}
    Recall the orthonormal basis from \cref{lem:twirledStates} to write $\phi_{p}$.
    By using the inverse Fourier transform and the fact that $\sum_{\ell=0}^n e^{\frac{2\pi i (k - k')\ell}{n+1}} = (n+1)\delta_{kk'}$, we obtain
    {\allowdisplaybreaks%
    \begin{align}
        \begin{split}
    &\idm_R \otimes \paren*{\rotChannel_\psi(\alpha) - \apxRefCh_{\psi,V}} (\phi_p) = p\paren{1 - \abs{\tilde c_0}^2} \ketbra{0}{0} \otimes \ketbra{\psi}{\psi} \\
    &\quad + \frac{1-p}{d-1} \paren*{1 - \abs*{c_0}^2} \paren*{\sum_{j=1}^{d-1} \ket{j}_R \ket{\psi_j}} \paren*{\sum_{j'=1}^{d-1} \bra{j'}_R \bra{\psi_{j'}}} \\
    &\quad - \sqrt{\frac{p(1-p)}{d-1}} \sum_{j=1}^{d-1} \paren*{\paren*{\tilde{c}_{0}\overline{c_{0}} - e^{i\alpha}}\ketbra{0}{j}_R \otimes \ketbra{\psi}{\psi_j} + \text{h.c.}} \\
    &\quad - \frac{1-p}{(d-1)} \paren*{ \frac{\sum_{k=0}^n \abs{\tilde c_k}^2}{n+1} - \abs*{c_0}^2 } \paren*{\sum_{j=1}^{d-1} \ketbra{j}{j}_R \otimes \ketbra{\psi}{\psi} }
        \end{split}\\
        \begin{split}
    &= \frac{1-p}{d-1} \paren*{1 - \abs{c_0}^2} \paren*{\sum_{j=1}^{d-1} \ket{j}_R \ket{\psi_j}} \paren*{\sum_{j'=1}^{d-1} \bra{j'}_R \bra{\psi_{j'}}} \\
    &\quad - \sqrt{\frac{p(1-p)}{d-1}} \sum_{j=1}^{d-1} \paren*{\paren*{\tilde c_0 \overline{c_0} - e^{i\alpha}} \ketbra{0}{j}_R \otimes \ketbra{\psi}{\psi_j} + \text{h.c.}} \\
    &\quad - \frac{1-p}{d-1} \paren*{1 - \abs{c_0}^2} \sum_{j=1}^{d-1} \ketbra{j}{j}_R \otimes \ketbra{\psi}{\psi}
        \end{split}
    \end{align}%
    }%
    since $\abs{\tilde c_k} = 1$ by \cref{lem:cyclicCoefficients} and where we suppressed the register label $S$.
    Observe that the above matrix contains a diagonal submatrix spanned by $\set{\ket{j}\ket{\psi}}_{j=1}^{d-1}$,
    so the trace norm can be written as
    \begin{multline}
        \norm*{\text{Id}_R \otimes \paren{\rotChannel_\psi(\alpha) - \apxRefCh_{\psi,V}}(\phi_p)}_1  = (1-p) \paren*{1- {\abs{c_{0}}^{2}}} \\
        + \norm*{\begin{pmatrix}
                0&\sqrt{p(1-p)}\left( {\tilde{c}_{0}\overline{c_{0}}} - e^{i \alpha}\right) \\
                \sqrt{p(1-p)}\left( {\overline{\tilde{c}_{0}}c_{0}} - e^{-i\alpha} \right) & -(1-p) \paren{1 - \abs{c_{0}}^{2}}
        \end{pmatrix} }_1,
        \label{eqn:firsttrn}
    \end{multline}
    where the matrix acts in a two-dimensional subspace spanned by the orthonormal vectors
    \begin{align}
        \ket{\phi_0} &\coloneqq \ket{0}\ket{\psi}\\
        \ket{\phi_1} &\coloneqq \sqrt{\frac{1}{d-1}} \sum_{j=1}^{d-1} \ket{j} \ket{\psi_j}.
        \label{eqn:orthonorm2}
    \end{align}
    Evaluating the trace norm shows the result.
\end{proof}

Now we are ready to show the main result of this section,
the analytical computation of the diamond distance.

\begin{theorem}\label{thm:optimalDistance}
    Let $n>1$, then
    \begin{equation}\label{eqn:fsfsfs2}
	    \min_{V\in \mathbb C[S_{n+1}]} \opnorm{\refChannel_{\psi}- \apxRefCh_{\psi,V}} = {8(n+2)\over 8+4n+n^{2} },
    \end{equation}
    with the minimum only obtained by the algorithms
    \begin{equation}\label{eqn:optalg}
    V= \idm+ {e^{\pm i\cos^{-1}(f(n))}-1\over n+1}\sum_{\ell=0}^{n} \cyclicPermutation^{\ell}  \in \mathcal{W}
    \end{equation}
    up to a global phase, where $\mathcal{W}$ is the set of unitary elements in the subalgebra of $\mathbb{C}[S_{n+1}]$ spanned by the cyclic permutations and
    \begin{equation}\label{eq:optAngle}
        f(n) = -{n^{3}+6n^{2}+6n \over (n+2)^{3}}.
    \end{equation}
\end{theorem}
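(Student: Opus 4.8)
The plan is to follow the reductions already in place and reduce the statement to a short two-variable optimization. By \cref{lem:twirledStates} with $\alpha = \pi$, together with \cref{lem:cylicPermutationAlgebra}, it suffices to minimize, over unitary $V = \sum_{k=0}^{n} c_k \cyclicPermutation^{k}$ in the cyclic subalgebra, the quantity $\max_{p \in [0,1]} \norm{\idm_R \otimes (\refChannel_\psi - \apxRefCh_{\psi,V})(\phi_p)}_1$, and \cref{lem:normPOptimization} (with $e^{i\alpha} = -1$) evaluates this trace norm explicitly. Setting $t \coloneqq \abs{c_0}^2$, $a \coloneqq 1 - t$, and $w \coloneqq \abs{\tilde c_0 \overline{c_0} + 1}$, the objective depends on $V$ only through $a$ and $w$. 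Crucially, \cref{lem:cyclicCoefficients} forces $\abs{\tilde c_0} = 1$, hence $\abs{\tilde c_0 \overline{c_0}} = \abs{c_0} = \sqrt t$, so $a$ and $w$ are linked; the first task is to pin down the exact feasible region of pairs $(t,w)$, and the second is to minimize over it.

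First I would carry out the inner maximization over $p$. Writing $g(p)$ for the trace norm of \cref{lem:normPOptimization}, one has $g(0) = 2a$ and $g(1) = 0$, and a short derivative computation, taking care to discard the spurious root introduced by squaring, shows the only admissible interior critical point exists when $a \le w$. This gives the clean two-branch formula
\[
  \max_{p \in [0,1]} g(p) = \begin{cases} 2a, & a \ge w,\\ \dfrac{2w^2}{2w - a}, & a \le w,\end{cases}
\]
the branches agreeing at $a = w$. Since the second branch has $w$-derivative $4w(w-a)/(2w-a)^2 \ge 0$ on $a \le w$, and the first does not depend on $w$, it is optimal to make $w$ as small as possible at each fixed $t$.

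Next I would determine $w_{\min}(t)$. Writing $c_0 = \sqrt t\, e^{i\mu}$ and $\tilde c_0 = e^{i\nu}$, the point $\tilde c_0 \overline{c_0}$ runs over the circle of radius $\sqrt t$, while unitarity of $V$ constrains the remaining Fourier phases only through $\sum_{k=1}^{n}\tilde c_k = (n+1)c_0 - \tilde c_0$, which, being a sum of $n$ unit-modulus numbers, can be any complex number of modulus at most $n$. This yields the single feasibility inequality $\abs{(n+1)c_0 - \tilde c_0} \le n$, which lower-bounds $\cos(\arg(\tilde c_0 \overline{c_0}))$ and hence gives $w_{\min}^2 = (n+2)t - (n-2)$ whenever that bound is active, i.e. for $t \ge \paren{(n-1)/(n+1)}^2$. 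For smaller $t$ the constraint is slack and one checks that this regime falls into the branch $2a = 2(1-t)$, whose infimum $8n/(n+1)^2$ is strictly larger than the claimed value. Substituting $w_{\min}$ into the second branch and eliminating $t$ through $a = (4 - w^2)/(n+2)$ reduces everything to minimizing $w \mapsto 2(n+2)w^2/\paren{w^2 + 2(n+2)w - 4}$ over the admissible interval. A routine derivative gives the unique interior minimizer $w^\star = 4/(n+2)$, and evaluating there produces $8(n+2)/(8+4n+n^2)$, as claimed.

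Finally, for the uniqueness statement I would retrace where equalities were used: attaining the minimum forces the feasibility bound to be active, i.e. $\abs{\sum_{k=1}^{n}\tilde c_k} = n$, which by equality in the triangle inequality for unit vectors forces $\tilde c_1 = \dots = \tilde c_n$; fixing the global phase normalizes these to $1$, so that $V = \idm + (\tilde c_0 - 1)\tfrac{1}{n+1}\sum_{\ell=0}^{n}\cyclicPermutation^{\ell}$, while the optimal-$w$ condition pins $\tilde c_0 = e^{\pm i \cos^{-1}(f(n))}$, recovering \cref{eqn:optalg}. I expect the main obstacle to be the feasible-region analysis of the third step: correctly deriving $w_{\min}(t)$ from the unitarity constraints and verifying that the global minimum lies in the active-constraint branch rather than in the slack regime or at the endpoints of the $w$-interval. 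Once the region is identified, the inner $p$-maximization and the final one-variable minimization are both elementary calculus.
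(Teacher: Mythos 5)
Your proposal is correct, and it reaches the paper's answer by the same reduction pipeline (\cref{lem:twirledStates}, \cref{lem:cylicPermutationAlgebra}, \cref{lem:cyclicCoefficients}, \cref{lem:normPOptimization}, and the identical two-branch maximization over $p$) but finishes the core optimization in a genuinely different, and arguably cleaner, way. The paper parametrizes the residual freedom by $\frac{1}{n}\sum_{k=1}^{n}\tilde c_k = re^{i\gamma}$ and $\tilde c_0 = e^{i\eta}$, writes the Domain-B value as the explicit two-variable function \cref{eqn:plpl} in $(r, u=\eta-\gamma)$, finds the critical line \cref{eqn:critu}, substitutes it, and asserts monotone decrease in $r$ down to $r=1$; you instead pass to $(t,w) = (\abs{c_0}^2, \abs{\tilde c_0 \overline{c_0}+1})$, encode unitarity as the single feasibility inequality $\abs{(n+1)c_0 - \tilde c_0}\le n$ (the same fact underlying the paper's $r\le 1$, valid since $n>1$), use the one-line monotonicity in $w$ to restrict to the boundary curve, and finish with one-variable calculus. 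Your key formulas all check out: the active-constraint curve $w_{\min}^2 = (n+2)t-(n-2)$ valid for $t \ge \paren{(n-1)/(n+1)}^2$; the slack-regime infimum $8n/(n+1)^2$, which indeed exceeds $8(n+2)/(n^2+4n+8)$ for all $n\ge 1$; the reduced objective $2(n+2)w^2/(w^2+2(n+2)w-4)$ with unique interior minimizer $w^\star = 4/(n+2)$; and the phase condition, which reproduces $f(n)$ in \cref{eq:optAngle} exactly. The residual verification you flag as the main obstacle does go through: on the portion of the active curve where $a \ge w$ the value $2a = 2(4-w^2)/(n+2)$ decreases in $w$ and meets the second branch continuously at the point $w_0$ where $a=w$, with $w_0 < w^\star$, so the global minimum sits at $w^\star$ in the second branch; and your uniqueness argument (equality in the triangle inequality forces $\tilde c_1 = \dots = \tilde c_n$, i.e.\ the paper's $r=1$, after which the optimal-$w$ condition pins $\tilde c_0 = e^{\pm i \cos^{-1} f(n)}$) coincides with the paper's. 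What each route buys: yours avoids the unwieldy trigonometric expression \cref{eqn:plpl} and replaces the paper's unproved monotonicity-in-$r$ claim with trivial derivative checks, at the cost of a careful feasible-region analysis; the paper's coordinates make the geometry of \cref{fig:diamond} and the $\pm u$ symmetry directly visible.
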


\begin{proof}
    By \cref{lem:twirledStates,lem:normPOptimization},
    \begin{equation}\label{eq:analyticalDistance}
        \begin{split}
	    \min_{V \in \mathcal{W}} \opnorm{\mathcal R_\psi - \apxRefCh_{\psi, V}} &= \min_{V\in \mathcal{W}}\max_{p\in [0,1]} \norm{\idm_R \otimes \paren*{\mathcal R_{\psi} - \apxRefCh_{\psi,V}} (\phi_p)}_{1} \\
            &= \min_{V\in\mathcal{W}}\max_{p\in [0,1]} (1-p) \paren*{1- {\abs{c_{0}}^{2}}} \\
            & \quad + \sqrt{(1-p)^2 \paren*{1 - \abs*{c_0}^2}^2 + 4(1-p)p\abs*{\tilde c_0 \overline{c_0} + 1}^2}.
        \end{split}
    \end{equation}
    This function is differentiable on the $(n+1)$-torus $\vert \tilde{c}_{k}\vert=1$, $k=0,\ldots, n$.
    By \cref{lem:cyclicCoefficients} we have $\tilde c_0 = e^{i\eta}$ for $\eta \in [0,2\pi)$,
    so \cref{eq:analyticalDistance} for fixed $p$ is a function only of $\sum_{k=1}^{n}\tilde{c}_{k}$ and $\eta$.
For $n\ge 2$, it is clear that $\frac{1}{n} \sum_{k=1}^{n}\tilde{c}_{k}$ varies over the closed unit disk in $\mathbb{C}$.
    The $n=1$ case is pathological because $\tilde{c}_{1}$ varies over the unit circle,
    but the optimization analysis below holds even for $n=1$.
    Therefore, one can parametrize ${1\over n}\sum_{k=1}^{n}\tilde{c}_{k} = re^{i\gamma}$ with $0\le r\le 1$ and $\gamma \in [0,2\pi)$. 

    The maximization over $p$ in \cref{eq:analyticalDistance} can be obtained by finding the critical points.
    In the domain of algorithms $V$ that satisfy $1-\vert c_{0}\vert^{2} \ge \abs*{\tilde{c}_{0}\bar{c_{0}} +1}$ (call this Domain A),
    the function \cref{eq:analyticalDistance} has no critical points with respect to $p$ so its maximum over $p$ in that domain is obtained at $p=0$. In Domain A, the diamond distance monotonically decreases with $r\in [{1\over n},1]$ and for any $r$, the diamond distance monotonically decreases as $u:=\eta - \gamma$ deviates from $\pi$.\footnote{$r=1/n$ is the smallest value of $r$ in Domain A.} Therefore, the minimal diamond distance over algorithms in Domain A occurs at the boundary of the domains and $r=1$.
    
    \begin{figure}[tb]
    	\centering
    	\includegraphics[scale=0.5,alt={A 3D plot showing the diamond distance dependent on r and eta minus gamma.}]{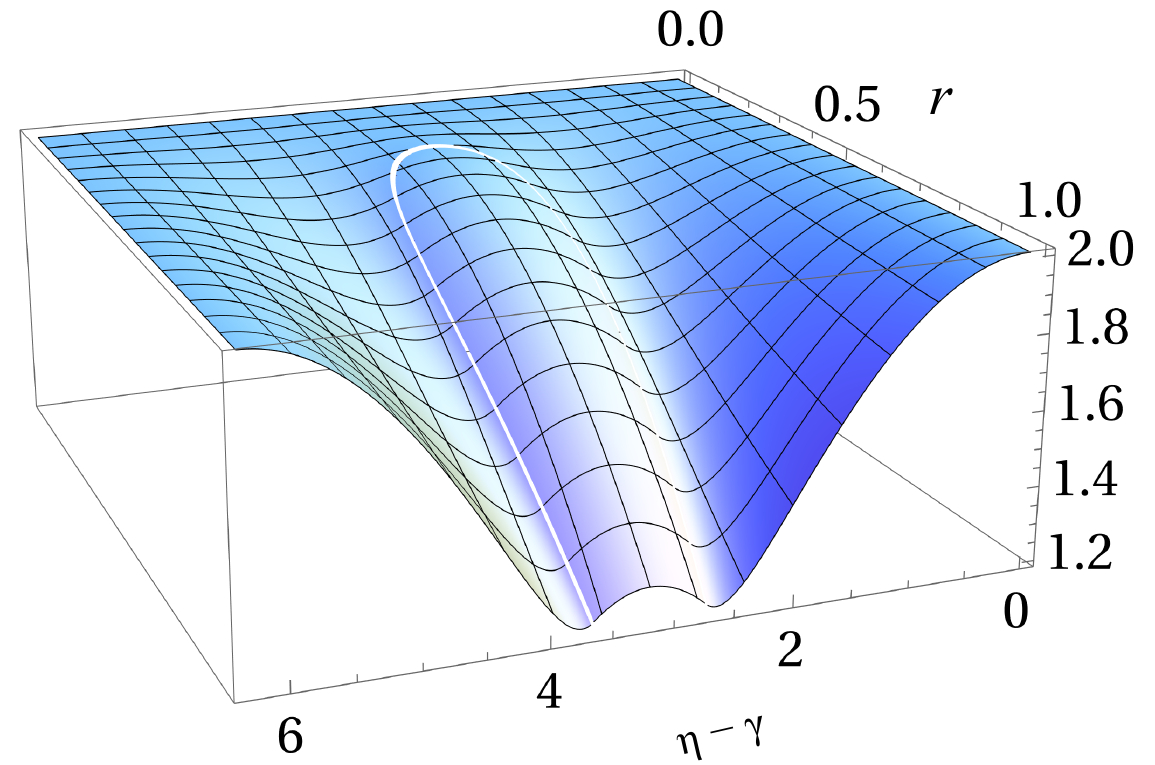}
    	\caption{%
    		The globally minimal diamond distance in \cref{thm:optimalDistance} is given by the minimum of the plotted function in the  domain
    		$1- \abs{c_{0}}^2 < \abs{\tilde{c}_{0} \overline{c_{0}} +1}$ (outside the white curve). This domain (Domain B) is characterized by the existence of a critical point with respect to $p$ and the diamond distance is given by the function in \cref{eqn:plpl}. Inside the white curve (Domain A), there is no critical point with respect to $p$ and the diamond distance $2(1-\vert c_{0}\vert^{2})$ is obtained at $p=0$. The figure shows $n=4$ so that the global minimum in Domain B is clearly visible.
    	}\label{fig:diamond}
    \end{figure}

    However, the global minimum of the diamond distance over algorithms $V$ occurs in the opposite domain
    where $1-\vert c_{0}\vert^{2} < \abs*{\tilde{c}_{0}\bar{c_{0}} +1}$ (Domain B). At the boundary between Domain A and B, the diamond distance is $2(1-\vert c_{0}\vert^{2})$, so the diamond distance is continuous.
    The critical point with respect to $p$ in Domain B is
    \begin{equation}\label{eqn:critp}
        p^* \coloneqq \frac{\abs*{{\tilde{c}_{0}\bar{c_{0}} } +1 } - (1-\vert c_{0}\vert^{2})}{2 \abs*{\tilde{c}_{0}\bar{c_{0}} +1 }- (1- \abs{c_{0}}^{2})}.
    \end{equation}
    Substituting $p^*$ in \cref{eq:analyticalDistance}, we obtain
    \begin{equation}
        \frac{2 \abs*{\tilde{c}_{0}\bar{c_{0}} +1}^{2}}{2 \abs*{\tilde{c}_{0}\bar{c_{0}} +1} + \abs{c_{0}}^{2}-1}.
	    \label{eqn:putput}
    \end{equation}
    Taking $u$ in \cref{eqn:putput} gives the full expression
    \begin{multline}\label{eqn:plpl}
	    2\paren*{(n+2+nr\cos u)^{2} + n^{2}r^{2}\sin^{2}u} / 
        \Bigl( (1+nr\cos u)^{2} + n^{2}r^{2}\sin^{2}u \\
        + 2(n+1)\sqrt{ (n+2+nr\cos u)^{2} + n^{2}r^{2}\sin^{2}u} 
        - (n+1)^{2} \Bigr)
    \end{multline}
    which is shown in \cref{fig:diamond}.
    This function is symmetric under $u\mapsto 2\pi -u$, so it is sufficient to examine the function for $u\in [0,\pi]$.
    The first critical line at $u=0$ gives the value 2, so it is a global maximum.
    Second, on $(u,r)\in [0,\pi]\times [0,1]$, one finds that regardless of $r$, \cref{eqn:plpl} has a critical line
    \begin{equation}\label{eqn:critu}
	    u^* = \cos^{-1}\left( {n^{3}(r^{3}-3r)-12n^{2}r-12nr \over 2(n+2)^{3}} \right)
    \end{equation}
    which is a local minimum of \cref{eqn:plpl} for fixed $r$.
    We substitute $u^*$ into \cref{eqn:plpl},
    resulting in a function of $r$ that is monotonically decreasing on $[0,1]$.
    The value of the resulting function at $r=1$ is the right hand side of \cref{eqn:fsfsfs2}. Because the values of the diamond distance at the other boundary points of Domain B are larger than this,
    we conclude that it is global minimum value. We obtain $f(n)$ in \cref{eq:optAngle} by taking $r=1$ in \cref{eqn:critu}.
    At $r=1$, the operator $V$ has the form
    \begin{align}
	    V&= \idm + \frac{ e^{iu}-1}{n+1}\sum_{\ell=0}^{n}C^{\ell}
    \end{align}
    up to a phase because $r=1$ if and only if $\tilde{c}_{k}=\tilde{c}_{k'}$ for all $k,k'>0$. We obtain (\ref{eqn:optalg}) by substituting $u^{*}$ and noting that the only remaining freedom in $V$ is the $\pm u$ symmetry, so (\ref{eqn:optalg}) describes all optimal algorithms.
\end{proof}

\subsection{Efficient unitary implementation}\label{sec:unitaryImplementation}
\begin{figure}
	\centering
	\begin{quantikz}
												& \gategroup[5,steps=7]{approximate rotation unitary}\wireoverride{n} & \lstick{\bra{\unifs}}\wireoverride{n} & \ctrl[wire style={"i"}]{1}\qwbundle{} & \gate{e^{i\theta \unifs}} & \ctrl[wire style={"j"}]{1} & \rstick{\ket{\unifs}} & \wireoverride{n} & \wireoverride{n} \\
		\lstick{\bra{\phi}} &                                              &                                  & \gate[4]{\cyclicPermutation^i}                         &                       & \gate[4]{\cyclicPermutation^{-j}}           &                  &                  &                  \\
		\lstick{\bra{\psi}} &                                              &                                  &                                       &                       &                            &                  &                  &                  \\
		\lstick{\bra{\psi}} &                                              &                                  &                                       &                       &                            &                  &                  &                  \\
		\lstick{\bra{\psi}} &                                              &                                  &                                       &                       &                            &                  &                  &
	\end{quantikz}
	\caption{%
        For appropriately chosen $\theta \in \mathbb R$,
        this unitary
		approximately implements the rotation $e^{i \alpha \ketbra{\psi}{\psi}}$,
        for $\alpha \in \mathbb R$,
		when given copies of $\ket{\psi}$ in the "program register".
        The ancilla is initialized in the uniform superposition,
		$\ket{\unifs} \coloneqq \frac{1}{\sqrt{n+1}} \sum_{i=0}^n \ket i$,
		and the control selects the multiplicity of the "cyclic permutation", $\cyclicPermutation$,
		implementing the operator $\sum_i \ketbra{i}{i} \otimes \cyclicPermutation^i$.
        In an "approximate reflection channel" we discard the "program registers" at the end.
        A natural, but suboptimal, choice for reflections ($\alpha = \pi$) is $\theta = \pi$~\cite{Harrow2011}.
        We find the optimal $\theta$ for reflections in \cref{sec:unitaryImplementation}.
        We discuss $\alpha$ in generality in \cref{sec:unitaryreflections}.
	}\label{fig:algorithm}
\end{figure}

\AP We now proceed to motivate our algorithm, which couples the program state $\ket{\psi}^{\otimes n}$ to an input state $\ket{\phi}$ in a $\ket{\psi}$-independent way to approximately yield $R_{\psi}\ket{\phi}$.
We use "QPE" to approximately distinguish the parallel and orthogonal components of the input state, $\ket\phi$, to $\ket\psi$~\cite{mosca}.
\textcite{Harrow2011} showed that
"QPE" with a cyclic permutation, $\cyclicPermutation$,
on the input state $\ket\phi \ket{\psi}^{\otimes n}$ roughly separates out the parallel and orthogonal components of the state.
In particular, $\ket{\psi}^{\otimes (n+1)}$ is a 1-eigenstate of $\cyclicPermutation$,
whereas $\ket{\psi^\perp}\ket{\psi}^{\otimes n}$ is contained in the symmetric subspace
with eigenvalues $\exp(2\pi i \frac{k}{n+1})$,
for $k \in [n] \coloneqq \set{0,1,2,\dots,n}$.
So, if we apply a negative phase to the 1-eigenspace of $\cyclicPermutation$,
then the only error we make is the $k=0$ eigenstate,
which is an $\bigo{\frac{1}{n}}$ fraction of $\ket{\psi^\perp}\ket{\psi}^{\otimes n}$.
More generally, we allow this phase to be a rotation defined by a parameter $\theta$.
The eigenvalue-1 subspace is marked simply by the 0 phase outcome of "QPE" with $\cyclicPermutation$,
so we perform the rotation $e^{i \theta \ketbra{0}{0}}$.
All that remains is to uncompute the "QPE".
By simplifying the circuit, we obtain \cref{fig:algorithm}.

Let $\intro*\apxRef_\theta$ be the unitary defined by the approximate rotation unitary in \cref{fig:algorithm}.
We now show an important property of $\apxRef_\theta$,
so that \cref{thm:optimalDistance} implies $\apxRefCh_{\psi, \apxRef_\theta}$ is
an optimal "approximate reflection channel" for some $\theta$.

\begin{theorem}\label{thm:algorithm}
    \Cref{fig:algorithm} defines an approximate rotation unitary, $\apxRef_\theta$,
    such that
    \begin{equation}\label{eqn:genth}
        \apxRef_\theta \ket{\phi}_S\ket{\psi}^{\otimes n}_P \coloneqq \paren*{\idm_{SP} + \frac{e^{i\theta} - 1}{n+1}\sum_{\ell=0}^{n} \cyclicPermutation^{\ell}}\ket{\phi}_S\ket{\psi}^{\otimes n}_P.
    \end{equation}
\end{theorem}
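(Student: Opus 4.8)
The plan is to verify \cref{eqn:genth} by a direct computation that propagates the joint state of the ancilla, the input register $S$, and the program register $P$ through the three gates of \cref{fig:algorithm} in order, starting and ending with the ancilla in $\ket{\unifs}$. First I would initialize the ancilla to $\ket{\unifs} = \frac{1}{\sqrt{n+1}}\sum_{i=0}^n \ket i$ and apply the first controlled gate $\sum_i \ketbra{i}{i}\otimes\cyclicPermutation^i$, producing the entangled state $\frac{1}{\sqrt{n+1}}\sum_{i=0}^n \ket i \otimes \cyclicPermutation^i \ket\phi_S\ket\psi^{\otimes n}_P$. Here I use that $\cyclicPermutation$ is the cyclic permutation of all $n+1$ registers $SP$, so in particular it has order $n+1$.

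The central step is to handle the middle rotation by writing $e^{i\theta\ketbra{\unifs}{\unifs}} = \idm + (e^{i\theta}-1)\ketbra{\unifs}{\unifs}$ and splitting the state into an ``identity'' branch and a ``projector'' branch. In the identity branch the final controlled gate $\sum_j\ketbra{j}{j}\otimes\cyclicPermutation^{-j}$ inverts the first one term by term, returning $\ket{\unifs}\otimes\ket\phi_S\ket\psi^{\otimes n}_P$. For the projector branch I would first take the overlap of the ancilla with $\ket{\unifs}$, which collapses the ancilla back to $\ket{\unifs}$ and produces the amplitude $\frac{e^{i\theta}-1}{n+1}$ in front of $\sum_i \cyclicPermutation^i\ket\phi_S\ket\psi^{\otimes n}_P$. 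Applying the final controlled gate to this branch then yields a double sum of terms $\cyclicPermutation^{i-j}$; the key observation is that, because $\cyclicPermutation$ generates a cyclic group of order $n+1$ and $i$ ranges over a complete residue system modulo $n+1$, the inner sum $\sum_i\cyclicPermutation^{i-j}$ is independent of $j$ and equals $\sum_{\ell=0}^n\cyclicPermutation^\ell$ (here $\cyclicPermutation^{n+1}=\idm$). This is exactly what lets the ancilla decouple as $\ket{\unifs}$ once more.

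Combining the two branches, the ancilla factors out as $\ket{\unifs}$ and the residual operator on $SP$ is precisely $\idm + \frac{e^{i\theta}-1}{n+1}\sum_{\ell=0}^n\cyclicPermutation^\ell$, which is the claimed action in \cref{eqn:genth}. To justify calling $\apxRef_\theta$ a unitary, I would confirm that this operator is a genuine unitary element of the cyclic-permutation subalgebra using \cref{lem:cyclicCoefficients}: its coefficient vector is $c_\ell = \delta_{\ell 0} + \frac{e^{i\theta}-1}{n+1}$, whose discrete Fourier transform gives $\tilde c_0 = e^{i\theta}$ and $\tilde c_k = 1$ for $k\neq 0$, all of unit modulus. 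Restricted to $\mathbb C^d\otimes\sym^n(\mathbb C^d)$ it therefore acts as a unitary, consistent with \cref{lem:cylicPermutationAlgebra}.

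I do not expect a genuine obstacle here: the argument is essentially a bookkeeping calculation. The only points requiring care are the cyclic reindexing $\sum_i\cyclicPermutation^{i-j}=\sum_\ell\cyclicPermutation^\ell$, which hinges on $\cyclicPermutation^{n+1}=\idm$ and on $i$ sweeping all residues, and the verification that \emph{both} branches leave the ancilla exactly in $\ket{\unifs}$ so that $\apxRef_\theta$ is well-defined as an operator on $SP$ alone.
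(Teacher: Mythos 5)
Your proposal is correct and follows essentially the same route as the paper's proof: propagate the state through the three gates, split $e^{i\theta\ketbra{\unifs}{\unifs}} = \idm + (e^{i\theta}-1)\ketbra{\unifs}{\unifs}$ into identity and projector branches, and use the reindexing $\sum_{i}\cyclicPermutation^{i-j} = \sum_{\ell=0}^{n}\cyclicPermutation^{\ell}$ (valid since $\cyclicPermutation^{n+1}=\idm$) to decouple the ancilla in the state $\ket{\unifs}$. Your additional verification of unitarity via \cref{lem:cyclicCoefficients} (giving $\tilde c_0 = e^{i\theta}$, $\tilde c_k = 1$ for $k \neq 0$) is a correct supplement that the paper only makes explicit later, in its discussion preceding \cref{cor:rotationDistance}.
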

\begin{proof}
    Let us denote with $A$ the register containing the ancilla state,
    initialized in $\ket{\unifs} \coloneqq \frac{1}{\sqrt{n+1}} \sum_{\ell = 0}^n \ket{\ell}$.
    Note that $e^{i \theta \unifs} = \idm - (1-e^{i\theta}) \unifs$.
    Then \cref{fig:algorithm} equals
    \begin{multline}
        \paren*{\sum_{k=0}^n \ketbra{k}{k}_A \otimes \cyclicPermutation_{SP}^{-k}} \paren*{e^{i\theta \unifs}_A \otimes \idm_{SP}} \paren*{\sum_{j=0}^n \ketbra{j}{j}_A \otimes \cyclicPermutation_{SP}^j} \\
        = \frac{1}{\sqrt{n+1}} \sum_{k=0}^n \ket{k}_A \ket{\phi}_S \ket{\psi}^{\otimes n}_P - \frac{1-e^{i \theta}}{n+1} \frac{1}{\sqrt{n+1}} \ket{k}_A \sum_{j=0}^n \cyclicPermutation^{j-k} \ket{\phi}_S \ket{\psi}^{\otimes n}_P.
    \end{multline}
    Note that, for any $k \in \mathbb N$,
    \begin{equation}
        \sum_{j = 0}^n \cyclicPermutation^{j-k} \ket{\phi}_{S}\ket{\psi}^{\otimes n}_{P} = \sum_{j=0}^n \cyclicPermutation^j \ket{\phi}_{S}\ket{\psi}^{\otimes n}_{P}
    \end{equation}
    so that the state at the end of the algorithm equals
    \begin{equation}
        \ket{\unifs}_A \paren*{\idm_{SP} - \frac{1 - e^{i\theta}}{n+1} \sum_{j=0}^n \cyclicPermutation^j_{SP}} \ket{\phi}_S\ket{\psi}^{\otimes n}_P.
    \end{equation}
    After tracing out the ancilla register in the known state $\ket{\unifs}$,
    we get the result.
\end{proof}

By inspecting \cref{thm:optimalDistance,thm:algorithm}, it is clear that
the optimal $V$ can be constructed by setting $\theta = \pm \cos^{-1} f(n)$.

An alternative, natural choice is to maximize the negative phase in the 1-eigenspace of $\cyclicPermutation$
and pick $\theta = \pi$~\cite{Harrow2011}.
We show that this choice of $\theta$ is suboptimal,
but only to an additive distance of $\bigo*{n^{-3}}$. An $O({1\over n})$ upper bound on the diamond distance can be obtained without fully computing the diamond distance in Appendix \ref{sec:chanupper}.

\begin{corollary}\label{cor:piDistance}
    Let $\apxRef \coloneqq \apxRef_\pi$,
    then the family of "approximate reflection channels" $\apxRefCh_{\psi,\apxRef}$~\cite{Harrow2011} has
    \begin{equation}
        \opnorm{\mathcal{R}_{\psi}- \apxRefCh_{\psi,\apxRef}} = \frac{8n}{(n+1)^2}.
    \end{equation}
\end{corollary}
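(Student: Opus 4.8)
The plan is to specialize the general machinery already developed to the explicit algorithm $\apxRef = \apxRef_\pi$ and reduce the whole computation to reading off two scalars. First I would invoke \cref{thm:algorithm} with $\theta = \pi$. Since $e^{i\pi} - 1 = -2$, the unitary $\apxRef_\pi$ acts on the relevant subspace as $V = \idm - \frac{2}{n+1}\sum_{\ell=0}^{n} \cyclicPermutation^{\ell}$, so that $\apxRefCh_{\psi,\apxRef} = \apxRefCh_{\psi,V}$ for this particular unitary linear combination of cyclic permutations. Reading off the coefficients (the $\ell=0$ term of the sum shifts the identity coefficient) gives $c_0 = 1 - \frac{2}{n+1} = \frac{n-1}{n+1}$ and $c_\ell = -\frac{2}{n+1}$ for $1 \le \ell \le n$.

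Next I would compute the two quantities that control \cref{lem:normPOptimization}. The zeroth Fourier coefficient is $\tilde c_0 = \sum_{\ell=0}^{n} c_\ell = \frac{n-1}{n+1} - \frac{2n}{n+1} = -1$, which is consistent with unitarity since $|\tilde c_0| = 1$. Then $1 - |c_0|^2 = \frac{(n+1)^2 - (n-1)^2}{(n+1)^2} = \frac{4n}{(n+1)^2}$, and because $c_0$ is real and $e^{i\pi} = -1$ we get $\tilde c_0 \overline{c_0} = -c_0$, hence $|\tilde c_0 \overline{c_0} - e^{i\pi}| = |1 - c_0| = \frac{2}{n+1}$.

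Then I would apply \cref{lem:twirledStates,lem:normPOptimization} with $\alpha = \pi$ to write the diamond distance as the one-parameter maximization over $p \in [0,1]$ of $(1-p)(1-|c_0|^2) + \sqrt{(1-p)^2(1-|c_0|^2)^2 + 4(1-p)p\,|\tilde c_0\overline{c_0} - e^{i\pi}|^2}$. The key observation is the inequality $1 - |c_0|^2 \ge |\tilde c_0\overline{c_0} + 1|$, i.e.\ $\frac{4n}{(n+1)^2} \ge \frac{2}{n+1}$, which holds precisely because $n \ge 1$. This places $\apxRef_\pi$ in the ``Domain A'' regime of the proof of \cref{thm:optimalDistance}, where the function has no interior critical point in $p$ and attains its maximum at $p = 0$. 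At $p = 0$ the cross term proportional to $4(1-p)p$ vanishes and the expression collapses to $2(1 - |c_0|^2) = \frac{8n}{(n+1)^2}$, which is exactly the claim.

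The main obstacle here is bookkeeping rather than genuine difficulty: one must correctly fold the $\ell=0$ term of $\sum_{\ell} \cyclicPermutation^{\ell}$ into the identity coefficient when extracting $c_0$, and then verify the Domain A inequality so that the maximizer is the boundary point $p=0$ rather than an interior critical point, which would instead produce the Domain B expression of \cref{eqn:putput}. Alternatively, to keep the argument self-contained and avoid explicit appeal to the Domain A/B dichotomy, I could differentiate the one-parameter function from \cref{lem:normPOptimization} directly in $p$ and show that under $1-|c_0|^2 \ge |\tilde c_0 \overline{c_0} + 1|$ its derivative is non-positive throughout $[0,1]$, so the supremum is attained at $p = 0$ and evaluates to $\frac{8n}{(n+1)^2}$.
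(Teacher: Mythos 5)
Your proposal is correct and follows essentially the same route as the paper's proof: extract $c_0 = \tfrac{n-1}{n+1}$, $c_\ell = -\tfrac{2}{n+1}$ and $\tilde c_0 = -1$ from \cref{thm:algorithm}, feed them into \cref{lem:twirledStates,lem:normPOptimization}, and observe that the algorithm lies in Domain A of \cref{thm:optimalDistance} so the maximum over $p$ sits at $p=0$, giving $2(1-\abs{c_0}^2) = \tfrac{8n}{(n+1)^2}$. Your explicit verification of the Domain A inequality $\tfrac{4n}{(n+1)^2} \ge \tfrac{2}{n+1}$ (and the noted alternative of checking the sign of the $p$-derivative directly) is a slightly more self-contained justification than the paper's bare assertion of Domain A membership, but it is the same argument.
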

\begin{proof}
    We can write $\apxRef = \sum_{\ell=0}^n c_\ell \cyclicPermutation^\ell$ with
    \begin{align}
        c_{0}&= {n-1\over n+1} \\
        c_{\ell}&= -{2\over n+1} \, , \, \ell=1,\ldots, n.
    \end{align}
    Using these values of $c_\ell$, the Fourier coefficients are $\tilde c_0 = \sum_{\ell = 0}^n c_\ell = -1$ and $\tilde{c}_{k}=1$, $k=1,\ldots, n$.
    \cref{lem:covariance,lem:normPOptimization} imply
    \begin{equation}
        \begin{split}
	    \opnorm{\mathcal R_\psi - \apxRefCh_{\psi, \apxRef}} &= \max_{p\in [0,1]} \norm{\text{Id}_R \otimes \paren*{\mathcal R_{\psi} - \apxRefCh_{\psi,\apxRef}} (\phi_p)}_{1} \\
            &= \max_{p\in [0,1]} \frac{4}{(n+1)^2} \paren*{n \paren{1-p} +\sqrt{\paren{1-p}\paren{n^2 + 2np + p}}}.
        \end{split}
    \end{equation}
    As we saw in the proof of \cref{thm:optimalDistance},
    this algorithm is in Domain A ($\eta=\pi$, $r=1$, and $\gamma =0$) and its diamond distance from $\mathcal{R}_{\psi}$ is obtained when taking $p=0$ in $\phi_{p}$.
    Note that for $n>1$, this algorithm is not the optimal algorithm in Domain A.
    The optimal algorithm in Domain A occurs at the intersection points of $r=1$ and the boundary of Domain A and B.
\end{proof}

We show that $\apxRef_\theta$ can be implemented efficiently when $n$ is a power of two.

\begin{theorem}\label{thm:efficientRotation}
Assuming $L \coloneqq \log_2(n+1)$ is an integer, $\apxRef_\theta$ can be implemented in $2nL$ controlled-SWAP gates and $\bigo{n}$ single-qubit gates.
\end{theorem}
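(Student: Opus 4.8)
The plan is to read the implementation straight off the factorization of $\apxRef_\theta$ established in the proof of \cref{thm:algorithm}, and then count gates factor by factor. Recall that $\apxRef_\theta$ is the product of the controlled cyclic permutation $\sum_{j=0}^n \ketbra{j}{j}_A \otimes \cyclicPermutation^j$, the ancilla rotation $e^{i\theta\unifs}$, and the inverse controlled permutation $\sum_{k=0}^n \ketbra{k}{k}_A \otimes \cyclicPermutation^{-k}$, where the control register $A$ begins and ends in $\ket{\unifs}$. Since $L = \log_2(n+1)$ is an integer, $n+1 = 2^L$, so $A$ is an $L$-qubit register and $\ket{\unifs} = H^{\otimes L}\ket{0^L}$; preparing and restoring it costs $2L = \bigo{\log n}$ Hadamard gates.

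The key step is to implement the controlled cyclic permutation with controlled-SWAP gates only. First I would write the ancilla index in binary, $j = \sum_{b=0}^{L-1} j_b 2^b$, so that $\cyclicPermutation^j = \prod_{b=0}^{L-1}(\cyclicPermutation^{2^b})^{j_b}$; because all powers of $\cyclicPermutation$ commute and the factors are conditioned on distinct qubits, the product $\prod_{b=0}^{L-1}(\text{controlled-}\cyclicPermutation^{2^b})$ — with the $b$-th factor controlled on the $b$-th qubit of $A$ — applies $\cyclicPermutation^{\sum_b j_b 2^b} = \cyclicPermutation^j$ on each basis state $\ket{j}_A$, hence realizes $\sum_{j}\ketbra{j}{j}_A \otimes \cyclicPermutation^j$. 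Each $\cyclicPermutation^{2^b}$ is a single permutation of the $n+1$ qudit registers, so it is a product of at most $n$ register transpositions; controlling each transposition on the single relevant ancilla qubit turns it into one controlled-SWAP (Fredkin) gate. Thus every controlled power costs at most $n$ controlled-SWAP gates, the forward permutation costs at most $nL$, and the inverse permutation another $nL$, for the stated total of $2nL$.

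Finally I would account for the ancilla rotation $e^{i\theta\unifs}$. Writing $\unifs = \ketbra{\unifs}{\unifs} = H^{\otimes L}\ketbra{0^L}{0^L}H^{\otimes L}$ gives $e^{i\theta\unifs} = H^{\otimes L}\, e^{i\theta\ketbra{0^L}{0^L}}\, H^{\otimes L}$, whose middle factor is an $L$-qubit controlled-phase gate flagging the all-zeros string and decomposes into $\bigo{\log n}$ single- and two-qubit gates with one work qubit; together with the Hadamards this is $\bigo{\log n} \subseteq \bigo{n}$, within the claimed single-qubit budget. I expect the only point requiring care to be the transposition bookkeeping: the clean $2nL$ bound rests on the elementary fact that any permutation of $n+1$ symbols is a product of at most $n$ transpositions, applied uniformly to all $2L$ controlled powers. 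A sharper count using the cycle structure of $\cyclicPermutation^{2^b}$ — which has $\gcd(2^b,n+1) = 2^b$ cycles and hence needs only $(n+1)-2^b$ transpositions — would merely tighten the constant and is not needed for the stated bound.
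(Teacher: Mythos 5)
Your proposal is correct and follows essentially the same route as the paper's proof: both expand the control register in binary to reduce the controlled cyclic permutation to $L$ singly-controlled powers $\cyclicPermutation^{2^b}$, implement each such power as at most $n$ controlled register transpositions (controlled-SWAPs), and double the count for the inverse to reach $2nL$. You are in fact more explicit than the paper about the residual gates --- the Hadamards preparing $\ket{\unifs}$ and the decomposition of $e^{i\theta\unifs}$ as a conjugated multi-controlled phase --- which the paper's proof dismisses with ``the result follows trivially,'' and your closing remark about the cycle structure of $\cyclicPermutation^{2^b}$ is exactly the slack one would invoke to keep that phase gate within the stated gate budget.
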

\begin{proof}We can decompose the controlled cyclic permutations into singly-controlled cyclic permutations since
\begin{equation}
    \sum_{b_1,\ldots, b_L \in \set{0,1}} \ketbra{b}{b}_{1\cdots L} \otimes \cyclicPermutation^b_{SP}
    = \prod_{j=1}^L \ketbra{0}{0}_j \otimes \idm_{SP} + \ketbra{1}{1}_j \otimes \cyclicPermutation^{2^j}_{SP}
\end{equation}
where $b$ is the integer specified by the bit string $b_K\ldots b_1$,
and we order the ancilla bits specifying $b_1$ as register $1$ to $b_K$ as register $K$.
Every cyclic permutation $\cyclicPermutation^{2^j}$ can be computed classically and decomposes into cyclic permutations.
It is easy to see that a cyclic permutation acting on $k \in \mathbb N$ qubits can be implemented in $k-1$ SWAP gates.
Every gate $\cyclicPermutation^{2^k}$ acts on $n+1$ qubits,
therefore it can be implemented in $n$ SWAP gates.
The controlled variant of $\cyclicPermutation^{2^k}$ can be implemented by controlling every SWAP gate.
The result follows trivially.\end{proof}

We conclude this section by emphasizing a property of our approximate programmable quantum processors $\mathcal{E}_{\psi,R_{\theta}}$ that is not part of the general definition of approximate programmable quantum processor,
and which may be useful when applying programmable processors in broader applications.
Consider the program preparation channel defined by \begin{align}
    \Prep(\psi)_{S\rightarrow P}(\phi)&:= \trace{\phi} \psi^{\otimes n} \nonumber \\
    &= \text{tr}_{E}W(\psi)_{S\rightarrow PE}\phi W(\psi)^{\dagger}_{S\rightarrow PE} 
    \label{eqn:prepchannelstine}
\end{align} for all states $\phi$ on $S$, where the isometry $W(\psi)_{S\rightarrow PE}\ket{i}:=\ket{\psi}^{\otimes n}_{P}\ket{i}_{E}$.
The following lemma shows that our approximate programmable quantum processors leaves the program state approximately unchanged,
but at a slower rate of convergence than that characterizing implementation of the rotation $\mathcal{R}_{\psi}(\theta)$ (see \cref{cor:rotationDistance}).

\begin{lemma}\label{lem:compchan}
The channel $\mathcal{E}_{\psi,R_{\theta}}$ has the property that
\begin{equation}
\opnorm{\widehat{\mathcal{E}}_{\psi, \apxRef_{\theta}} - \Prep(\psi)_{S\rightarrow P}}\sim  {2\vert \sin\left({\theta\over 2}\right) \vert \over \sqrt{n}}
\label{eqn:compldiam}
\end{equation}
as $n\rightarrow \infty$, where $\widehat{\mathcal{E}}_{\psi, \apxRef_{\theta}}(\phi):= \ptrace{S}{\apxRef_{\theta}\phi\otimes \psi^{\otimes n} \apxRef_{\theta}^{\dagger}}$ is the complementary channel to  $\mathcal{E}_{\psi,R_{\theta}}$. 
\end{lemma}

The proof of \cref{lem:compchan} is a lengthy exercise in identifying the support of $\widehat{\mathcal{E}}_{\psi, \apxRef_{\theta}} - \Prep(\psi)_{S\rightarrow P}$ and is given in \cref{sec:compproof}. Readers interested in an $\Omega({1\over \sqrt{n}})$ lower bound without an exact asymptotic for the diamond distance can refer to the simpler calculation in \cref{sec:bdd}. This property of approximately returning the program state  allows the possibility of reusing the program register when concatenating programmable processors as part of a larger approximately programmable algorithm.

\subsection{Saving space in the program register}\label{sec:savingSpace}
We have assumed a program register of the form $\ket\psi^{\otimes n}$,
but this is not the most efficient use of space.
The symmetric subspace is of lower dimension than the tensor product space of $n$ copies.
We show that there is an efficient state-independent isometry
that maps the larger symmetric state into a more compact representation and vice versa.

\AP To see why the symmetric state $\ket{\psi}^{\otimes n}$ can be more compactly represented,
let us define orthonormal basis vectors of the symmetric space, $\intro*\sym^{n}(\mathbb{C}^{d})$, of $n$ elements with dimension $d$ as
\begin{equation}
    \ket{e_{i_1,\ldots,i_n}} \coloneqq \frac{\sum_j \pi_j \ket{i_1} \otimes \cdots \otimes \ket{i_n}}{\norm{\sum_j \pi_j \ket{i_1} \otimes \cdots \otimes \ket{i_n}}},
\end{equation}
for $0 \le i_1 \le \cdots \le i_n \le d-1$
and where the sum is over all permutations $\pi_j$.
We can calculate the number of basis elements in the symmetric subspace using the ``stars and bars'' method as
\begin{equation}\label{eq:symmetricSpaceDimension}
	\dim\paren*{\sym^{n}(\mathbb{C}^{d})} = \binom{n + d - 1}{d-1} \le \frac{(n+d-1)^{d-1}}{(d-1)!}
\end{equation}
by having $n$ \emph{stars} and $d-1$ \emph{bars},
representing where the index $i_j$ increases.
Taking the logarithm,
\begin{equation}\label{eq:symmetricCost}
    \log_2\paren{\dim\paren*{\sym^{n}(\mathbb{C}^{d})}}
        \le (d-1)\log_2\paren*{\frac{n+d-1}{d-1}} + \bigo{d}
\end{equation}
by Stirling's approximation.
This provides an upper bound for the number of qubits in the program register.

We now define an isometry $\decoder_n : \mathbb{C}^{\dim(\sym^{n}(\mathbb{C}^{d}))} \to (\mathbb C^d)^{\otimes n}$ as
\begin{equation}
	\intro*\decoder_n \coloneqq \sum_{0 \le i_1 \le \dots \le i_n \le d-1} \ket{e_{i_1,\ldots,i_n}} \bra{i_1,\ldots,i_n},
\end{equation}
where $\ket{i_1,\ldots,i_n}$ are orthogonal states that index $\ket{e_{i_1,\ldots,i_n}}$.

The isometry $\decoder_n$ is only dependent on the uniform properties of the problem,
i.e., $d$ and $n$, and not $\ket\psi$.
Therefore, any processor having $\ket{\psi}^{\otimes n}$ as input may be converted
to one that has program dimension $\dim(\sym^{n}(\mathbb{C}^{d}))$,
allowing any pure program state similar to other universal programming results~\cite{Hillery2006,Yang2020}.

The quantum Schur transform~\cite{Bacon2007} is a unitary, $U_{\text{Schur}}$, that can be used to implement $\decoder_n$
if we perform
\begin{equation}\label{eq:decoderConstruction}
    \decoder_n \ket{\psi_P} = (\idm\otimes\bra{0}_a) U_{\text{Schur}}^\dagger \ket{\lambda = (n,0)}\ket{\psi_P} \ket{p_\lambda = 0}
\end{equation}
where we project out the ancilla register $a$ that is in the known state $\ket{0}_a$.
The gate complexity of $U_{\text{Schur}}$ is $\bigo{\poly{n, d, \log \frac{1}{\delta}}}$~\cite{Bacon2007},
for $\delta$ the precision.

We can now bound the necessary dimension to program reflections in a common format~\cite{Yang2020}.

\begin{corollary}\label{cor:reflectionProgramDimension}
    The gate family of programmable reflections has program dimension, $d_P$, bounded by
    \begin{equation}
        \label{eq:progdimforreflections}
        \log(d_p) \le (d-1) \log \paren*{\frac{\bigo*{d^{-1}}}{\epsilon}}.
    \end{equation}
\end{corollary}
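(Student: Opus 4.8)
The plan is to chain the exact distance formula of \cref{thm:optimalDistance} with the symmetric-subspace compression of \cref{sec:savingSpace}, converting a guarantee on the diamond distance into a bound on the program dimension. First I would fix the number of copies $n$ needed to reach error $\epsilon$. By \cref{thm:optimalDistance}, the optimal "approximate reflection channel" satisfies $\opnorm{\refChannel_{\psi} - \apxRefCh_{\psi,V}} = \frac{8(n+2)}{n^2+4n+8}$, and cross-multiplying shows this is at most $\frac{16}{n}$ for every $n \ge 1$ (since $8(n+2)n \le 16(n^2+4n+8)$). Hence choosing $n = \bigo{1/\epsilon}$, concretely $n = \lceil 16/\epsilon \rceil$, forces the diamond distance below $\epsilon$.

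Second I would identify the relevant program dimension with that of the symmetric subspace rather than the full tensor power. By \cref{sec:savingSpace}, the state-independent isometry $\decoder_n$ maps a pure state of $\sym^n(\mathbb{C}^d)$ onto $\ket\psi^{\otimes n}$; since it does not depend on $\ket\psi$, pre-composing the optimal channel of \cref{thm:optimalDistance} with $\decoder_n$ leaves its action unchanged while shrinking the program register. The program dimension is therefore $d_P = \dim(\sym^n(\mathbb{C}^d)) = \binom{n+d-1}{d-1}$.

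Third I would bound $\log d_P$. From \cref{eq:symmetricSpaceDimension} we have $d_P \le (n+d-1)^{d-1}/(d-1)!$, and the crude Stirling estimate $(d-1)! \ge ((d-1)/e)^{d-1}$ converts this to
\begin{equation}
    \log d_P \le (d-1)\log\frac{e(n+d-1)}{d-1},
\end{equation}
i.e.\ the clean form of \cref{eq:symmetricCost} in which the additive $\bigo{d}$ is absorbed as the factor $e$ inside the logarithm. Substituting $n = \bigo{1/\epsilon}$ and working in the natural regime $d = \bigo{1/\epsilon}$ (so that $n \gtrsim d$), the numerator obeys $e(n+d-1) = \bigo{1/\epsilon}$, while $\frac{1}{d-1} = \bigo{d^{-1}}$; the argument of the logarithm is thus $\frac{\bigo{d^{-1}}}{\epsilon}$, giving exactly \cref{eq:progdimforreflections}.

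The step I expect to be most delicate is this final simplification. One must confirm that the constant $e$ produced by Stirling and the additive $d-1$ inside $n+d-1$ are both dominated by the $1/\epsilon$ scaling, so that the logarithm's argument is genuinely $\frac{\bigo{d^{-1}}}{\epsilon}$ and not merely $\frac{\bigo{1}}{\epsilon}$. This forces one to fix the operating regime $d = \bigo{1/\epsilon}$ explicitly and to track the hidden constant through both the choice of $n$ from \cref{thm:optimalDistance} and the Stirling bound, in order to certify the weak $\bigo{d^{-1}}$ dimension dependence that distinguishes programming reflections from universal programming.
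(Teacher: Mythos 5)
Your proposal is correct and follows essentially the same route as the paper: fix $n = \bigo{1/\epsilon}$ via \cref{thm:optimalDistance}, compress the program into $\sym^n(\mathbb{C}^d)$ with the state-independent isometry $\decoder_n$, and bound $\log\binom{n+d-1}{d-1}$ as in \cref{eq:symmetricCost}. Your version is in fact slightly more careful than the paper's, making explicit the constant $n = \lceil 16/\epsilon \rceil$ and the operating regime $1/\epsilon \gtrsim d$ that the paper's final big-O simplification implicitly assumes.
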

\begin{proof}
The upper bound from \cref{thm:optimalDistance} that is attained by our algorithm, $\apxRefCh_{\psi, \apxRef}$,
implies $n = \bigo{\frac{1}{\epsilon}}$.
By applying $\decoder_n$ on the program register before applying $\apxRefCh_{\psi,R}$,
we can construct program states $\ket{\psi}^{\otimes n} \in \sym^n(\mathbb{C}^{d})$
using program states of dimension $\dim(\sym^n(\mathbb{C}^d))$.
Therefore, \cref{eq:symmetricCost} implies
\begin{align}
    \log_2(d_P) &\le (d-1)\log_2\paren*{\frac{n+d-1}{d-1}} + \bigo{d} \\
        &= (d-1) \log_2\paren*{\frac{\bigo{d^{-1}}}{\epsilon}}
\end{align}
as required.
\end{proof}

\section{Adjusting the rotation angle}\label{sec:unitaryreflections}
\AP We now generalize from reflections, $\rotChannel_\psi$, from the previous sections to rotations that have an arbitrary rotation angle, $\rotChannel_\psi(\alpha)$.
We first give an optimal algorithm that assumes access to all copies simultaneously.
We then analyze an algorithm due to \textcite{Lloyd2014}
that only needs copies of $\ket\psi$ provided one at a time.
We improve this algorithm by tweaking its parameters.

We can generalize our approach to finding the optimal approximate reflection channel (\cref{thm:optimalDistance}) to rotations $\mathcal{R}_{\psi}(\alpha)$ to obtain a general formula for the error of $\apxRefCh_{\psi,V}$ (keeping the coefficients $c_j$ abstract).

\begin{theorem}\label{thm:rotationBound}
    For any $\alpha \in [0,\pi]$, it holds that
    \begin{multline}
        \opnorm{\rotChannel_\psi(\alpha) - \apxRefCh_{\psi, V}} =
        \begin{dcases}
            2\paren*{1-\abs{c_0}^2} & \text{if } \frac{\abs*{\tilde c_0 \bar c_0 - e^{i\alpha}}}{1-\abs{c_0}^2} \le 1 \\ %
            \frac{2 \abs*{\tilde c_0 \bar c_{0} - e^{i\alpha}}^2}{2 \abs*{\tilde{c}_{0} \bar{c_{0}} - e^{i\alpha}} + \abs{c_{0}}^{2}-1} & \text{otherwise},
        \end{dcases}
    \end{multline}
    where the unitary $V = \sum_{\ell=0}^n c_{\ell} \cyclicPermutation^\ell$ for $c_\ell \in \mathbb C$.
\end{theorem}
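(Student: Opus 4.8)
The plan is to reduce the diamond-distance computation to a single-variable maximization and then carry out an elementary but careful calculus optimization. By \cref{lem:twirledStates} the diamond distance equals $\max_{p\in[0,1]}\norm{\idm_R\otimes(\rotChannel_\psi(\alpha)-\apxRefCh_{\psi,V})(\phi_p)}_1$, and by \cref{lem:normPOptimization} the inner trace norm is the explicit function
\begin{equation*}
g(p) = (1-p)a + \sqrt{(1-p)^2 a^2 + 4(1-p)p\,b^2},
\end{equation*}
where I abbreviate $a \coloneqq 1-\abs{c_0}^2 \ge 0$ and $b \coloneqq \abs{\tilde c_0 \overline{c_0} - e^{i\alpha}} \ge 0$. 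Thus everything reduces to $\max_{p\in[0,1]} g(p)$, and the two cases in the statement should correspond exactly to the two domains (A and B) already encountered in \cref{thm:optimalDistance}, now with the reflection value $e^{i\alpha}=-1$ replaced by a general phase.

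First I would record the boundary values $g(0)=2a$ and $g(1)=0$, and compute $g'(0)=2(b^2-a^2)/a$, whose sign flips precisely at $b=a$: this is the dividing line between the two cases. To locate interior critical points I set $g'(p)=0$, i.e.\ $h'(p)=2a\sqrt{h(p)}$ where $h(p)$ denotes the radicand; squaring produces a quadratic $Q(p)$ whose (reduced) discriminant simplifies to $a^2 b^2$, so its two roots factor cleanly as
\begin{equation*}
p^* = \frac{b-a}{2b-a}, \qquad p' = \frac{b+a}{2b+a}.
\end{equation*}
Because squaring can introduce spurious solutions, I would then check the sign of $h'$ at each root, finding $h'(p^*)=2ab>0$ (genuine) while $h'(p')=-2ab<0$ (spurious, since there $g'<0$). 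Hence $p^*$ is the only genuine critical point of $g$.

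In the regime $b\le a$ (Domain A, the first case), $g'(0)\le 0$ while the genuine critical point $p^*$ fails to lie in $(0,1)$; a sign analysis of $g'$ then shows $g$ is monotonically non-increasing, so its maximum is $g(0)=2a=2(1-\abs{c_0}^2)$. In the regime $b>a$ (Domain B), $g'(0)>0$ and $p^*\in(0,1)$ is the unique interior critical point; substituting it—using the clean identity $\sqrt{h(p^*)}=b$ that drops out of the factorization—gives
\begin{equation*}
g(p^*) = \frac{2b^2}{2b-a} = \frac{2\abs{\tilde c_0 \overline{c_0} - e^{i\alpha}}^2}{2\abs{\tilde c_0 \overline{c_0} - e^{i\alpha}} + \abs{c_0}^2 - 1}.
\end{equation*}
A short check that $g(p^*)\ge 2a$ (equivalent to $(b-a)^2\ge 0$) and $g(p^*)\ge g(1)$ confirms this interior value is the global maximum.

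I expect the main subtlety to be the bookkeeping around the squaring step: the quadratic $Q$ always produces both roots $p^*$ and $p'$, and one must discard $p'$ using the sign of $h'$ rather than mere membership in $[0,1]$. Two edge cases then deserve a remark each. When $a=0$ (i.e.\ $\abs{c_0}=1$), unitarity together with $\sum_\ell \abs{c_\ell}^2 = 1$ (from \cref{lem:cyclicCoefficients}) forces $c_\ell=0$ for $\ell\ge 1$, so $\tilde c_0=c_0$ and $b=\abs{1-e^{i\alpha}}$; the ratio test lands in the second case (for $\alpha\neq 0$) and the formula collapses to $g(p^*)=b$, matching the rotation-versus-identity distance. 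On the boundary $b=a$ the two branches agree ($p^*=0$ and $2b^2/(2b-a)=2a$), giving continuity. Finally, although the statement restricts to $\alpha\in[0,\pi]$, the maximization sees $\alpha$ only through $b$, so the argument is insensitive to that range and the restriction merely fixes a canonical representative.
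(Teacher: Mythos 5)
Your proposal is correct and takes essentially the same route as the paper's own proof: reduce to the one-parameter maximization via \cref{lem:twirledStates,lem:normPOptimization}, then split into the domain $\abs{\tilde c_0 \bar c_0 - e^{i\alpha}} \le 1-\abs{c_0}^2$ (no interior critical point, maximum at $p=0$) and its complement (interior critical point $p^* = \frac{b-a}{2b-a}$, whose substitution yields $\frac{2b^2}{2b-a}$), exactly as in the paper's adaptation of the proof of \cref{thm:optimalDistance}. Your extra bookkeeping---identifying and discarding the spurious root $p'$ introduced by squaring, and the edge cases $a=0$ and $b=a$---only makes explicit steps the paper leaves implicit.
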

\begin{proof}
    By \cref{lem:twirledStates,lem:normPOptimization},
    \begin{equation}\label{eq:pOptAlpha}
        \begin{split}
	    \opnorm{\rotChannel_\psi(\alpha) - \apxRefCh_{\psi, V}} &= \max_{p\in [0,1]} \norm{\text{Id}_R \otimes \paren*{\mathcal R_{\psi}(\alpha) - \apxRefCh_{\psi, V}} (\phi_p)}_{1} \\
            &= \max_{p\in [0,1]} (1-p) \paren*{1- {\abs{c_{0}}^{2}}} \\
            & \quad + \sqrt{(1-p)^2 \paren*{1 - \abs*{c_0}^2}^2 + 4(1-p)p\abs*{\tilde c_0 \overline{c_0} - e^{i\alpha}}^2}
        \end{split}
    \end{equation}
    Following the proof of \cref{thm:optimalDistance}, we first consider the domain $1 - \abs{c_0}^2 \ge \abs{\tilde c_0 \bar c_0 - e^{i \alpha}}$.
    Here \cref{eq:pOptAlpha} has no critical points with respect to $p$ so its maximum is obtained at $p=0$.
    The result for this case follows easily.

    Now consider the opposite domain where $1 - \abs{c_0}^2 < \abs{\tilde c_0 \bar c_0 - e^{i \alpha}}$.
    We maximize over $p$ by finding the critical point of \cref{eq:pOptAlpha} to obtain
    \begin{equation}
        p^* \coloneqq \frac{\abs*{\tilde{c}_{0}\bar{c_{0}} - e^{i \alpha}} - (1- \abs{c_{0}}^{2})}{2 \abs*{\tilde{c}_{0}\bar{c_{0}} - e^{i \alpha} } - \paren*{1-\abs{c_{0}}^{2}}}.
    \end{equation}
    We substitute $p^*$ into \cref{eq:pOptAlpha} to get
    \begin{multline}
        (1-p^*) \paren*{1- {\abs{c_{0}}^{2}}}
            + \sqrt{(1-p^*)^2 \paren*{1 - \abs*{c_0}^2}^2 + 4(1-p^*)p^* \abs*{\tilde c_0 \overline{c_0} - e^{i\alpha}}^2} \\
            = \frac{2 \abs*{\tilde c_0 \bar c_{0} - e^{i\alpha}}^2}{2 \abs*{\tilde{c}_{0} \bar{c_{0}} - e^{i\alpha}} + \abs{c_{0}}^{2}-1}
    \end{multline}
    as required.
\end{proof}

Although it would be desirable to follow up by minimizing the above diamond distance over all unitary elements of $\mathbb{C}[S_{n+1}]$, 
we were not able to successfully carry out an analysis along the lines of \cref{thm:optimalDistance} for values of $\alpha$ other than $\pi$,
due to the more complicated structure of the function to be optimized. 
To illustrate some of the complexity of optimizing \cref{thm:rotationBound},
consider the fact that the first domain in \cref{thm:rotationBound} does not exist for $\theta=0$ and $\alpha \neq 0$.
More generally, finding the existence and structures of either domain for a given algorithm $V$,
then carrying out the optimization in each domain,
is challenging for general $\alpha$.

\subsection{Processors that access the program register in parallel}
Let us consider the special class of algorithms $\apxRefCh_{\psi, R_\theta}$.
\Cref{thm:algorithm} gives the relation between $\theta$ and the unitary $V$ in the subalgebra of $\mathbb C[S_{n+1}]$
that is implemented.
In particular, when acting on $\ket{\phi}_S\ket{\psi}_P^{\otimes n}$,
$\apxRefCh_{\psi, \apxRef_\theta}$ has
\begin{align}
    c_0 &= \frac{n + e^{i \theta}}{n+1}, \\
    c_\ell &= \frac{e^{i \theta} - 1}{n+1}, & \ell=1,\ldots,n,
\end{align}
by \cref{lem:cyclicCoefficients}, so that $\tilde c_0 = e^{i\theta}$.
By using these values for $c_0$ and $\tilde c_0$,
we can obtain expressions for the diamond distance of $\apxRefCh_{\psi, \apxRef_\theta}$ from $\mathcal{R}_{\psi}(\alpha)$ using \cref{thm:rotationBound}.

\begin{figure}[tbh]
	\centering
	\includegraphics[alt={A graph plotting the dependence of the optimal theta-star on alpha. Theta-star exceeds alpha until pi/3 and then drops below alpha.}]{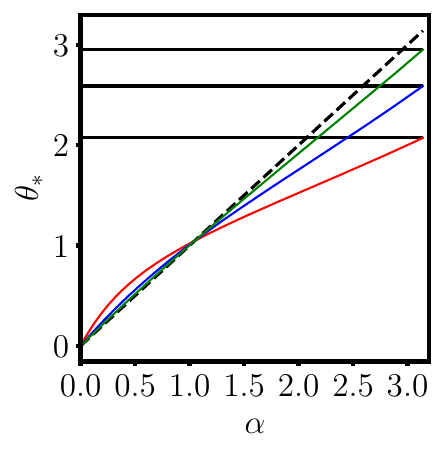}
	\caption{%
		The optimal angle $\theta_{*}$ that minimizes the diamond distance error between the rotation, $e^{i\alpha \ketbra{\psi}{\psi}}$ and our algorithm,
        $\apxRefCh_{\psi, \apxRef_\theta}$ based on \cref{fig:algorithm},
        for $n=1$ (red), $n=4$ (blue), $n=16$ (green).
        Solid black lines are $\cos^{-1}f(n)$ from \cref{thm:optimalDistance}.
        The black dashed line is the line $\theta_* = \alpha$.
	}\label{fig:diamond2}
\end{figure}

However, we still found it necessary to numerically optimize the angle $\theta$.
We plot the numerically optimal angle $\theta_*$ in \cref{fig:diamond2} that attains
\begin{equation}
    \min_{\theta \in [0, \pi]} \opnorm{\rotChannel_\psi(\alpha) - \apxRefCh_{\psi, \apxRef_\theta}}
\end{equation}
for increasing copy number $n$.
Perhaps surprisingly, $\theta_*$ deviates significantly from $\theta_* = \alpha$ for small $n$,
sometimes sitting above $\alpha$ and sometimes below $\alpha$.

The lack of analytical expressions notwithstanding,
it is of interest to consider particular values of $\theta$ and analyze the performance of $\apxRefCh_{\psi,R_{\theta}}$ for approximately implementing $\mathcal{R}_{\psi}(\alpha)$. 
If we make the natural choice $\theta = \alpha$, we can analytically compute a simpler expression for the diamond distance.
\begin{corollary}\label{cor:rotationDistance}
    For any $\alpha \in [0, \pi]$,
    \begin{equation}
        \opnorm{\rotChannel_\psi(\alpha) - \apxRefCh_{\psi, \apxRef_\alpha}} = \begin{dcases}
            \frac{4n\paren*{1 - \cos(\alpha)}}{(n+1)^2} & \text{if } \alpha \ge 2 \sin^{-1}\paren*{\frac{n+1}{2n}} \\
            \frac{2}{(n+1)\csc\paren*{\frac{\alpha}{2}} -n} & \text{otherwise.}
        \end{dcases}
    \end{equation}
\end{corollary}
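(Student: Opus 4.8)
The plan is to specialize Theorem~\ref{thm:rotationBound} to the case $\theta = \alpha$ and simplify the resulting expressions. From the parallel-access coefficients recorded just above the corollary, we have $\tilde c_0 = e^{i\theta} = e^{i\alpha}$ and $c_0 = (n + e^{i\theta})/(n+1) = (n + e^{i\alpha})/(n+1)$. The entire computation hinges on evaluating the two scalar quantities that appear in the two cases of Theorem~\ref{thm:rotationBound}, namely $\abs{\tilde c_0 \bar c_0 - e^{i\alpha}}$ and $1 - \abs{c_0}^2$, and then deciding which branch of the piecewise formula applies.

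First I would compute $1 - \abs{c_0}^2$. Writing $\abs{c_0}^2 = \abs{n + e^{i\alpha}}^2/(n+1)^2 = (n^2 + 2n\cos\alpha + 1)/(n+1)^2$, a short calculation gives $1 - \abs{c_0}^2 = 2n(1 - \cos\alpha)/(n+1)^2$. Next I would compute $\tilde c_0 \bar c_0 - e^{i\alpha} = e^{i\alpha}(n + e^{-i\alpha})/(n+1) - e^{i\alpha} = e^{i\alpha}\bigl[(n + e^{-i\alpha}) - (n+1)\bigr]/(n+1) = e^{i\alpha}(e^{-i\alpha} - 1)/(n+1)$. Taking the modulus, $\abs{\tilde c_0 \bar c_0 - e^{i\alpha}} = \abs{e^{-i\alpha} - 1}/(n+1) = 2\abs{\sin(\alpha/2)}/(n+1)$, using $\abs{e^{-i\alpha}-1} = 2\abs{\sin(\alpha/2)}$. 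With both quantities in hand, the threshold condition $\abs{\tilde c_0 \bar c_0 - e^{i\alpha}} \le 1 - \abs{c_0}^2$ from the first branch becomes $2\sin(\alpha/2)/(n+1) \le 2n(1-\cos\alpha)/(n+1)^2$; substituting $1 - \cos\alpha = 2\sin^2(\alpha/2)$ and dividing by the common positive factor $2\sin(\alpha/2)$ (valid for $\alpha \in (0,\pi]$) reduces this to $(n+1) \le 2n\sin(\alpha/2)$, i.e.\ $\sin(\alpha/2) \ge (n+1)/(2n)$, which is exactly the stated crossover $\alpha \ge 2\sin^{-1}\bigl(\tfrac{n+1}{2n}\bigr)$.

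It then remains to plug each quantity into the two formulas of Theorem~\ref{thm:rotationBound}. In the first branch, where the threshold holds (so $\alpha$ is large), the diamond distance is $2(1 - \abs{c_0}^2) = 4n(1-\cos\alpha)/(n+1)^2$, matching the top line. In the second branch I would substitute into $2\abs{\tilde c_0 \bar c_0 - e^{i\alpha}}^2/\bigl(2\abs{\tilde c_0 \bar c_0 - e^{i\alpha}} + \abs{c_0}^2 - 1\bigr)$; here the numerator is $2\cdot 4\sin^2(\alpha/2)/(n+1)^2$ and the denominator is $4\sin(\alpha/2)/(n+1) - 2n(1-\cos\alpha)/(n+1)^2$. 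The main (though still routine) obstacle is the algebra of reducing this ratio to the clean closed form $2/\bigl((n+1)\csc(\alpha/2) - n\bigr)$; I expect this to follow by clearing the common denominator $(n+1)^2$, again using $1 - \cos\alpha = 2\sin^2(\alpha/2)$ to rewrite the denominator as $4\sin(\alpha/2)(n+1) - 4n\sin^2(\alpha/2)$ over $(n+1)^2$, cancelling the factor $8\sin^2(\alpha/2)/(n+1)^2$ between numerator and denominator, and simplifying the leftover $4\sin(\alpha/2)\bigl[(n+1) - n\sin(\alpha/2)\bigr]$ against $8\sin^2(\alpha/2)$ to land on $2\sin(\alpha/2)/\bigl((n+1) - n\sin(\alpha/2)\bigr)$, which equals $2/\bigl((n+1)\csc(\alpha/2) - n\bigr)$ after dividing through by $\sin(\alpha/2)$. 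Setting $\alpha = \pi$ recovers $8n/(n+1)^2$ as a sanity check against Corollary~\ref{cor:piDistance}.
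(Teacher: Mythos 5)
Your proposal is correct and follows essentially the same route as the paper's proof: specialize Theorem~\ref{thm:rotationBound} at $\theta=\alpha$ using $c_0=(n+e^{i\alpha})/(n+1)$, $\tilde c_0=e^{i\alpha}$, reduce the branch condition $\abs{\tilde c_0\bar c_0-e^{i\alpha}}\le 1-\abs{c_0}^2$ to $\sin(\alpha/2)\ge (n+1)/(2n)$, and simplify each branch, with the check against Corollary~\ref{cor:piDistance} at $\alpha=\pi$ as a bonus. Your write-up supplies the algebraic details that the paper's proof states without derivation.
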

\begin{proof}
    Consider the expression for $\opnorm{\rotChannel_\psi(\alpha) - \apxRefCh_{\psi, \apxRef_\alpha}}$ from \cref{thm:rotationBound}
    when $\theta = \alpha$.
    We first note that the condition $1-\abs{c_0}^2 \ge \abs{\tilde c_0 \bar c_0 - e^{i\alpha}}$
    can be simplified to
    \begin{equation}
        n\geq (n+1) \sin \left(\frac{\alpha }{2}\right)+n \cos (\alpha ).
    \end{equation}
    Therefore, we are in the first case when
    \begin{equation}
        \alpha \ge 2 \sin^{-1}\paren*{\frac{n+1}{2n}}.
    \end{equation}
    Now, the diamond distance in the first case can be simplified to
    \begin{equation}
        2\paren*{1-\abs{c_0}^2} = \frac{4n \paren*{1 - \cos(\alpha)}}{(n+1)^2}.
    \end{equation}
    In the other domain, we simplify the equation to obtain
    \begin{equation}
        \frac{2 \abs*{\tilde c_0 \bar c_{0} - e^{i\alpha}}^2}{2 \abs*{\tilde{c}_{0} \bar{c_{0}} - e^{i\alpha}} + \abs{c_{0}}^{2}-1} = \frac{2}{(n+1)\csc\paren*{\frac{\alpha}{2}} -n}
    \end{equation}
    as required.
\end{proof}

We show an upper bound of \cref{cor:rotationDistance} that is linear $\alpha$ and will be useful for universal programming (\cref{sec:universal}).
\begin{lemma}\label{lem:linearBound}
    It holds that
    \begin{align}
    \label{eq:diamdistboundalpha}
        \opnorm{\mathcal R_\psi(\alpha) - \apxRefCh_{\psi, \apxRef_\alpha}} \le \frac{3 \alpha}{n}\;.
    \end{align}
\end{lemma}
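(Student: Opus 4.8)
The plan is to invoke the exact two-branch formula for the diamond distance from \cref{cor:rotationDistance} and to bound each branch by $3\alpha/n$ separately, recalling throughout that $\alpha \in [0,\pi]$ so that the formula applies.

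In the first branch, which holds when $\alpha \ge 2\sin^{-1}(\frac{n+1}{2n})$, the distance equals $\frac{4n(1-\cos\alpha)}{(n+1)^2}$. I would first strip the copy-number dependence off the prefactor using $\frac{4n}{(n+1)^2} \le \frac{4}{n}$ (equivalently $n^2 \le (n+1)^2$), which reduces the claim to the purely trigonometric inequality $4(1-\cos\alpha) \le 3\alpha$ on $[0,\pi]$; granting this, the branch is bounded by $\frac{4(1-\cos\alpha)}{n} \le \frac{3\alpha}{n}$.

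In the second branch, which holds when $\alpha < 2\sin^{-1}(\frac{n+1}{2n})$, I would rewrite $\frac{2}{(n+1)\csc(\alpha/2) - n}$ as $\frac{2\sin(\alpha/2)}{(n+1) - n\sin(\alpha/2)}$. The branch hypothesis is exactly $\sin(\alpha/2) < \frac{n+1}{2n}$, so $n\sin(\alpha/2) < \frac{n+1}{2}$ and the denominator exceeds $\frac{n+1}{2}$; together with $2\sin(\alpha/2) \le \alpha$ this gives a bound of $\frac{4\sin(\alpha/2)}{n+1} \le \frac{2\alpha}{n+1} \le \frac{3\alpha}{n}$, so this branch sits safely below the target.

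The only nonroutine step, and the one I expect to be the crux, is the inequality $4(1-\cos\alpha) \le 3\alpha$ on $[0,\pi]$. I would prove it by setting $\phi(\alpha) \coloneqq 3\alpha - 4 + 4\cos\alpha$, observing $\phi(0)=0$, solving $\phi'(\alpha) = 3 - 4\sin\alpha = 0$ for the critical points $\sin^{-1}(3/4)$ and $\pi - \sin^{-1}(3/4)$, and verifying that the interior local minimum at $\pi - \sin^{-1}(3/4)$ together with the endpoint value $\phi(\pi) = 3\pi - 8$ are both positive. The constant $3$ is close to optimal: the supremum of $4(1-\cos\alpha)/\alpha$ over $(0,\pi]$ is attained at the root of $\tan(\alpha/2)=\alpha$ and is approximately $2.9$, so $3$ is the natural clean constant leaving a genuine margin, and it is this first branch that pins down the overall $3\alpha/n$ rate.
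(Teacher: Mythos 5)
Your proposal is correct, and it follows the same overall strategy as the paper's proof: invoke \cref{cor:rotationDistance} and bound each branch by $3\alpha/n$ separately, reducing the first branch via $\tfrac{4n}{(n+1)^2}\le\tfrac4n$ to the trigonometric inequality $4(1-\cos\alpha)\le 3\alpha$ on $[0,\pi]$ (the paper simply asserts this inequality; your calculus verification via $\phi(\alpha)=3\alpha-4+4\cos\alpha$ and its interior minimum at $\pi-\sin^{-1}(3/4)$ is valid and fills that in). Where you genuinely differ is the second branch, and your route is the cleaner one. The paper first discards the $+1$ in the denominator, bounding $(n+1)-n\sin(\alpha/2)\ge n(1-\sin(\alpha/2))$, and only then inserts the branch hypothesis $\sin(\alpha/2)<\tfrac{n+1}{2n}$; this yields $n\bigl(1-\tfrac{n+1}{2n}\bigr)=\tfrac{n-1}{2}$, i.e.\ a bound of $\tfrac{2\alpha}{n-1}$ (the paper's displayed value $\tfrac{2\alpha}{n}$ contains an algebra slip), which falls below $\tfrac{3\alpha}{n}$ only for $n\ge 3$ and is why the paper must add the caveat ``assumed $n>1$'' and relegate small $n$ to a separate check. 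You instead feed the branch hypothesis directly into the intact denominator, getting $(n+1)-n\sin(\alpha/2)>\tfrac{n+1}{2}$ and hence the bound $\tfrac{4\sin(\alpha/2)}{n+1}\le\tfrac{2\alpha}{n+1}\le\tfrac{3\alpha}{n}$, which holds uniformly for all $n\ge 1$ and eliminates any case analysis. In short: same decomposition and same key inequalities on the first branch, but a tighter and more uniform estimate on the second branch that quietly repairs a blemish in the paper's own chain.
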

\begin{proof}
We consider the first case of \cref{cor:rotationDistance} and observe
\begin{equation}
    \frac{4n(1-\cos(\alpha))}{(n+1)^2} \le \frac{4(1-\cos(\alpha))}{n} \le \frac{3 \alpha}{n}
\end{equation}
where both inequalities hold for $\alpha \in [0,\pi]$.

In the other case of \cref{cor:rotationDistance}, we observe
\begin{align} 
    \frac{2}{(n+1)\csc(\alpha/2)-n} &\le \frac{2 \sin \frac{\alpha}{2}}{n\left(1-\sin \frac{\alpha}{2}\right)} \\
    & \leq \frac{\alpha}{n\left(1-\sin \frac{\alpha}{2}\right)} \\
    &\le \frac{\alpha}{n(1-\frac{n+1}{2n})} \\
    &= \frac{2 \alpha}{n}\; ,
\end{align}
where in the third line we inserted the largest value of $\alpha$ in the interval and in the last line assumed $n>1$.
The case of $n=1$ can be treated separately to show the same bound applies. 
\end{proof}

\subsection{Processors that access the program register sequentially}
\AP Although not originally proposed for the purpose of approximate programming of unitary operations,
\textcite{Lloyd2014} proposed an algorithm that can be used to approximately program a rotation~\cite{Kimmel2017}.
Decomposing the program register $P$ into $P_{1}\cdots P_{n}$,
and defining the unitary $\intro*\swaprot_k(\theta) \coloneqq e^{i\theta \text{SWAP}_{SP_k}}$,
this algorithm is given by
\begin{equation}\label{eqn:lmrlmr}
    \mathcal E_{\alpha/n}(\phi) \coloneqq \ptrace*{P}{\swaprot_n\paren*{\frac{\alpha}{n}} \cdots \swaprot_1\paren*{\frac{\alpha}{n}}  \paren*{\vphantom{\frac{\alpha}{n}} \phi\otimes \psi^{\otimes n}} \swaprot_1^\dagger\paren*{\frac{\alpha}{n}} \cdots \swaprot_n^\dagger\paren*{\frac{\alpha}{n}}}
\end{equation}
and has $\opnorm{\mathcal R_{\psi}(\alpha) - \mathcal E_\alpha} = \bigo*{\frac{\alpha^{2}}{n}}$,
according to additivity of the errors incurred at each step~\cite{Kimmel2017}.
We generalize to allow for an adjustable angle $\theta_i$,
for $i \in [n]$,
in the $\swaprot_i$ operators and denote it by $\mathcal E_\theta$.
We can simplify this circuit to only act on two registers at a time,
resulting in
\begin{equation}\label{eq:lmrAlgorithm}
	\begin{quantikz}
		\lstick{$\ket{\phi}$}           & \gate[2]{\swaprot_1(\theta_1)} &&& \gate[2]{\swaprot_2(\theta_2)} & & \ \ldots \ & & \gate[2]{\swaprot_n(\theta_n)} &  \\
		\lstick{\ket{\psi}} && \trash{}         &  \lstick{\ket{\psi}}\wireoverride{n}                      && \trash{} & \ \ldots \ \wireoverride{n}  &  \lstick{\ket{\psi}}\wireoverride{n}                                       && \trash{}
	\end{quantikz}
\end{equation}
Here we write a downward arrow to indicate tracing out the register
and follow it by initializing the state $\ket{\psi}$ in the same register.
Therefore, $\mathcal E_\theta$ may be considered a sequentially programmed approximate rotation
since it has access to only one copy of $\ket\psi$ at a time.

To illustrate the relationship between $\mathcal E_{\alpha/n}$ and the optimal algorithm among unitary elements of $\mathbb{C}[S_{n+1}]$ shown in \cref{fig:diamond2}, we consider the special case $n=1$ when the program is a single copy of the state $\psi$.
In this case, any unitary $V$ in the linear combination of permutations $\mathbb C[S_2]$ has the form
\begin{align}
    V = e^{i (\theta/2)\text{SWAP}},
\end{align}
up to a global phase. The half-angle convention shows that $V$ is implemented by the algorithm $\apxRef_\theta$ in Fig.\ref{fig:algorithm}.
We numerically optimize $\theta$ for each $\alpha$ to obtain $\theta_*(\alpha)$, the red curve in \cref{fig:diamond2}.
When $n=1$ it is impossible to implement a rotation with a large angle $\alpha$ with small error.
Thus, the interesting (i.e., low error) regime for $n=1$ is $\alpha \ll 1$.
In this regime, we have that $\theta_*(\alpha) \approx 2\alpha$ and thus
\begin{align}
      \swaprot_1(\theta_*(\alpha)) = e^{i \alpha \text{SWAP}} \; 
\end{align}
is optimal to leading order in $\alpha$.
Hence, we can interpret $\mathcal E_{\alpha/n}$ as attempting to implement $e^{i \alpha \psi /n}$ with minimal diamond distance using a single copy program and repeating this process $n$ times.

We now show that $\mathcal E_\theta$ is an "approximate reflection channel".

\begin{lemma}\label{lem:lmrApxRef}
    The algorithm $\mathcal E_\theta(\phi)$ has the form $\mathcal E_\theta(\phi) = \mathrm{tr}_{P_{1}\cdots P_{n}}V\phi \otimes \psi^{\otimes n}V^{\dagger}$, where $V = \sum_{\ell =0}^n c_\ell \cyclicPermutation^\ell$ is an "approximate reflection channel" with coefficients
    \begin{align}
        c_0 &= \prod_{j=1}^{n} \cos\theta_j \\
        c_1 &= \left(\prod_{k=2}^n e^{i \theta_k}\right) (i \sin\theta_1 ) \\
        c_l &= \left(\prod_{k=l+1}^n e^{i \theta_k}\right) (i \sin\theta_l ) \left(\prod_{j=1}^{l-1} \cos\theta_j \right), & \text{for } 2\le \ell \le n-1 \\
        c_{n}&= i\sin\theta_{n}\left(\prod_{j=1}^{n-1} \cos\theta_j \right).
    \end{align}
\end{lemma}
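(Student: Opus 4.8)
The plan is to expand the sequential circuit \cref{eq:lmrAlgorithm} into the group algebra $\mathbb C[S_{n+1}]$ and then read off the cyclic coefficients via \cref{lem:cylicPermutationAlgebra}. Since $\text{SWAP}_{SP_k}^2 = \idm$, each factor splits as
\begin{equation}
  \swaprot_k(\theta_k) = e^{i\theta_k \text{SWAP}_{SP_k}} = \cos\theta_k\, \idm + i\sin\theta_k\, \text{SWAP}_{SP_k}.
\end{equation}
Setting $V \coloneqq \swaprot_n(\theta_n)\cdots \swaprot_1(\theta_1)$ and expanding, I would index the resulting terms by the subsets $T = \{k_1 < \cdots < k_m\}$ of $\{1,\dots,n\}$ recording which factors contribute their SWAP term, giving
\begin{equation}
  V = \sum_{T} \paren*{\prod_{k \in T} i \sin\theta_k} \paren*{\prod_{k \notin T} \cos\theta_k}\, \text{SWAP}_{SP_{k_m}} \cdots \text{SWAP}_{SP_{k_1}}.
\end{equation}
This already exhibits $V$ as a unitary element of $\mathbb C[S_{n+1}]$ (a product of unitaries, each a linear combination of the identity and a transposition), so $\mathcal E_\theta = \apxRefCh_{\psi, V}$ is an approximate reflection channel in the sense of \cref{eqn:appxreflchan}.

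The combinatorial core is to identify each ordered product of transpositions as a single cycle. Labeling the registers $0 = S,\, 1 = P_1,\, \dots,\, n = P_n$ so that $\text{SWAP}_{SP_k}$ is the transposition $(0\; k)$, I would track the image of each element to show
\begin{equation}
  \text{SWAP}_{SP_{k_m}} \cdots \text{SWAP}_{SP_{k_1}} = (0\; k_1\; k_2\; \cdots\; k_m),
\end{equation}
the cycle sending $0 \mapsto k_1 \mapsto \cdots \mapsto k_m \mapsto 0$ and fixing every other register (the empty subset gives the identity). The point is that all transpositions share register $S$, so each newly applied SWAP ``passes the token'' from $S$ to the next selected program register. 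Thus $V = \sum_T \lambda_T\, (0\;k_1\;\cdots\;k_m)$ with $\lambda_T$ the coefficients above.

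To pass to the cyclic coefficients $c_\ell$, I would invoke \cref{lem:cylicPermutationAlgebra}: the channel depends on $V$ only through the coset sums $c_\ell = \sum_{h \in H} \lambda_{\cyclicPermutation^\ell h}$, where $H \cong S_n$ is the stabilizer of register $S$, and the left coset $\cyclicPermutation^\ell H$ is characterized by the image of $S$. Since $(0\; k_1\; \cdots\; k_m)$ sends $S \mapsto P_{k_1}$, it lies in the coset indexed by $\ell = k_1$, the smallest selected index (matching the paper's orientation of $\cyclicPermutation$). Hence $c_\ell$ collects exactly the subsets whose minimum is $\ell$, i.e.\ those containing $\ell$, avoiding $1,\dots,\ell-1$, and with arbitrary tail in $\{\ell+1,\dots,n\}$:
\begin{equation}
  c_\ell = \paren*{\prod_{j=1}^{\ell-1}\cos\theta_j}\, (i\sin\theta_\ell) \sum_{T' \subseteq \{\ell+1,\dots,n\}}\; \prod_{k \in T'} i\sin\theta_k \prod_{k \notin T'} \cos\theta_k .
\end{equation}
The inner sum factorizes as $\prod_{k=\ell+1}^n (\cos\theta_k + i\sin\theta_k) = \prod_{k=\ell+1}^n e^{i\theta_k}$, yielding the stated closed forms, with $c_0 = \prod_{j=1}^n \cos\theta_j$ coming from the empty subset and the endpoint cases $\ell=1,n$ specializing correctly.

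The main obstacle is getting the two bookkeeping steps exactly right: first, that the time-ordered product of transpositions $(0\,k_j)$ collapses to the single cycle $(0\,k_1\cdots k_m)$ — the ordering and the shared register $S$ are both essential — and second, the coset identification, namely that this cycle sits in $\cyclicPermutation^\ell H$ precisely for $\ell = k_1$. Once these are pinned down, the remaining computation is a routine binomial factorization into $\prod e^{i\theta_k}$, so I expect no further difficulty.
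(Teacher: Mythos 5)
Your proof is correct; it reaches the stated coefficients by a more algebraic route than the paper's, which is worth contrasting. The paper never manipulates group elements at all: it applies $\prod_k\paren{\cos\theta_k\,\idm + i\sin\theta_k\,\mathrm{SWAP}_{SP_k}}$ directly to $\ket{\phi}\ket{\psi}^{\otimes n}$ and observes that once the first (smallest-index) selected SWAP has moved $\ket{\phi}$ into a program register, every later SWAP exchanges two copies of $\ket{\psi}$ and so acts trivially; hence each later factor contributes just the scalar $e^{i\theta_k}$, and the $c_\ell$ are read off as the coefficients of $\ket{\psi}^{\otimes\ell}\ket{\phi}\ket{\psi}^{\otimes(n-\ell)} = \cyclicPermutation^{\ell}\ket{\phi}\ket{\psi}^{\otimes n}$, with the tail factor $\prod_{k>\ell}e^{i\theta_k}$ appearing for free, no resummation over subsets needed. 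You instead stay at the operator level: you prove the cycle identity $\mathrm{SWAP}_{SP_{k_m}}\cdots\mathrm{SWAP}_{SP_{k_1}} = (0\;k_1\;\cdots\;k_m)$, place that cycle in the left coset $\cyclicPermutation^{k_1}H$ via the image of register $S$, collapse to coset sums using \cref{lem:cylicPermutationAlgebra}, and then perform the binomial resummation over tails $T'\subseteq\{\ell+1,\dots,n\}$ by hand. Both mechanisms ultimately rest on the permutation invariance of $\psi^{\otimes n}$, but yours buys a state-independent cycle decomposition of $V$ in $\mathbb{C}[S_{n+1}]$, which makes it immediate that $\mathcal E_\theta$ fits the definition in \cref{eqn:appxreflchan} (a product of unitary elements of $\mathbb{C}[S_{n+1}]$ is such an element), and---by routing through \cref{lem:cylicPermutationAlgebra}---it keeps explicit that the conclusion is agreement of \emph{channels}, not the operator identity $V=\sum_\ell c_\ell\cyclicPermutation^\ell$ (which would be false in general: that combination need not be unitary). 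The price is the extra combinatorial bookkeeping, which the paper's state-level trick avoids entirely. Your caution about the orientation of $\cyclicPermutation$ is also the right flag to raise: the convention under which $\cyclicPermutation^{\ell}$ carries the content of $S$ to $P_\ell$ is the one consistent with the stated coefficients, and it is exactly the convention the paper fixes implicitly when reading off $c_\ell$ from $\ket{\psi}^{\otimes\ell}\ket{\phi}\ket{\psi}^{\otimes(n-\ell)}$.
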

\begin{proof}
We start by observing that
\begin{align}
    \prod_{k=1}^n  e^{i \theta_k  \text{SWAP}_{S P_k}} = \prod_{k=1}^n \left( \cos(\theta_k) \mathbf{1} + i \sin(\theta_k) \text{SWAP}_{S P_k} \right)\; .
\end{align}
As can be seen from the circuit, only the first SWAP operator acts nontrivially on the state. All the following SWAP operators leave the state invariant as they SWAP two identical states $\ket{\psi}$.
Thus, we can write the action of this operator on this state as
\begin{align}
    \prod_{k=1}^n  &e^{i \theta_k \text{SWAP}_{S P_k}}\ket{\phi}_{S} \otimes \ket{\psi}^{\otimes n}_{P} \\
    &= \prod_{k=1}^n \left( \cos(\theta_k) \mathbf{1} + i \sin(\theta_k) \text{SWAP}_{S P_k} \right) \ket{\phi}_{S} \otimes \ket{\psi}^{\otimes n}_{P} \\
    &= \left(\prod_{j=1}^{n} \cos(\theta_j) \right) \ket{\phi}_{S} \ket{\psi}^{\otimes n}_{P} +  \left(\prod_{k=2}^n e^{i \theta_k}\right) (i \sin(\theta_1))  \ket{\psi}_{S} \ket{\phi}_{P_{1}} \ket{\psi}^{\otimes n-1}_{P\setminus P_{1}} \nonumber \\
    &\quad +  \sum_{\ell=2}^{n-1} \left(\prod_{k=\ell+1}^n e^{i \theta_k}\right) (i \sin(\theta_\ell)) \left(\prod_{j=1}^{\ell-1} \cos(\theta_j) \right) \ket{\psi}^{\otimes \ell}_{SP_{1}\cdots P_{\ell-1}} \ket{\phi}_{P_{\ell}} \ket{\psi}^{\otimes n-\ell}_{P_{\ell+1}\cdots P_{n}}\nonumber \\
    &\quad + i \sin(\theta_n) \left(\prod_{j=1}^{n-1} \cos(\theta_j) \right) \ket{\psi}^{\otimes n-1}_{SP_{1}\cdots P_{n-1}}\ket{\phi}_{P_{n}}.
\end{align}
The parameters $c_\ell$ can be read from the above equation. The Fourier component $\tilde{c}_{0} = e^{i\sum_{j=1}^{n}\theta_{j}}$ is obtained from the sum of the $c_{\ell}$.
\end{proof}

We show that $\mathcal E_\theta$,
which has sequential access to the program copies,
necessarily has a larger error than the optimal algorithm that has access to all program copies simultaneously.
\begin{theorem}
    It holds that
    \begin{equation}
        \min_{V \in \mathbb C[S_{n+1}]} \opnorm{\mathcal R_\psi(\alpha) - \apxRefCh_{\psi, V}} < \min_{\theta \in [0,2\pi)^n} \opnorm{\mathcal R_\psi(\alpha) - \mathcal E_\theta}.
    \end{equation}
\end{theorem}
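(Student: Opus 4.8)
The plan is to show that the sequential processor $\mathcal E_\theta$ is \emph{strictly} suboptimal inside the class of approximate reflection channels, for $n\ge 2$ and $\alpha\in(0,\pi]$ (for $n=1$ the two sides coincide, as observed above, and for $\alpha=0$ both vanish, so strictness genuinely requires these restrictions). The ``$\le$'' direction is immediate: by \cref{lem:lmrApxRef}, $\mathcal E_\theta = \apxRefCh_{\psi,V_\theta}$ for a unitary element $V_\theta$ of the cyclic subalgebra, so the sequentially programmable channels form a subfamily of those minimized over on the left. Everything therefore reduces to upgrading the inequality to a strict one.

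My first step is to record that, by \cref{thm:rotationBound} (through \cref{lem:normPOptimization}), the distance $\opnorm{\rotChannel_\psi(\alpha)-\apxRefCh_{\psi,V}}$ depends on $V=\sum_\ell c_\ell \cyclicPermutation^\ell$ only through the single complex number $\zeta \coloneqq \tilde c_0 \overline{c_0}$, since $\abs{c_0}=\abs{\zeta}$ (as $\abs{\tilde c_0}=1$ by \cref{lem:cyclicCoefficients}) and the distance is a function of $\abs{c_0}^2$ and $\abs{\zeta-e^{i\alpha}}$ alone; write $D(\zeta)$ for this piecewise function. Using $c_0=\tfrac1{n+1}\sum_k \tilde c_k$ and $\abs{\tilde c_k}=1$, a short computation gives $\zeta=\tfrac{1+nw}{n+1}$ with $w\coloneqq\tilde c_0\,\overline{z}$, $z\coloneqq\tfrac1n\sum_{k\ge1}\tilde c_k$; hence, as $V$ ranges over all unitary elements of $\mathbb C[S_{n+1}]$ (equivalently of the cyclic subalgebra, by \cref{lem:cylicPermutationAlgebra}), $\zeta$ ranges over the closed disk $\Delta$ of radius $\tfrac{n}{n+1}$ centered at $\tfrac1{n+1}$. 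The boundary $\partial\Delta$, i.e.\ $r\coloneqq\abs{w}=\abs{z}=1$, is reached precisely when $\tilde c_1=\dots=\tilde c_n$, equivalently $c_1=\dots=c_n$ — this is exactly the $r=1$ situation of \cref{thm:optimalDistance}. Thus $\min_{V}\opnorm{\rotChannel_\psi(\alpha)-\apxRefCh_{\psi,V}}=\min_{\zeta\in\Delta}D(\zeta)$.

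The main technical step, and the principal obstacle, is to show that for $\alpha\in(0,\pi]$ the minimum of $D$ over $\Delta$ is attained \emph{only} on $\partial\Delta$, at a point $\zeta^\ast\neq 1$. This is the general-$\alpha$ analogue of the monotonicity proved inside \cref{thm:optimalDistance} for $\alpha=\pi$: writing $\zeta$ in terms of the radial parameter $r$ and an angular parameter, one minimizes over the angle for fixed $r$ and shows the resulting profile is strictly decreasing in $r$, forcing $r=1$; that $\zeta^\ast\neq 1$ is clear since $\zeta=1$ corresponds to the identity channel with distance $2\sin\tfrac\alpha2>0$. I expect this to be the hard part, precisely because (as remarked after \cref{thm:rotationBound}) the closed-form optimization is intractable for $\alpha\neq\pi$; crucially, though, one needs only strict radial monotonicity of the angle-optimized distance, not an explicit formula for $\zeta^\ast$.

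The argument then closes with an elementary computation on the coefficients of \cref{lem:lmrApxRef}. Using the uniform expression $c_\ell=\bigl(\prod_{k=\ell+1}^n e^{i\theta_k}\bigr)(i\sin\theta_\ell)\bigl(\prod_{j=1}^{\ell-1}\cos\theta_j\bigr)$, the ratio of consecutive coefficients is $c_{\ell+1}/c_\ell = e^{-i\theta_{\ell+1}}\sin\theta_{\ell+1}\cos\theta_\ell/\sin\theta_\ell$, and demanding $c_{\ell+1}=c_\ell$ forces the imaginary part $\sin^2\theta_{\ell+1}\cos\theta_\ell=0$, which cascades to $c_1=\dots=c_n=0$ (the degenerate case $V_\theta=c_0\idm$, the identity channel). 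Hence every non-degenerate sequential algorithm has $c_1,\dots,c_n$ not all equal, so its $\zeta_\theta$ lies strictly inside $\Delta$ ($r<1$). Since $\theta\mapsto\opnorm{\rotChannel_\psi(\alpha)-\mathcal E_\theta}$ is continuous and $2\pi$-periodic in each argument, its minimum is attained on the compact torus; the minimizer cannot be the identity (which is beaten by, e.g., $\mathcal E_{\alpha/n}$), so the optimal $\zeta_{\theta^\ast}$ is interior. As the minimum of $D$ over $\Delta$ is attained only on $\partial\Delta$, every interior point strictly exceeds it, giving $\opnorm{\rotChannel_\psi(\alpha)-\mathcal E_{\theta^\ast}}=D(\zeta_{\theta^\ast})>\min_{\zeta\in\Delta}D(\zeta)=\min_{V}\opnorm{\rotChannel_\psi(\alpha)-\apxRefCh_{\psi,V}}$, the desired strict inequality.
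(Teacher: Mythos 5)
Your proposal is essentially the paper's own proof, written out in full. The paper's argument is exactly your skeleton compressed into two lines: it invokes \cref{thm:optimalDistance} for the claim that the minimum over unitary $V \in \mathbb{C}[S_{n+1}]$ is attained \emph{only} when $c_\ell = c_{\ell'}$ for all $\ell,\ell' \neq 0$ (this is your ``minimum of $D$ attained only on $\partial\Delta$,'' since the boundary of your disk is precisely the locus $\tilde c_1 = \dots = \tilde c_n$), and then notes that the sequential coefficients of \cref{lem:lmrApxRef} satisfy $c_{\ell+1}/c_\ell \neq 1$. Consequently, the step you flag as the ``hard part''---strict radial monotonicity of the angle-optimized distance for general $\alpha$---is exactly the step the paper does not prove either: \cref{thm:optimalDistance} covers only $\alpha = \pi$, and the paper cites it for arbitrary $\alpha$ despite stating, right after \cref{thm:rotationBound}, that the optimization could not be carried out for $\alpha \neq \pi$. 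So your proposal and the paper's proof are both complete at $\alpha = \pi$ and share the identical gap for $\alpha \in (0,\pi)$; you have not introduced a new one, you have only made the existing one visible.

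Your extra care also repairs real oversights in the paper's version. First, the paper's ratio argument is incomplete as stated: the ratio is undefined when $\sin\theta_\ell = 0$, and equality of all tail coefficients \emph{is} achievable in the degenerate case $c_1 = \dots = c_n = 0$, i.e.\ the identity channel, whose $\zeta = 1$ lies on $\partial\Delta$; your cascade argument plus the requirement $\zeta^* \neq 1$ handles exactly this case, which the paper silently skips. Second, you correctly note that for $n=1$ the two families coincide (so the strict inequality, which the paper asserts with no restriction on $n$, actually fails there) and that $\alpha = 0$ must be excluded; and you make the compactness/attainment point explicit, which is needed to pass from ``the sequential family contains no optimizer'' to ``its minimum is strictly larger.'' (Incidentally, your phase $e^{-i\theta_{\ell+1}}$ in the ratio is the correct one; the paper's $e^{-i\theta_\ell}$ is a typo.) One small patch to your last step: ``$\zeta^* \neq 1$ is clear'' requires exhibiting an algorithm beating the identity channel's distance $2\sin(\alpha/2)$; \cref{cor:rotationDistance} (or \cref{lem:linearBound}) provides such a witness for every $n \ge 2$ and $\alpha \in (0,\pi]$, after which the point is indeed clear.
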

\begin{proof}
\cref{thm:optimalDistance} shows that the optimal choice of $c_l$'s is one in which $c_l=c_{l'}$ for $l,l'\neq 0$.
It is easy to see that this cannot be achieved in this setting of sequential program use since
\begin{equation}
    \frac{c_{l+1}}{c_l} = e^{-i \theta_l}\frac{\cos(\theta_l)\sin(\theta_{l+1})}{\sin(\theta_l)} \neq 1,
\end{equation}
proving the result.
\end{proof}

Although the error of $\mathcal E_{\frac{\alpha}{n}}$ exhibits asymptotically optimal scaling, there exists an angle $\theta' \le \frac{\alpha}{n}$
that has strictly better performance for any finite $n > 2$.
\begin{theorem}
    Assume $n > 2$
    and let the angle $\theta' \coloneqq \frac{\alpha}{n+\frac{\alpha\sqrt 3}{2}}$,
    then
    \begin{equation}
        \opnorm{\rotChannel_\psi(\alpha) - \mathcal E_{\frac{\alpha}{n}}} - \opnorm{\rotChannel_\psi(\alpha) - \mathcal E_{\theta '}} = \frac{2\sqrt 3 \alpha^3}{n^2} + \bigo*{\frac{1}{n^3}} 
    \end{equation}
\end{theorem}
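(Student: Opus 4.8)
The plan is to reduce to the uniform-angle version of the sequential algorithm and then exploit the exact piecewise error formula already available. Specializing \cref{lem:lmrApxRef} to $\theta_1 = \dots = \theta_n = \theta$ gives $c_0 = \cos^n\theta$ and $\tilde c_0 = \sum_\ell c_\ell = e^{in\theta}$, so $\mathcal E_\theta = \apxRefCh_{\psi,V}$ for a unitary $V \in \mathbb C[S_{n+1}]$ whose only relevant data for the error are $c_0$ and $\tilde c_0$. I would then feed these into \cref{thm:rotationBound} to obtain the single-variable function $D(\theta) \coloneqq \opnorm{\rotChannel_\psi(\alpha) - \mathcal E_\theta}$: in the first regime, where $1 - \abs{c_0}^2 \ge \abs{\tilde c_0\bar c_0 - e^{i\alpha}}$, it equals $2(1-\cos^{2n}\theta)$, and in the complementary regime it is the rational expression of \cref{thm:rotationBound}. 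Note that on $(0,\pi/2)$ the first-regime value is strictly increasing in $\theta$, so within that regime smaller $\theta$ is always better.

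Next I would locate the two points of interest. At $\theta = \alpha/n$ one has $\tilde c_0 = e^{i\alpha}$ exactly, so $\abs{\tilde c_0 \bar c_0 - e^{i\alpha}} = 1 - \cos^n(\alpha/n)$ and the regime condition reduces to $\frac{1}{1 + \cos^n(\alpha/n)} \le 1$, which holds; hence $\mathcal E_{\alpha/n}$ lies strictly in the first regime and $D(\alpha/n) = 2\paren*{1-\cos^{2n}(\alpha/n)}$ exactly. For $\theta'$ I would substitute $\theta = \frac{\alpha}{n+\delta}$ and track the regime boundary $1 - \abs{c_0}^2 = \abs{\tilde c_0 \bar c_0 - e^{i\alpha}}$ as a function of $\delta$; expanding for large $n$ one finds this boundary sits at $\delta = \frac{\alpha\sqrt 3}{2}$, which is precisely the choice defining $\theta'$. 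Thus $\theta'$ is, up to higher-order corrections, the smallest uniform angle still lying in the first regime, which identifies it as (asymptotically) the optimal uniform angle and explains the improvement.

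The remaining work is a careful expansion. Since both points give error $2(1-\cos^{2n}\theta)$ — for $\theta'$ this requires noting that the two branches of \cref{thm:rotationBound} agree to the relevant order at the marginal boundary — it suffices to Taylor expand $2(1-\cos^{2n}\theta)$ with $\theta = \frac{\alpha}{n+\delta}$ to second order in $1/n$. Writing $2n\ln\cos\theta = -\frac{\alpha^2}{n} + \frac{2\delta\alpha^2}{n^2} + \bigo{\frac{1}{n^3}}$ and exponentiating yields $2(1-\cos^{2n}\theta) = \frac{2\alpha^2}{n} - \frac{4\delta\alpha^2}{n^2} - \frac{\alpha^4}{n^2} + \bigo{\frac{1}{n^3}}$. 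Evaluating at $\delta = 0$ and $\delta = \frac{\alpha\sqrt 3}{2}$ and subtracting, the $\frac{2\alpha^2}{n}$ and $\frac{\alpha^4}{n^2}$ terms cancel and only $\frac{4\delta\alpha^2}{n^2} = \frac{2\sqrt 3\,\alpha^3}{n^2}$ survives, giving the claimed difference.

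I expect the main obstacle to be the boundary bookkeeping for $\theta'$: because $\theta'$ is chosen to sit exactly on the leading-order regime boundary, one must check that the branch choice in \cref{thm:rotationBound} does not affect the answer at order $1/n^2$. I would handle this by writing $1-\abs{c_0}^2 = b + \xi$ with $b = \abs{\tilde c_0\bar c_0 - e^{i\alpha}}$ and $\xi$ the signed distance to the boundary, and observing that the two branch values differ by $\frac{2\xi^2}{b-\xi}$; since the boundary condition holds with equality through order $1/n^2$ at $\theta'$, one has $\xi = \bigo{\frac{1}{n^2}}$ and $b = \bigo{\frac{1}{n}}$, so the discrepancy is $\bigo{\frac{1}{n^3}}$ and is absorbed into the error term. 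Everything else is routine, if delicate, asymptotic expansion.
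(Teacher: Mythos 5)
Your proposal is correct and follows essentially the same route as the paper: specialize \cref{lem:lmrApxRef} to uniform angles to get $c_0 = \cos^n\theta$ and $\tilde c_0 = e^{in\theta}$, plug these into the piecewise error formula of \cref{thm:rotationBound}, recognize $\theta'$ as the choice saturating the Domain A/B boundary to leading order, and Taylor expand $2(1-\cos^{2n}\theta)$ at $\theta = \alpha/(n+\delta)$ for $\delta = 0$ and $\delta = \alpha\sqrt{3}/2$. If anything, your explicit bookkeeping of the branch discrepancy $\frac{2\xi^2}{b-\xi} = \bigo{n^{-3}}$ with $\xi = \bigo{n^{-2}}$, $b = \bigtheta{n^{-1}}$ is more careful than the paper's own proof, which states the expansion for $\mathcal E_{\theta'}$ without spelling out which domain $\theta'$ lands in.
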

\begin{proof}
By \cref{lem:lmrApxRef}, we know the parameters $c_0$ and $\tilde c_0$ of $\mathcal E_{\frac{\alpha}{n}}$
in \cref{lem:normPOptimization}
and evaluate its distance to $\mathcal R_\psi(\alpha)$.
The algorithm $\mathcal E_{\frac{\alpha}{n}}$ satisfies $1-\vert c_{0}\vert^{2} > \vert \tilde{c}_{0}\overline{c_{0}} - e^{i\alpha}\vert$, and as a result
the algorithm is in Domain A as defined in proof of \cref{thm:optimalDistance},
so the maximization appearing in the diamond norm is obtained for $p=0$.
This gives the diamond distance
\begin{equation}\label{eq:equirotationDistance}
    \opnorm{\rotChannel_\psi(\alpha) - \mathcal E_{\frac{\alpha}{n}}} = 2\paren*{1-\cos^{2n} \frac{\alpha}{n}} = \frac{2\alpha^{2}}{n}- \frac{\alpha^4}{n^2}+\bigo*{\frac{1}{n^3}}
\end{equation}
by a series expansion at $n\to \infty$.

The error can be further reduced by choosing a smaller $\theta(n)$ with the aim of saturating the inequality $1- \abs{c_{0}}^{2} > \abs{\tilde{c}_{0}\overline{c_{0}} - e^{i\alpha}}$, noting that the condition of equality defines the boundary between Domain A and Domain B in the proof of \cref{thm:optimalDistance}.
By postulating an inverse power law for $\theta(n)$ (thereby neglecting multiplicative logarithmic corrections) and demanding that $\opnorm{\mathcal R_\psi(\alpha) - \mathcal E_{\theta(n)}}\rightarrow 0$ as $n\rightarrow \infty$, one concludes that $\theta(n)= \alpha/(n+x(\alpha))$ is the correct general form, where $x(\alpha)$ is a constant independent of $n$. However, only for $x(\alpha) = {\alpha\sqrt{3}\over 2}$ is the diamond distance asymptotically lower than  $\opnorm{\mathcal R_\psi(\alpha) - \mathcal E_{{\alpha\over n}}}$.
In particular, taking $\theta':= \frac{\alpha}{n+\frac{\alpha\sqrt 3}{2}}$ , the diamond distance is
\begin{equation}
    \opnorm{\rotChannel_\psi(\alpha) - \mathcal E_{\theta'}} = \frac{2\alpha^{2}}
{n}-\frac{\alpha^4+2 \sqrt{3} \alpha^3}{n^2}+ \bigo*{\frac{1}{n^3}}.
\end{equation}
Subtracting this from \cref{eq:equirotationDistance} gives the result.
\end{proof}

\section{Lower bound \label{sec:lower}}
In this section, we lower bound the dimension of "program register" that is required to obtain a target accuracy. 
We denote a program state by $\psi_{R}$ encoding a reflection $R$ for the quantum processor $\mathcal C$.
Let an $\epsilon$-approximate reflection channel be specified by
\begin{equation}
	\mathcal E_{R}(\cdot) \coloneqq \ptrace{P}{\mathcal C(\cdot \otimes \psi_{R})},
\end{equation}
for any channel $\mathcal C$,
if
\begin{equation}
	\frac{1}{2} \opnorm{\mathcal R - \mathcal E_{R}} \le \epsilon
\end{equation}
for all reflections $R$, where $\mathcal R(\cdot) \coloneqq R (\cdot) R$.

\AP The ""von Neumann entropy"" (or just "entropy") of a state $\rho$ is defined as
\begin{equation}
	\entropy{\rho} \coloneqq - \trace{\rho \log_2 \rho}.
\end{equation}
Let the ""Holevo information"" (or just "information") be defined for an ensemble of states $\set{\rho_x, dx}_{x \in X}$
\begin{equation}
	\information{\set{\rho_x, dx}} \coloneqq \entropy*{\int_{x\in X} dx \rho_x} - \int_{x\in X} dx \entropy{\rho_x},
\end{equation}
which equals the "entropy" for a pure state ensemble.
By the data processing inequality, the "information" may only decrease by a quantum channel,
therefore, the dimension of the "program register" will upper bound the "information" of any implementation of a dynamic reflection. 
This is the basic idea behind the lower bound technique described in \cite{Yang2020}, which we adapt to our setting.

In the following, it is useful to parametrize a reflection channel $\mathcal R$ by a corresponding pure state $\psi$ as
\begin{equation}
	\mathcal R_\psi(\cdot) \coloneqq e^{i \pi \ketbra{\psi}{\psi}} (\cdot) e^{i\pi\ketbra{\psi}{\psi}},
\end{equation}
where we assume that the reflection is given by 
\begin{equation}
	e^{i \pi \psi} = \idm - 2 \psi.
\end{equation}
Equivalently, we can parametrize the pure state as $\ket{\psi} = U \ket{d}$ for some unitary $U \in U(d)$. 
The dimension of the program register $d_P$ is defined as the dimension of the support of a uniform ensemble (over reflections) encoded in any program state according to $R\mapsto \Psi_{R}$ \cite{Yang2020} 
\begin{equation}
    d_P \coloneqq |\mathrm{supp}(\set{\Psi_R}_{R \sim \mathrm{Haar}})|
\end{equation}

Given a positive $n \in \mathbb N$,
Schur-Weyl duality tells us that there exists an isometry such that, for the $d$-dimensional Hilbert space $\mathcal H$,
\begin{equation}
	\mathcal H^{\otimes n} \simeq \bigoplus_{\lambda \vdash_d n} W_\lambda \otimes S_{\lambda},
\end{equation}
where $\lambda \vdash_d n$ denotes the set of partitions of $n$ of length at most $d$,
each $W_\lambda$ is an irrep of $U(d)$ indexed by $\lambda$ with dimension $d_\lambda$,
and $S_{\lambda}$ is the multiplicity subspace.
We define $\mathcal T : =\left(\bigoplus_{\lambda \vdash_{d}n}\text{tr}_{S_{\lambda}} \right) \circ \mathcal{U}_{\text{Sch}}$ as the channel that first incorporates the Schur-Weyl isometry
and then discards the multiplicity subsystems $\set{ S_{\lambda}}_\lambda$~\cite{Yang2020}.
The resulting dimension is
\begin{equation}
	d_n \coloneqq \sum_{\lambda \vdash_d n} d_\lambda^2 = \binom{n+d^2-1}{d^2-1}.
\end{equation}

The lower bound method of \cite{Yang2020} can be summarized in the following lemma.

\begin{lemma}\label{lemma:lb}
	The program dimension of any $\epsilon$-approximate reflection channel is lower bounded by
	\begin{equation}
		\log_2 d_P \ge \information{\set{\mathcal{T} \circ \mathcal{R}_{\psi}^{\otimes n}(\Phi_n) }_{\mathrm{Haar}}} -4n\sqrt{2\epsilon}\log_2 d_{n} -1,
	\end{equation}
    where $\Phi_n$ is an arbitrary probe state.
\end{lemma}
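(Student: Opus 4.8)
The plan is to run the information-bottleneck argument of \textcite{Yang2020}, reading $\log_2 d_P$ as a capacity that upper bounds the Holevo information carried by the program register, and then using continuity to transfer that bound from the physically realizable approximate channel to the ideal $n$-fold reflection. Concretely, I would introduce a classical reference system $X$ holding the Haar-random label of $\mathcal R_\psi$ (equivalently $\ket{\psi} = U\ket{d}$ for $U$ Haar on $U(d)$), so that the program becomes a classical--quantum ensemble $\set{\psi_R, dR}$, and the two objects to be compared are the \emph{ideal} family $\set{\mathcal T\circ\mathcal R_\psi^{\otimes n}(\Phi_n)}$ and the \emph{realized} family $\set{\mathcal T\circ\mathcal E_R^{\otimes n}(\Phi_n)}$, both indexed by the same Haar measure.

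First I would establish the capacity bound. The realized ensemble is obtained from the program ensemble by a single $\psi$-independent quantum operation: run the processor $\mathcal C$ on the hard-coded probe $\Phi_n$ and then apply $\mathcal T$. Since this map depends on $R$ only through $\psi_R$, the data-processing inequality for the Holevo information gives
\begin{equation}
    \information{\set{\mathcal T\circ\mathcal E_R^{\otimes n}(\Phi_n)}} \le \information{\set{\psi_R, dR}} \le \entropy*{\int dR\,\psi_R} \le \log_2 d_P ,
\end{equation}
where the middle step uses the fact that the Holevo information of a pure-state ensemble equals the entropy of its average, and the last is the maximum-entropy bound on a $d_P$-dimensional register.

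Next I would control the gap between the two ensembles by continuity. From $\tfrac12\opnorm{\mathcal R_\psi - \mathcal E_R}\le\epsilon$ I would pass to a per-use output fidelity $\ge 1-\epsilon$ via Fuchs--van de Graaf ($1-F\le\tfrac12\opnorm{\mathcal R_\psi - \mathcal E_R}$), convert back to a single-use output trace distance of order $\sqrt{2\epsilon}$, and telescope over the $n$ factors so that, for the fixed probe $\Phi_n$, the ideal and realized output states lie within trace distance $\bigo{n\sqrt{2\epsilon}}$ (monotonicity under the channel $\mathcal T$ only decreases this). Feeding this estimate into the Alicki--Fannes--Winter continuity bound for the Holevo information on the $d_n$-dimensional system produced by $\mathcal T$ yields
\begin{equation}
    \information{\set{\mathcal T\circ\mathcal R_\psi^{\otimes n}(\Phi_n)}} \le \information{\set{\mathcal T\circ\mathcal E_R^{\otimes n}(\Phi_n)}} + 4n\sqrt{2\epsilon}\,\log_2 d_n + 1 ,
\end{equation}
with the additive $1$ absorbing the binary-entropy remainder. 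Chaining the two displays gives the claim.

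The main obstacle is the continuity step, not the capacity step. One must track the constants carefully while converting the diamond bound into a single-copy output fidelity, accumulate the $n$ copies by a telescoping/triangle estimate (this is what produces the linear $n\sqrt{2\epsilon}$ rather than a weaker tensor-fidelity scaling), and invoke the sharp cq-form of Alicki--Fannes--Winter on the post-$\mathcal T$ system so that the dimension factor comes out exactly as $\log_2 d_n$. A secondary subtlety is verifying rigorously that the realized $n$-fold output is a fixed, $\psi$-independent data-processing of the program ensemble, so that the data-processing inequality legitimately applies with the program dimension $d_P$ playing the role of the sole capacity.
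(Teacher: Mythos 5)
Your overall architecture --- a capacity bound on the program register, data processing to a realized output ensemble, then Alicki--Fannes--Winter continuity to swap in the ideal reflections --- is the same as the paper's, which instantiates the chain of inequalities from \textcite{Yang2020}. But your central step is broken as stated: you take the realized family to be $\set{\mathcal T \circ \mathcal E_R^{\otimes n}(\Phi_n)}$ and assert it is obtained from a \emph{single} copy of $\psi_R$ by a fixed, $\psi$-independent operation. It is not. Each use of the processor consumes its program register, since $\mathcal E_R(\cdot) = \ptrace{P}{\mathcal C(\cdot \otimes \psi_R)}$, so implementing the tensor power $\mathcal E_R^{\otimes n}$ requires $n$ copies of $\psi_R$, and no fixed channel can manufacture $\psi_R^{\otimes n}$ from $\psi_R$ (no-cloning). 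Running data processing honestly on your realized family therefore yields only $\information{\set{\mathcal T\circ\mathcal E_R^{\otimes n}(\Phi_n)}} \le \information{\set{\psi_R^{\otimes n}}} \le n \log_2 d_P$, which destroys the lemma: the entire content of the statement is that the bound is $\log_2 d_P$ rather than $n\log_2 d_P$. You flag exactly this issue at the end as a ``secondary subtlety,'' but it is the crux of the proof, and your proposal does not resolve it.

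The paper's proof (via Eqs.~(A16) and (28) of \textcite{Yang2020}) avoids this by replacing the tensor power with the \emph{sequential-reuse} map $\mathcal M_\psi$: the processor is applied $n$ times to the $n$ input factors of $\Phi_n$ while keeping and reusing the same program register, which is a legitimate fixed channel acting on $\psi_R \otimes \Phi_n$, so that $\log_2 d_P \ge \information{\set{\psi_R}} = \information{\set{\psi_R \otimes \Phi_n}} \ge \information{\set{\mathcal M_\psi(\Phi_n)}}$. This change forces a continuity argument different from yours: after the first use the program state is degraded, so the hypothesis $\tfrac12\opnorm{\mathcal R_\psi - \mathcal E_R}\le\epsilon$ no longer applies to later uses, and one cannot simply telescope the diamond-norm bound. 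Instead one exploits that the target channel is \emph{unitary}: on a purification its ideal output is pure, so an output fidelity of at least $1-\epsilon$ forces the joint (output $\otimes$ program) state to be within $O(\sqrt{\epsilon})$ in trace distance of (ideal output) $\otimes$ (program), i.e.\ the program is only gently disturbed; accumulating this over $n$ uses and applying AFW on the $d_n$-dimensional system is what produces the $4n\sqrt{2\epsilon}\log_2 d_n + 1$ term. Note that this, and not your Fuchs--van de Graaf round trip, is the real origin of the $\sqrt{2\epsilon}$: in your fresh-program setup, plain telescoping of the diamond norm would give the \emph{stronger} per-use error $\epsilon$, so deliberately weakening $\epsilon$ to $\sqrt{2\epsilon}$ there would be both unnecessary and insufficient.
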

\begin{proof}
	Follows from~\cite[Supp. Matt.~(28)]{Yang2020} by substituting the $\epsilon$-universal processor with $\mathcal E_{R}$
	and the exact implementation with $\mathcal R_\psi$ in
        \begin{align}
            \log d_P  &\ge \information{\set{\psi_{R}}_{R \sim \mathrm{Haar}}} \\
            &= \information{\set{\psi_{R} \otimes \Phi_n}_{R \sim \mathrm{Haar}}} \\
            &\ge \information{\set{\mathcal{M}_{\psi}(\Phi_n)}_{\mathrm{Haar}}} \\
            &\ge \information{\set{\mathcal{R}_{\psi}^{\otimes n}(\Phi_n) }_{\mathrm{Haar}}} - 4n\sqrt{2\epsilon}\log d_{n} - 1 \\
            &= \information{\set{\mathcal{T}\circ \mathcal{R}_{\psi}^{\otimes n}(\Phi_n) }_{\mathrm{Haar}}} -4n\sqrt{2\epsilon}\log d_{n} -1.
            \label{eqn:lbchain}
        \end{align}
    where $\mathcal{M}_{\psi}(\cdot)$ is defined in~\cite[Equation (A16)]{Yang2020} but for a reflection $\reflection{\psi} = \idm - 2 U \proj{0} U^\dagger$ with Haar random $U$.
\end{proof}

Therefore, in order to find a useful lower bound, we need to choose a probe state $\Phi_n$ that maximizes  $\information{\set{\mathcal{T} \circ \mathcal{R}_{\psi}^{\otimes n}(\Phi_n) }_{\mathrm{Haar}}}$.
Equivalently, we would like to understand the spectrum of the mixed state
\begin{align}
    \rho &:= \int_{\mathrm{Haar}} d \psi \; \mathcal{R}_{\psi}^{\otimes n}(\Phi_n) \\
    &= \int_{\mathrm{Haar}} dU \of{ \of{URU^\dagger}^{\otimes n} \otimes I } \Phi_n \of{ \of{URU^\dagger}^{\otimes n} \otimes I }^{\dagger}.
\end{align}
Indeed, it is easy to see that
\begin{equation}
    \entropy{\rho} = \entropy{\mathcal{T}(\rho)} = \information{\set{\mathcal{T} \circ \mathcal{R}_{\psi}^{\otimes n}(\Phi_n) }_{\mathrm{Haar}}}.
\end{equation}
We need to introduce some representation theoretic lemmas first.

\subsection{Preliminaries}

For the subsequent analysis, note that the irreducible representations of $U(d)$ that occur in the physical space $\mathcal{H} := (\C^d)\xp{n} \otimes (\C^d)\xp{n}$ can be analyzed in several equivalent frameworks, which can be chosen according to the application. 
For example, if one is mainly interested in maintaining the bipartition between two systems of $n$ qudits and applying the same unitary query to both systems, the two-sided Schur--Weyl duality (i.e., Schur--Weyl duality in each system) provides a useful decomposition. 
This approach treats each part as tensor product of $n$ defining irreps of $U(d)$.
By contrast, if the application involves applying a unitary to one system and its complex conjugate to the second system, the basis of mixed Schur--Weyl duality is the most relevant decomposition. 
Another choice would be treating a second part of $\mathcal{H}$ as tensor product of $n$ dual defining irreps and using the dual Schur transform.  
Dual of the defining irrep could also be analyzed in an inefficient way by considering each of them as $(1,...,1,0) \otimes \det^{-1}$ and using the Pieri rule to decompose the whole tensor product. 
By contrast, we treat the dual of the defining irrep by its highest weight as $(0,...,0,-1)$ in each register where it occurs.
We show the isomorphism structure of these equivalent viewpoints below indicating the respective basis transformations unitaries:
\begin{align} \label{eq:sw_dualities}
    \mathcal{H} &\cong^{ U_{\text{Sch}}\otimes U_{\text{Sch}} } 
    \of[\bigg]{\bigoplus_{\lambda \in \Airreps{d}{n,0}} W_{\lambda} \otimes S_{\lambda}} \otimes
    \of[\bigg]{\bigoplus_{\lambda^\prime \in \Airreps{d}{n,0}} W_{\lambda^\prime} \otimes S_{\lambda^\prime}} \\
    &\cong^{ U_{\text{Sch}} \otimes U_{\text{dSch}} } 
    \of[\bigg]{\bigoplus_{\lambda \in \Airreps{d}{n,0}} W_{\lambda} \otimes S_{\lambda}} \otimes
    \of[\bigg]{\bigoplus_{\overline{\lambda}^\prime \in \Airreps{d}{0,n}} W_{\overline{\lambda}^\prime} \otimes S_{\lambda^\prime}} \\
    &\cong^{ U_{\text{mSch}} }  \bigoplus_{\nu \in \Airreps{d}{n,n}} W_{\nu} \otimes B_{\nu},
    \label{eqn:swlast}
\end{align}
where $S_{\lambda}$ is an irrep of the symmetric group $\mathrm{S}_n$ and $B_{\nu}$ is an irrep of the partially transposed permutation matrix algebra, which we denote by $\A{d}{n,n}$ \cite{go,gbo}. 
In particular $\A{d}{n,0}$ is a standard tensor representation of the symmetric group algebra $\C[\mathrm{S}_n]$ on $(\C^{d})\xp{n}$. 
$U_{\text{Sch}}$ is the standard Schur transform, $U_{\text{dSch}}$ is the dual Schur transform, and $U_{\text{mSch}}$ is the mixed Schur transform.

The last isomorphism \cref{eqn:swlast} can be understood by taking account of the structure of the multiplicity spaces at each step as follows:
\begin{align}
    \of[\bigg]{\bigoplus_{\lambda \in \Airreps{d}{n,0}} & W_{\lambda} \otimes S_{\lambda}} \otimes
    \of[\bigg]{\bigoplus_{\overline{\lambda}^\prime \in \Airreps{d}{0,n}} W_{\overline{\lambda}^\prime} \otimes S_{\lambda^\prime}} \\
    &\cong \bigoplus_{\lambda, \lambda^\prime \in \Airreps{d}{n,0}} \of*{W_{\lambda} \otimes W_{\overline{\lambda}^\prime}} \otimes \of*{S_{\lambda} \otimes S_{\lambda^\prime}} \\
    &\cong \bigoplus_{\lambda, \lambda^\prime \in \Airreps{d}{n,0}} \of[\bigg]{ \bigoplus_{\nu \in \Airreps{d}{n,n}} W_{\nu} \otimes \C^{r^\nu_{\lambda \overline{\lambda}^\prime}}} \otimes \of*{S_{\lambda} \otimes S_{\lambda^\prime}} \\
    &\cong \bigoplus_{\nu \in \Airreps{d}{n,n}} W_{\nu} \otimes \of[\bigg]{\bigoplus_{\lambda, \lambda^\prime \in \Airreps{d}{n,0}} \C^{r^\nu_{\lambda \overline{\lambda}^\prime}} \otimes S_{\lambda} \otimes S_{\lambda^\prime}} \\
    &\cong \bigoplus_{\nu \in \Airreps{d}{n,n}} W_{\nu} \otimes B_{\nu},
    \label{eqn:detailedtransforms}
\end{align}
where $r^\nu_{\lambda \overline{\lambda}^\prime}$ denote the multiplicity of the irrep $\nu$ inside the tensor product $\lambda \otimes \overline{\lambda}^\prime$.
Now we need to state several useful properties of the Gelfand--Tsetlin basis for the Weyl modules $W_\lambda$.

\begin{lemma}[Equations (13), (14) of \cite{bairdConj1964}] \label{lemma:conj_GT_basis}
    Let $W_\lambda$ be a Weyl module for the unitary group $U(d)$ with $\lambda \in \Airreps{d}{n,0}$ equipped with specified Gelfand-Tsetlin basis $\set{\ket{L} \given L \in \GT{\lambda}}$.
    Consider now the dual Weyl module $W_{\bar{\lambda}}, \, \bar{\lambda} \in \Airreps{d}{0,m}$ with the action on the same vector space $W_\lambda$.
    Then the GT basis $\set{\ket{\bar{L}} \given \bar{L} \in \GT{\bar{\lambda}}}$ of $W_{\bar{\lambda}}$ is related to the GT basis of $W_\lambda$ via
    \begin{equation}
        \ket{L} = (-1)^{\varphi(L)}\ket{\bar{L}},
    \end{equation}
    where the $\varphi(L)$ is defined as follows
    \begin{align}
        \varphi(L) &:= \varphi_{d-1}(L) - \varphi_{d-1}(L_\lambda), \\
        \varphi_k(L) &:= \sum_{j = 1}^{k} \sum_{i = 1}^{j} L_{i,j} \quad \text{for any} \quad k \in [d] \\ 
        (L_\lambda)_{i,j} &:= \lambda_i \quad \text{for every} \quad j \in [d], \, i \in [j].\label{eqn:maxgtpat}
    \end{align}
\end{lemma}

\begin{lemma}[Equation (5), page 363 of \cite{vilenkin}]\label{lemma:reflection_GT_basis}
    Let $R \in U(d)$ be the reflection operator $R = I - 2 \proj{d}$ where $\ket{d}$ is a basis state of the defining representation of $U(d)$ which is annihilated by all operators $E_{ij}:=\ket{i}\bra{j}$, $i>j$. Then for any irrep specified by a highest weight $\lambda$, the representation matrix $R_\lambda$ in the Gelfand--Tsetlin basis looks like
    \begin{equation}
        R_\lambda = \sum_{L \in \GT{\lambda}} (-1)^{w_d(L)} \proj{L},
    \end{equation}
    where $w(L) \defeq (w_1(L),\dotsc,w_d(L))$ is the weight of the Gelfand--Tsetlin pattern $L$, defined for every $j \in [d]$ as
    \begin{equation}
        w_j(L) \defeq \sum_{i=1}^{j} L_{i,j} - \sum_{i=1}^{j-1} L_{i,j-1},
    \end{equation}
    with the convention $w_1(L) \defeq L_{1,1}$.
\end{lemma}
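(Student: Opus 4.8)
The plan is to exploit the fact that the reflection $R$ is not a generic unitary but an element of the maximal torus of $U(d)$, so that it must act \emph{diagonally} in any weight basis. First I would rewrite $R$ as a one-parameter diagonal subgroup evaluated at the angle $\pi$. Writing $E_{dd} = \proj{d}$ for the $d$-th diagonal (Cartan) generator, the identity $e^{i\pi \proj{d}} = \idm + (e^{i\pi}-1)\proj{d} = \idm - 2\proj{d}$ shows that $R = e^{i\pi E_{dd}} = \diag(1,\dots,1,-1)$. The hypothesis that $\ket{d}$ is annihilated by every lowering operator $E_{ij}$, $i>j$, merely certifies that $\ket{d}$ is the lowest-weight vector of the defining representation, so that $E_{dd}$ is a bona fide Cartan element in the chosen convention.

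Next I would invoke the defining structural property of the Gelfand--Tsetlin basis: each pattern vector $\ket{L}$, $L \in \GT{\lambda}$, is a simultaneous eigenvector of all the diagonal Cartan generators, with
\begin{equation}
    E_{jj}\ket{L} = w_j(L)\ket{L}, \qquad j \in [d],
\end{equation}
where $w_j(L)$ is exactly the weight component written in the statement. Consequently $E_{dd}$ is represented on $W_\lambda$ by the diagonal operator $\sum_{L \in \GT{\lambda}} w_d(L)\proj{L}$. Since the $U(d)$-action on $W_\lambda$ is obtained by exponentiating the Lie-algebra action and $R = e^{i\pi E_{dd}}$ lies in the image of the exponential map, functional calculus on this diagonal operator yields
\begin{align}
    R_\lambda = e^{i\pi \sum_{L} w_d(L)\proj{L}}
    &= \sum_{L \in \GT{\lambda}} e^{i\pi w_d(L)}\proj{L} \\
    &= \sum_{L \in \GT{\lambda}} (-1)^{w_d(L)}\proj{L},
\end{align}
where the last equality uses $w_d(L) \in \mathbb{Z}$.

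The only genuinely nontrivial step is the second one: that the GT vectors are weight vectors carrying precisely the weight $w(L)$ displayed in the lemma, and in particular that $E_{dd}$ selects the $d$-th component $w_d(L)$ rather than some other combination. This is standard in the theory of the GT basis but is convention-sensitive — orderings and signs must line up so that the reflection about the \emph{lowest}-weight axis $\ket{d}$ couples to $w_d$ — and it is exactly the content of the cited Equation~(5) of \cite{vilenkin}. I would therefore either cite that weight formula directly or, to be self-contained, verify it against the recursive branching $U(d)\downarrow U(d-1)$ defining the GT pattern, checking that $\sum_{i=1}^{d} L_{i,d} - \sum_{i=1}^{d-1} L_{i,d-1}$ is the $E_{dd}$-eigenvalue. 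Everything else reduces to elementary functional calculus on a diagonal matrix.
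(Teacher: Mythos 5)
Your proof is correct, but there is nothing in the paper to compare it against line by line: the paper does not prove this lemma at all, it imports it verbatim as Equation~(5), page~363 of the cited Vilenkin--Klimyk volume. Your derivation is the natural way to make that citation self-contained, and each step checks out: since $\ket{d}\bra{d}$ is a projector, $e^{i\pi \ket{d}\bra{d}} = \idm - 2\ket{d}\bra{d} = R$, so $R$ is an element of the maximal torus; the Gelfand--Tsetlin basis is by construction a weight basis, with $E_{jj}\ket{L} = w_j(L)\ket{L}$ where $w_j(L)$ is exactly the row-sum difference in the statement; and because the pattern entries, hence the weights, are integers, exponentiating the diagonal operator $\sum_{L \in \GT{\lambda}} w_d(L)\,\ket{L}\bra{L}$ by functional calculus gives $R_\lambda = \sum_{L \in \GT{\lambda}} (-1)^{w_d(L)}\ket{L}\bra{L}$. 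You are also right to flag the convention-sensitivity as the real content: the hypothesis that $\ket{d}$ is annihilated by all $E_{ij}$ with $i>j$ pins it down as the lowest-weight vector, i.e., the coordinate removed at the first branching step $U(d)\supset U(d-1)$, and this is what makes $E_{dd}$ (hence $w_d$, rather than some other weight component) the Cartan element to which the reflection couples. The one caveat is logical rather than mathematical: the weight property of GT vectors that you invoke is precisely the content of the equation the paper cites, so your argument does not eliminate the dependence on that reference unless you also carry out the verification you sketch at the end, deriving the $E_{dd}$-eigenvalue from the recursive $U(d)\supset U(d-1)\supset\dotsb\supset U(1)$ construction of the basis.
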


We also need to recall several properties of general Clebsch--Gordan (CG) coefficients.

\begin{lemma}\label{lemma:magic}
    Let $d \geq 2$. Then for any irrep $\lambda \in \Airreps{d}{n,0}$ and any $\nu \in \Airreps{d}{n,n}$, $N \in \GT{\nu}$, $t \in [r_{\lambda,\nu}]$  where $r_{\lambda,\nu} \defeq r^{\nu}_{\lambda,\bar{\lambda}}$, we have
    \begin{equation}
        \sum_{L \in \GT{\lambda}} (-1)^{\varphi(L)} C^{N,t}_{L \bar{L}} = \delta_{\nu,\varnothing} \delta_{N,0} \sqrt{d_\lambda}.
    \end{equation}
    Note that $r_{\lambda,\varnothing} = 1$ always, and there is only one trivial GT pattern $N = 0$ for $\nu = \varnothing$.
\end{lemma}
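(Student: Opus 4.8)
The plan is to recognize the sum $\sum_{L \in \GT{\lambda}} (-1)^{\varphi(L)} C^{N,t}_{L\bar{L}}$ as an overlap with the unique $U(d)$-invariant vector of $W_\lambda \otimes W_{\overline{\lambda}}$, and then to read off both delta functions and the normalization from that identification. Concretely, I would introduce the vector
\[
    \ket{\Omega} \coloneqq \sum_{L \in \GT{\lambda}} (-1)^{\varphi(L)}\, \ket{L}_\lambda \otimes \ket{\bar{L}}_{\overline{\lambda}} \in W_\lambda \otimes W_{\overline{\lambda}},
\]
where for each GT pattern $L$ of $\lambda$ the pattern $\bar{L}$ of $\overline{\lambda}$ is the one paired with it in \cref{lemma:conj_GT_basis}. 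The first claim is that $\ket{\Omega}$ is $U(d)$-invariant and in fact spans the one-dimensional trivial isotypic component of $W_\lambda \otimes W_{\overline{\lambda}}$.

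To establish invariance I would use the standard isomorphism $\mathrm{End}(W_\lambda) \cong W_\lambda \otimes W_\lambda^{*} \cong W_\lambda \otimes W_{\overline{\lambda}}$ of $U(d)$-representations, under which the identity endomorphism $\mathrm{id}_{W_\lambda} = \sum_L \ketbra{L}{L}$ is sent to the unique invariant vector, its invariance being Schur's lemma applied to self-intertwiners of the irreducible $W_\lambda$. The content of \cref{lemma:conj_GT_basis} is precisely the change of basis between the dual basis $\set{\bra{L}}$ of $W_\lambda^{*}$ and the intrinsic GT basis $\set{\ket{\bar{L}}}$ of $W_{\overline{\lambda}}$: it is diagonal with signs $(-1)^{\varphi(L)}$. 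Transporting $\mathrm{id}_{W_\lambda}$ across this identification therefore yields exactly $\ket{\Omega}$, so $\ket{\Omega}$ is invariant. Uniqueness is guaranteed by $r_{\lambda,\varnothing} = 1$, i.e.\ the trivial representation occurs in $W_\lambda \otimes W_{\overline{\lambda}}$ with multiplicity one.

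With invariance in hand, both delta functions follow. Expanding $\ket{\Omega}$ in the mixed Clebsch--Gordan basis $\set{\ket{\nu,N,t}}$ and using that $\ket{\Omega}$ is supported on the diagonal pairs $(L,\bar{L})$, the coefficient of $\ket{\nu,N,t}$ equals $\langle \nu,N,t \vert \Omega \rangle = \sum_L (-1)^{\varphi(L)}\overline{C^{N,t}_{L\bar{L}}}$. Since $\ket{\Omega}$ lies entirely in the trivial component, this vanishes whenever $(\nu,N) \neq (\varnothing,0)$; taking complex conjugates and using that the signs $(-1)^{\varphi(L)}$ are real then gives $\sum_L (-1)^{\varphi(L)} C^{N,t}_{L\bar{L}} = 0$ in those cases, which is the factor $\delta_{\nu,\varnothing}\delta_{N,0}$. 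For $(\nu,N,t) = (\varnothing,0,1)$, the standard phase convention fixes $\ket{\varnothing,0,1} = \ket{\Omega}/\norm{\Omega}$; the vectors $\set{\ket{L}_\lambda \ket{\bar{L}}_{\overline{\lambda}}}$ are orthonormal and number $d_\lambda = \dim W_\lambda$, so $\norm{\Omega} = \sqrt{d_\lambda}$, giving $C^{0,1}_{L\bar{L}} = (-1)^{\varphi(L)}/\sqrt{d_\lambda}$ and hence $\sum_L (-1)^{\varphi(L)} C^{0,1}_{L\bar{L}} = \sqrt{d_\lambda}$.

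I expect the main obstacle to be the sign bookkeeping in the invariance step: one must verify that the phases $(-1)^{\varphi(L)}$ supplied by \cref{lemma:conj_GT_basis} are exactly those implementing the identification $W_\lambda^{*} \cong W_{\overline{\lambda}}$ at the level of GT bases, so that $\ket{\Omega}$ really is the image of $\mathrm{id}_{W_\lambda}$ rather than some twisted, non-invariant vector. This amounts to checking that the GT phase conventions encoded by the $\varphi_k$ of \cref{lemma:conj_GT_basis} are compatible with the canonical pairing defining the CG coefficients (and, for the exact value $\sqrt{d_\lambda}$, with the chosen phase of the trivial-component vector $\ket{\varnothing,0,1}$). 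Once this compatibility is confirmed, invariance, and hence the whole statement, is immediate.
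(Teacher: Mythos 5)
Your proposal is correct and follows essentially the same route as the paper's proof: identify $\sum_{L} (-1)^{\varphi(L)}\ket{L}\ket{\bar{L}}$ (normalized) as the unique $U(d)$-invariant vector of $W_\lambda \otimes W_{\bar\lambda}$, note that in the mixed Schur basis the only invariant vector is $\ket{N=0}$ occurring solely for $\nu = \varnothing$, and read off the delta functions and $\sqrt{d_\lambda}$ from the Clebsch--Gordan overlap. The only difference is that you spell out the invariance step via $\mathrm{End}(W_\lambda) \cong W_\lambda \otimes W_{\bar\lambda}$ and Schur's lemma, which the paper asserts without justification.
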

\begin{proof}
    The state ${1 \over \sqrt{d_{\lambda}}} \sum_{L \in \GT{\lambda}} (-1)^{\varphi(L)}\ket{L}\ket{\overline{L}}$ is the only state invariant under $U_{\lambda \otimes \bar{\lambda}}$ for all $U$, where $\bar{\lambda} \in \Airreps{d}{0,n}$. 
    Over all $\nu \in \Airreps{d}{n,n}$ such that $\nu$ is a subrepresentation of $\lambda \otimes \bar{\lambda}$, the only state that is invariant under $U_{\nu}$ is $\ket{N=0}$ which occurs specifically for $\nu = \varnothing$ (multiplicity $t=1$). 
    These states are related by a Clebsch--Gordan transform $\CG_{\lambda,\bar{\lambda}}$ carrying out the transformation from the second to the third line in (\ref{eqn:detailedtransforms}) in the $\lambda\otimes \bar{\lambda}$ sector, and the relevant matrix elements of this unitary are $C^{0}_{L \bar{L}}$.
\end{proof}

\begin{fact}[Equation (1), page 368 of \cite{vilenkin}]\label{lemma:reduced_cg}\label{fact:RCG}
    Any Clebsch--Gordan coefficient $C^{N,t}_{L M}$ where of $U(d)$ can be reduced to Clebsch--Gordan coefficients $C^{N^{(d-1)},r}_{L^{(d-1)} M^{(d-1)}}$ of $U(d-1)$ via \emph{reduced Clebsch--Gordan} coefficients\footnote{\emph{Reduced Clebsch--Gordan coefficients} are sometimes also called \emph{reduced Wigner coefficients} or \emph{$G_1$-scalar factors}} $RC^{L_d,M_d,N_d,t}_{L_{d-1},M_{d-1},N_{d-1},r}$ as 
    \begin{equation}
        C^{N,t}_{L M} = \sum_{r} RC^{L_d,M_d,N_d,t}_{L_{d-1},M_{d-1},N_{d-1},r} \cdot C^{N^{(d-1)},r}_{L^{(d-1)} M^{(d-1)}},
    \end{equation}
    where $N_d$ is the $d$-th row of the Gelfand--Tsetlin pattern $N$, i.e. it is a highest weight for unitary group $U(d)$, and $N^{(d-1)}$ is the Gelfand--Tsetlin pattern obtained from $N$ by keeping bottom $d-1$ rows, so that $N^{(d-1)}$ corresponds to some basis vector of the irrep $N_{d-1}$ of the group $U(d-1)$.
\end{fact}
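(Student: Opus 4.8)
The plan is to recognize this as the classical Racah factorization lemma for $U(d)$ Clebsch--Gordan coefficients written in the Gelfand--Tsetlin basis, so that the cleanest route is to invoke \textcite{vilenkin} directly. For a self-contained argument I would exploit the fact that the GT basis is canonically adapted to the subgroup chain $U(d) \supset U(d-1) \supset \cdots \supset U(1)$: each pattern $L \in \GT{\lambda}$ splits into its top row $L_d = \lambda$ (the $U(d)$ highest weight) and the truncated pattern $L^{(d-1)}$, which is itself a GT pattern for the $U(d-1)$ irrep of highest weight $L_{d-1}$. The coefficient $C^{N,t}_{LM}$ is the amplitude of $\ket{L}\otimes\ket{M}$ in the coupled vector $\ket{N,t}$ living in the copy of $\nu = N_d$ inside $\lambda \otimes \mu$ (with $\mu = M_d$), and the whole claim is that its dependence on the lower rows $L^{(d-1)}, M^{(d-1)}, N^{(d-1)}$ factors through a single $U(d-1)$ coupling coefficient.

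First I would restrict the coupling map $\lambda \otimes \mu \to \nu$ to the subgroup $U(d-1)$. Both the product vectors $\ket{L}\otimes\ket{M}$ and the coupled vectors $\ket{N,t}$ then carry definite $U(d-1)$ irrep labels, namely $L_{d-1}$, $M_{d-1}$, and $N_{d-1}$, with $L^{(d-1)}, M^{(d-1)}, N^{(d-1)}$ serving as the internal $U(d-1)$ labels. Since the coupling is by construction $U(d)$-equivariant it is in particular $U(d-1)$-equivariant, so I would apply the Wigner--Eckart theorem (equivalently, Schur's lemma on each $U(d-1)$-isotypic block): the amplitude must equal a $U(d-1)$ Clebsch--Gordan coefficient $C^{N^{(d-1)},r}_{L^{(d-1)} M^{(d-1)}}$ carrying all the lower-row dependence, times a reduced coefficient that depends only on the irrep labels $L_d,M_d,N_d$ and $L_{d-1},M_{d-1},N_{d-1}$ and on the multiplicity indices. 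Summing over the inner multiplicity $r$ of $N_{d-1}$ in $L_{d-1}\otimes M_{d-1}$ then yields exactly the stated identity, with $RC^{L_d,M_d,N_d,t}_{L_{d-1},M_{d-1},N_{d-1},r}$ identified as the reduced (isoscalar) factor.

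The main obstacle I anticipate is the bookkeeping of multiplicities: the outer label $t$ counts copies of $\nu$ in $\lambda\otimes\mu$ at the $U(d)$ level, whereas $r$ counts copies of $N_{d-1}$ in $L_{d-1}\otimes M_{d-1}$ at the $U(d-1)$ level, and the reduced coefficient is precisely the change of basis between the $U(d)$-multiplicity space and the collection of $U(d-1)$-multiplicity spaces appearing after restriction. Making the equivariance step rigorous requires checking that restricting the $U(d)$-intertwiner to a fixed $U(d-1)$-isotypic block intertwines the $U(d-1)$ action in the way Schur's lemma demands, and that the resulting reduced matrix elements are genuinely independent of the lower rows; once this is in place the factorization is immediate. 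As all of this is carried out in \textcite{vilenkin}, I would ultimately present the statement as a cited \emph{fact} and leave the sketch above to the surrounding discussion.
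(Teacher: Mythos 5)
Your approach coincides with the paper's: the statement is presented there as a \emph{Fact} cited to \textcite{vilenkin} with no proof given, and you likewise conclude by citing that source, while your supporting sketch (restricting the $U(d)$-equivariant coupling map to the subgroup $U(d-1)$, applying Schur's lemma/Wigner--Eckart on each $U(d-1)$-isotypic block, and identifying the reduced coefficient as the change of basis between the $U(d)$ outer-multiplicity space and the $U(d-1)$ multiplicity spaces) is exactly the standard Racah factorization argument underlying the cited equation. Nothing to correct.
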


\subsection{Optimization of the Holevo information}

As a probe state for maximizing the Holevo information of the ensemble of reflections, we take the following state $\ket{\Phi_{n}} \in \mathcal{H}$ written in the double-sided Schur basis:
\begin{equation}
    U_{\text{Sch}}\otimes U_{\text{Sch}} \ket{\Phi_{n}} := \sum_{\lambda \in \Airreps{d}{n,0}} \sqrt{q_\lambda} \ket{\Phi^{+}_{\lambda}} \otimes \ket{\psi_\lambda}
    \label{eqn:rys}
\end{equation}
where $\ket{\psi_\lambda} \in S_{\lambda} \otimes S_{\lambda}$ are some pure states in the multiplicity registers corresponding to the symmetric group irreps, $\set{q_\lambda}_\lambda$ is some probability distribution to be optimized later. 
For example, the specific choice $q_\lambda := d^2_{\lambda} / d_{m}$ with $d_{m}:=\sum_{\lambda \in \hat{\mathcal{A}}^{d}_{n,0}}d_{\lambda}^{2}$ corresponds to YRC state \cite[Equation (A24)]{Yang2020}. A maximally entangled state $\ket{\Phi^{+}_{\lambda}} \in W_{\lambda} \otimes W_{\lambda}$ is defined as
\begin{equation}
    \ket{\Phi^{+}_{\lambda}} := {1\over \sqrt{d_{\lambda}}}\sum_{L \in \GT{\lambda}}\ket{L} \ket{L},
    \label{eqn:eprrep}
\end{equation}
where $\GT{\lambda}$ is the set of Gelfand-Tsetlin basis labels for $\lambda$, i.e. $L$ is a Gelfand-Tsetlin pattern of shape $\lambda$. 

Note that the ricochet trick works for any $\lambda$ and any matrix $X_\lambda$ acting on Weyl module $W_\lambda$:
\begin{equation}
    X_\lambda \otimes \idm_\lambda \ket{\Phi^{+}_{\lambda}} = \idm_\lambda \otimes X_\lambda\tp \ket{\Phi^{+}_{\lambda}}.
\end{equation} 
Therefore, in the double Schur basis, we can write the state reflected by $URU^{\dagger}$ in the first $(\C^d)\xp{n}$ register as
\begin{multline}
    \of*{U_{\text{Sch}} \otimes U_{\text{Sch}}} \of*{\of{URU\ct}^{\otimes n} \otimes \idm }\ket{\Phi_n} = \sum_{\lambda \in \Airreps{d}{n,0}} \sqrt{q_\lambda} \of*{U_{\lambda} R_{\lambda} U_\lambda\ct  \otimes \idm_\lambda \ket{\Phi^{+}_{\lambda}}} \otimes \ket{\psi_\lambda} \\
   = \sum_{\lambda \in \Airreps{d}{n,0}} \sqrt{q_\lambda} \of*{\of*{U_{\lambda} \otimes \bar{U}_{\lambda}} \of*{R_{\lambda} \otimes \idm_\lambda} \ket{\Phi^{+}_{\lambda}}} \otimes \ket{\psi_\lambda},
\end{multline}
which implies that the ensemble state is
\begin{align}
    \rho = \int dU \of{ U^{\otimes n} \otimes \bar{U}^{\otimes n} } \of{ R^{\otimes n} \otimes \idm } \Phi_n \of{ R^{\otimes n} \otimes \idm } \of{ U^{\otimes n} \otimes \bar{U}^{\otimes n} }\ct ,
\end{align}
so, in particular, $\rho$ in an element of algebra of partially transposed permutations.
Now using \cref{lemma:reflection_GT_basis} we can write
\begin{equation}
    R_\lambda  \otimes I_\lambda \ket{\Phi_{\lambda}} = \frac{1}{\sqrt{d_\lambda}}\sum_{L \in \GT{\lambda}} (-1)^{w_d(L)}  \ket{L}\ket{L}.
\end{equation}
We can ignore the $S_\lambda$ registers by applying $\mathcal{T}$ to $\rho$. Since $\mathcal{T}(\rho)$ is related by an isometry to $\rho$, $\mathcal{T}(\rho)$ will have the same spectrum as $\rho$. Therefore, we get in the double-sided Schur basis,
\begin{multline}
    \mathcal{T}(\rho) = \sum_{\lambda,\lambda'} \sqrt{\frac{q_\lambda q_{\lambda'}}{d_\lambda d_{\lambda'}}} \sum_{\substack{L \in \GT{\lambda} \\ L' \in \GT{\lambda'} }} (-1)^{w_d(L) + w_d(L')} \\
    \times \int dU \of*{U_\lambda \otimes \bar{U}_{\lambda}} \ket{L,L}\bra{L',L'} \of*{U_{\lambda'} \otimes \bar{U}_{\lambda'}}^{\dagger}.
\end{multline}
By using \cref{lemma:conj_GT_basis}, we get the following expression in the half-dual Schur basis
\begin{multline}
    \mathcal{T}(\rho) = \sum_{\lambda,\lambda'} \sqrt{\frac{q_\lambda q_{\lambda'}}{d_\lambda d_{\lambda'}}} \sum_{\substack{L \in \GT{\lambda} \\ L' \in \GT{\lambda'} }} (-1)^{w_d(L) + w_d(L') + \varphi(L')+\varphi(L)} \\
    \times \int dU \of*{U_\lambda \otimes U_{\bar{\lambda}}} \ket{L,\bar{L}}\bra{L',\bar{L'}} \of*{U_{\lambda'} \otimes U_{\bar{\lambda'}}}^{\dagger} .
\end{multline}
Next, we apply the Clebsch--Gordan transformation $\CG_{\lambda,\bar{\lambda}}$ to the half-dual Schur basis state $\ket{L,\bar{L}}$ with $L \in \GT{\lambda}$ to obtain:
\begin{equation}
    \CG_{\lambda,\bar{\lambda}} \ket{L,\bar{L}} = \sum_{\nu \in \Airreps{d}{n,n}} \sum_{t = 1}^{r_{\lambda,\nu}} \sum_{N \in \GT{\nu}} C_{L\bar{L}}^{N,t}\ket{N,t},
\end{equation}
which is in the mixed Schur basis. We have used a shorthand $r_{\lambda,\nu} \defeq r^{\nu}_{\lambda,\bar{\lambda}}$ and the following convention: $r_{\lambda,\nu} = 0$ implies that $\nu$ do not appear as irrep in the decomposition of $\lambda \otimes \bar{\lambda}$.
This allows us to rewrite
\begin{equation}
    \CG_{\lambda,\bar{\lambda}}  \of*{ U_\lambda \otimes U_{\bar{\lambda}}} \CG_{\lambda,\bar{\lambda}}\ct = \bigoplus_{\nu \in \Airreps{d}{n,n}} U_\nu \otimes I_{r_{\lambda, \nu}}.
\end{equation}
From the half-dual Schur basis for $\bigoplus_{\lambda \in \hat{A}^{d}_{n,0}}W_{\lambda}\otimes W_{\bar{\lambda}}$, we can go to the mixed Schur basis by applying a CG transform in each block as 
\begin{equation}
    \CG\, \defeq \bigoplus_{\lambda  \in \Airreps{d}{n,0}} \,\CG_{\lambda,\bar{\lambda}},
\end{equation}
and in this basis, we get the expression
\begin{multline}
    \widetilde{\mathcal{T}(\rho)} \defeq \CG \mathcal{T}(\rho) \CG\ct \\
    = \sum_{\lambda,\lambda'} \sqrt{\frac{q_\lambda q_{\lambda'}}{d_\lambda d_{\lambda'}}} \sum_{\substack{L \in \GT{\lambda} \\ L' \in \GT{\lambda'} }} (-1)^{w_d(L) + w_d(L') + \varphi(L')+\varphi(L)}  \\
    \sum_{\nu, \nu^\prime} \sum_{t = 1}^{r_{\lambda,\nu}} \sum_{t^\prime = 1}^{r_{\lambda', \nu^\prime}} \sum_{\substack{N \in \GT{\nu} \\ N^\prime \in \GT{\nu^\prime}}} 
    C^{N,t}_{L \bar{L}} C^{N^\prime,t^\prime}_{L' \bar{L'}} \int dU U_\nu \ket{N,t}\bra{N^\prime,t^\prime} U_{\nu^\prime}^{\dagger}.
    \label{eqn:refens1}
\end{multline}
In fact, $\widetilde{\mathcal{T}(\rho)}$ is a nice block-diagonal matrix due to the \emph{Grand Schur Orthogonality} relations
\begin{equation}
    \int dU \bra{X,t} U_\nu \ket{Y,t} \bra{Y^\prime,t^\prime} U_{\nu^\prime}\ct \ket{X^\prime,t^\prime} = \delta_{\nu,\nu^\prime} \delta_{X,X^\prime} \delta_{Y,Y^\prime} \frac{1}{d_\nu},
\end{equation}
so it can be expressed as
\begin{equation}
    \widetilde{\mathcal{T}(\rho)} = \sum_{\nu \in \Airreps{d}{n,n}} \idm_{d_\nu} \otimes M_\nu,
    \label{eqn:sumstate}
\end{equation}
where $M_\nu$ is a matrix of size $r_\nu \times r_\nu$, where $r_\nu := \sum_{\lambda \in \Airreps{d}{n,0}} r_{\lambda,\nu}$.
We will now label the rows and columns of $M_\nu$ by $(\lambda,t),(\lambda^\prime,t^\prime)$ where $\lambda, \lambda^\prime \in \Airreps{d}{n,0}$ and $t \in [r_{\lambda,\nu}], t^\prime \in [r_{\lambda^\prime, \nu}]$, and its matrix entries are
\begin{multline}
   \bra{(\lambda,t)} M_\nu \ket{(\lambda^\prime,t^\prime)} = \sqrt{\frac{q_\lambda q_{\lambda^\prime}}{d_\lambda d_{\lambda^\prime}}} \sum_{\substack{L \in \GT{\lambda} \\ L' \in \GT{\lambda^\prime} }} (-1)^{w_d(L^\prime) + w_d(L) + \varphi(L^\prime)+\varphi(L)} \\
   \times \sum_{N \in \GT{\nu}} C^{N,t}_{L \bar{L}} C^{N,t^\prime}_{L^\prime \bar{L}^\prime} \frac{1}{d_\nu}.
\end{multline}
Note that $M_\nu = K_\nu\ct K_\nu$, where $K_\nu$ is a $d_\nu \times r_\nu$ matrix which has entries $[K_\nu]_{N,(\lambda,t)}$ defined by
\begin{equation}\label{def:Knu}
    [K_\nu]_{N,(\lambda,t)} := \sqrt{\frac{q_\lambda}{d_\lambda d_\nu}} \sum_{L \in \GT{\lambda}} (-1)^{w_d(L)+\varphi(L)} C^{N,t}_{L \bar{L}}.
\end{equation}
At this point, we rewrite the function $w_d(L)+\varphi(L)$ for $L \in \GT{\lambda}$, $\lambda \in \Airreps{d}{n,0}$:
\begin{align}\label{eq:phases}
    w_d(L) + \varphi(L) &=  w_d(L) + \varphi_{d-1}(L) - \varphi_{d-1}(L_\lambda) \\
    &= \sum_{i = 1}^{d} L_{i,d} - \sum_{i = 1}^{d-1} L_{i,d-1} + \sum_{j=1}^{d-1} \sum_{i = 1}^{j} L_{i,j} - \varphi_{d-1}(L_\lambda) \\
    &= n + \varphi_{d-2}(L) - \varphi_{d-1}(L_\lambda) \\
    &= n - \varphi_{d-1}(L_\lambda) + \varphi_{d-2}(L^{(d-1)}),
    \label{eqn:phasestophi}
\end{align}
where $L^{(d-1)}$ is the Gelfand--Tsetlin pattern obtained from $L$ by dropping the top row, i.e. it is a GT basis label for the group $U(d-1)$. 
The analysis now will differ for $d=2$ and $d \ge 3$.

\subsubsection{Qubit case}
In the case of qubits ($d=2$), a more convenient notation for Gelfand--Tsetlin basis would be spin labelling $(j,m)$, where $j$ is total spin, replacing $\lambda$, and a spin $z$-projection $m$, replacing Gelfand--Tsetlin patterns $L$ (without the top row). More formally, correspondence between Gelfand--Tsetlin pattern $L$ and $(j,m)$ labelling for $SU(2)$ is as follows:
\begin{align}
	L &= \begin{pmatrix}
		2j && 0 \\
		&j+m
	\end{pmatrix}
        \equiv (j,m)
	\\
	\bar{L} &=
        \begin{pmatrix}
		2j && 0 \\
		&j-m
	\end{pmatrix} 
        \equiv (j,-m) \, .
\end{align}
For $SU(2)$, the spin $j$ also subsumes the number of qubits $n$ according to $2j=n$. When we consider $U(2)$ irreps we take into account the total number of boxes $n$ of $\lambda$ by using the following Gelfand--Tsetlin patterns:
\begin{align}
	L &= \begin{pmatrix}
		\frac{n}{2} + j && \frac{n}{2} - j \\
		&\frac{n}{2} + m
	\end{pmatrix}
        \equiv (j,m)
	\\
	\bar{L} &=
        \begin{pmatrix}
		-\frac{n}{2} + j && -\frac{n}{2} - j \\
		&-\frac{n}{2} - m
	\end{pmatrix}
        \equiv (j,-m)\, ,
\end{align}
with $0\le j\le n/2$.

When $d=2$, there are no multiplicities, so the label $t$ can be dropped completely in \Cref{def:Knu}. 
Moreover, the dual representation $\bar{\lambda}$ is isomorphic to $\lambda$ and corresponds to the same total spin $j$. 

In the context of \Cref{def:Knu} we will denote $\nu \equiv J$, $N \equiv N \in \set{-J,\dotsc,J}$, $\lambda \equiv j$ and drop $t$ index:
\begin{equation}\label{def:Knu_d=2}
    [K_J]_{N,j} := \sqrt{\frac{q_j}{(2j+1) (2J+1)}} (-1)^{n/2 - j} \sum_{m = -j}^{j} C^{J \, N}_{j\,m, j\,-m},
\end{equation}
where we used
\begin{align}
    w_d(L) + \varphi(L) &=  w_d(j,m) + \varphi_1(j,m) - \varphi_1(j,j) \\
    &= (n/2 - m) + (n/2 + m) - (n/2 + j).
\end{align}
Because of the angular momentum conservation, we immediately get $[K_J]_{N,j} = 0$ whenever $N \neq 0$, so
\begin{equation}\label{def:Knu_d=2_N=0}
    [K_J]_{0,j} = \sqrt{\frac{q_j}{(2j+1) (2J+1)}} (-1)^{n/2 - j} \sum_{m = -j}^{j} C^{J \, 0}_{j\,m, j\,-m}
\end{equation}
for $J/2\le j\le n/2$.
Therefore, we can write
\begin{equation}
    K_J = 
    \begin{cases}
        0          &\text{if } J \in \set{1 - (n \mod 2), \dotsc,n - 3,n-1}, \\
        \bra{v_J} &\text{if } J \in \set{ (n \mod 2), \dotsc,n - 2,n},
    \end{cases}
\end{equation}
where
\begin{align}
    \bra{v_J} &\defeq \sum_{ j \geq J/2 }^{n/2} \of[\bigg]{\sqrt{\frac{q_j}{(2j+1) (2J+1)}} (-1)^{n/2 - j} \sum_{m = -j}^{j} C^{J \, 0}_{j\,m, j\,-m}} \bra{j} \\
    \label{eq:norm_vJ_d=2}
    \norm{\ket{v_J}}^2 &= \sum_{ j \geq J/2 }^{n/2}  \frac{q_j}{(2j+1) (2J+1)} \abs[\bigg]{\sum_{m = -j}^{j} C^{J \, 0}_{j\,m, j\,-m}}^2 
\end{align}

Now we are ready to state our key numerical observation as a conjecture:

\begin{conjecture}\label{conj:d=2}
    The probabilities $q_j$ for $j \in \set{(n\mod 2)/2,\dotsc,n/2}$ in \Cref{eq:norm_vJ_d=2}, which define the probe state $\Phi_n$, could be chosen such that 
    \begin{equation}
        \norm{\ket{v_J}}^2 = \frac{1}{\binom{n+2}{2}}
    \end{equation}
    for every $J \in \set{(n \mod 2), \dotsc,n - 2,n}$. 
\end{conjecture}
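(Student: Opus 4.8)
The plan is to read the system of equations $\norm{\ket{v_J}}^2 = 1/\binom{n+2}{2}$, indexed by $J \in \set{(n\bmod 2),\dots,n-2,n}$, as a \emph{linear} system in the unknown weights $\set{q_j}$, where $j$ runs over $\set{(n\bmod 2)/2,\dots,n/2}$. Writing $\Sigma_{j,J} \coloneqq \sum_{m=-j}^{j} C^{J \, 0}_{j\,m, j\,-m}$, the norm formula \eqref{eq:norm_vJ_d=2} reads
\[
    \norm{\ket{v_J}}^2 = \frac{1}{2J+1}\sum_{j\ge J/2} \frac{\abs{\Sigma_{j,J}}^2}{2j+1}\, q_j ,
\]
so the target conditions become the linear constraints $\sum_j B_{J,j}\, q_j = 1/\binom{n+2}{2}$ with $B_{J,j} = \abs{\Sigma_{j,J}}^2 / \bigl((2J+1)(2j+1)\bigr)$ for $j\ge J/2$ and $B_{J,j}=0$ otherwise. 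Since the number of admissible $J$ equals the number of admissible $j$ (both $\lfloor n/2\rfloor + 1$), this is a \emph{square} system, and the whole conjecture reduces to exhibiting a nonnegative, normalized solution.

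First I would establish unique solvability. Ordering both $J$ and $j$ in decreasing order, the support condition $j \ge J/2$ makes $B$ lower triangular: the equation for $J=n$ involves only $j=n/2$, the equation for $J=n-2$ involves $j\in\set{n/2-1,n/2}$, and so on. The diagonal entries are $B_{J,J/2}$, governed by the \emph{stretched} coupling $j\otimes j \to J=2j$; there $\Sigma_{J/2,J}$ is a sum of strictly positive Clebsch--Gordan coefficients and hence nonzero. Therefore $B$ is invertible and the weights are determined recursively by substitution, starting from $q_{n/2} = \binom{n+2}{2}^{-1}(n+1)(2n+1)/\abs{\Sigma_{n/2,n}}^2$ and proceeding to smaller $j$.

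Next I would argue that this unique solution is automatically normalized. The key identity is $\sum_{J} \abs{\Sigma_{j,J}}^2 = 2j+1$, which is Parseval's relation for the (unnormalized) state $\sum_m \ket{j,m}\ket{j,-m}$ of squared norm $2j+1$ expanded in the coupled basis, where only the $M=0$ states $\ket{J,0}$ appear and $\Sigma_{j,J}=0$ for $J>2j$. Consequently $\sum_j q_j = \sum_J (2J+1)\norm{\ket{v_J}}^2$ holds for \emph{any} weights, and evaluating $\sum_{J}(2J+1)=\binom{n+2}{2}$ (an elementary arithmetic-progression sum, valid for both parities of $n$) shows that imposing $\norm{\ket{v_J}}^2=1/\binom{n+2}{2}$ forces $\sum_j q_j = 1$. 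This also explains the particular constant: since each block of $\widetilde{\mathcal{T}(\rho)}$ equals $I_{2J+1}\otimes\ketbra{v_J}{v_J}$, a flat value $\norm{\ket{v_J}}^2=1/\binom{n+2}{2}$ makes the spectrum uniform with $\binom{n+2}{2}$ equal nonzero eigenvalues, saturating the entropy at $\log_2\binom{n+2}{2}$, the largest value compatible with the support.

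The main obstacle is \textbf{positivity}: the recursively determined solution must satisfy $q_j \ge 0$. Since solvability and normalization come for free, this is the entire content of the conjecture. I would attack it by inserting a closed form for the equal-spin, $M=0$ coefficients $\Sigma_{j,J}$ (these admit a factorial expression and vanish unless $J\equiv 2j\equiv n \pmod 2$, matching the stated support of $J$), and then proving by downward induction on $j$ that each substitution step returns a nonnegative value; concretely, one must show that the ``off-diagonal mass'' $\sum_{j'>j} B_{J,j'} q_{j'}$ subtracted at each level never exceeds $1/\binom{n+2}{2}$. I expect this inequality to be the genuinely hard part: it couples the signs and relative magnitudes of Clebsch--Gordan coefficients across many levels, and small-case solutions (for instance $n=2$ gives $q_0=\tfrac{1}{16}$, $q_1=\tfrac{15}{16}$) indicate that the positive margins are delicate rather than manifest, which is presumably why the statement is recorded as a numerically-supported conjecture rather than proved outright.
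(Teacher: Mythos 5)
First, a framing point: the paper does not prove this statement either --- it is stated as a conjecture, and the paper's entire justification is numerical, namely solving for $\set{q_j}_j$ in exactly the linear system you wrote down (the paper's displayed system is yours multiplied through by $2J+1$) for $n \le 60$ and observing that the solutions are consistent with the claim. Against that baseline, your analysis is correct and analytically sharper than what the paper offers. Your triangularity argument is valid: the equation for $J$ involves only $j \ge J/2$, and the diagonal entries are governed by the stretched couplings $j \otimes j \to J = 2j$, whose Clebsch--Gordan coefficients are strictly positive, so the square system has a unique solution. Your Parseval argument is also valid: with your shorthand $\Sigma_{j,J} = \sum_m C^{J\,0}_{j\,m,\,j\,-m}$, the identity $\sum_J \abs{\Sigma_{j,J}}^2 = 2j+1$ gives $\sum_j q_j = \sum_J (2J+1)\norm{\ket{v_J}}^2$ for arbitrary weights, and since $\sum_J (2J+1) = \binom{n+2}{2}$ (the same dimension count the paper performs in \cref{cor:entropy_d=2}), any solution of the system is automatically normalized. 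Your $n=2$ check ($q_0 = 1/16$, $q_1 = 15/16$) is also correct. So you have faithfully reduced the conjecture to one clean claim: the unique solution of this triangular system is entrywise nonnegative.

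The gap is that this last claim --- which, as you yourself note, is now the entire content of the conjecture --- is not proved. Your plan (closed forms for the equal-spin $M=0$ sums $\Sigma_{j,J}$, plus a downward induction showing the off-diagonal mass subtracted at each level never exceeds $1/\binom{n+2}{2}$) is a plausible strategy, but no step of it is carried out, and the alternating signs of the non-stretched $\Sigma_{j,J}$ are precisely what makes the required inequality delicate. The net effect of your proposal is therefore a genuine sharpening --- conjecture $\Leftrightarrow$ positivity of an explicit triangular solve, with existence, uniqueness, and normalization removed from the question --- but the conjecture itself remains exactly as open as the paper leaves it, so what you have is a reduction plus a proof plan, not a proof.
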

The reason for this conjecture is the following: we observe numerical solutions $\{q_j\}_j$ for different $n \le 60$ of the following linear system 
\begin{equation}
    \frac{2J+1}{\binom{n+2}{2}} = \sum_{ j \geq J/2 }^{n/2}  \frac{q_j}{2j+1} \abs[\bigg]{\sum_{m = -j}^{j} C^{J \, 0}_{j\,m, j\,-m}}^2 
\end{equation}
for all $J \in \set{ (n \mod 2), \dotsc,n - 2,n}$. This conjecture implies a simple expression for the entropy of state $\widetilde{\mathcal{T}(\rho)}$:

\begin{corollary}\label{cor:entropy_d=2}
    For $d = 2$, assuming \cref{conj:d=2}, the entropy of $\widetilde{\mathcal{T}(\rho)}$ is given by
    \begin{equation}
        \entropy{\widetilde{\mathcal{T}(\rho)}} = \log_2 \binom{n+2}{2}.
    \end{equation}
\end{corollary}
    
\begin{proof}
    Assuming the conjecture,  the ensemble state $\widetilde{\mathcal{T}(\rho)}$ in (\ref{eqn:sumstate}) simplifies to
\begin{equation}
   \widetilde{\mathcal{T}(\rho)} = \frac{1}{{n+2\choose 2}}\bigoplus_{J=n\text{ mod } 2}^{n}\idm_{J} \otimes {\ket{v_{J}}\bra{v_{J}}\over \Vert \ket{v_{J}}\Vert^{2}}, 
\end{equation} where the direct sum is over $J=\lbrace n\text{ mod 2},\ldots,n-2,n\rbrace$. It  is clear that each rank-one projection ${\ket{v_{J}}\bra{v_{J}}\over \Vert \ket{v_{J}}\Vert^{2}}$ occurs exactly $2J+1$ times in $\widetilde{\mathcal{T}(\rho)}$. The dimension of the support of $\widetilde{\mathcal{T}(\rho)}$ is found to be
    \begin{align}
        \abs{\mathrm{supp}(\widetilde{\mathcal{T}(\rho)})} &= (2n+1) + (2(n-2)+1) + \dotsc + (2(n \mod 2)+1) \nonumber \\
        &= \begin{cases}
        \sum_{k=0}^{n/2} 4k + 1 &\text{if $n$ even}, \\
        \sum_{k=0}^{(n-1)/2} 4k + 3 &\text{if $n$ odd}, 
        \end{cases}
        = \binom{n+2}{2},
    \end{align}
    from which we conclude that  \Cref{conj:d=2} implies that $\widetilde{\mathcal{T}(\rho)}$ has flat spectrum. Its entropy is $\log_2 \binom{n+2}{2}$.
\end{proof}

\subsubsection{Qudit case}
We now treat the qudit case with $d\ge 3$.
\Cref{eq:phases}  together with \Cref{fact:RCG} and \Cref{lemma:magic}, assuming $d \ge 3$, gives us the following formula
\begin{align}
    &[K_\nu]_{N,(\lambda,t)} (-1)^{n - \varphi_{d-1}(L_\lambda)} = \sqrt{\frac{q_\lambda}{d_\lambda d_\nu}}  \sum_{L \in \GT{\lambda}}  (-1)^{\varphi_{d-2}(L^{(d-1)})} C^{N,t}_{L \bar{L}} \\
    &= \sqrt{\frac{q_\lambda}{d_\lambda d_\nu}}  \sum_{\mu \sqsubseteq \lambda} \sum_{M \in \GT{\mu}}  (-1)^{\varphi_{d-2}(M)} \sum_{r} RC^{\lambda,\bar{\lambda},\nu,t}_{\mu,\bar{\mu},\tilde{\nu},r} C^{N^{(d-1)},r}_{M \bar{M}} \\
    &= \sqrt{\frac{q_\lambda}{d_\lambda d_\nu}}  \sum_{\mu \sqsubseteq \lambda} (-1)^{\varphi_{d-2}(M_\mu)} \sum_{M \in \GT{\mu}}  (-1)^{\varphi(M)} \sum_{r} RC^{\lambda,\bar{\lambda},\nu,t}_{\mu,\bar{\mu},\tilde{\nu},r} C^{N^{(d-1)},r}_{M \bar{M}} \\
    &= \sqrt{\frac{q_\lambda}{d_\lambda d_\nu}}  \sum_{\mu \sqsubseteq \lambda} (-1)^{\varphi_{d-2}(M_\mu)} \sum_{r} RC^{\lambda,\bar{\lambda},\nu,t}_{\mu,\bar{\mu},\tilde{\nu},r} \sum_{M \in \GT{\mu}}  (-1)^{\varphi(M)} C^{N^{(d-1)},r}_{M \bar{M}} \\
    &= \sqrt{\frac{q_\lambda}{d_\lambda d_\nu}}  \sum_{\mu \sqsubseteq \lambda} (-1)^{\varphi_{d-2}(M_\mu)} \sum_{r} RC^{\lambda,\bar{\lambda},\nu,t}_{\mu,\bar{\mu},\tilde{\nu},r} \; \delta_{\tilde{\nu},\varnothing} \delta_{N^{(d-1)},\mathbf{0}} \sqrt{d_\mu} \label{eqn:jljl}\\
    &= \delta_{N^{(d-1)},\mathbf{0}} \sqrt{\frac{q_\lambda}{d_\lambda d_\nu}}  \sum_{\mu \sqsubseteq \lambda} (-1)^{\varphi_{d-2}(M_\mu)} RC^{\lambda,\bar{\lambda},\nu,t}_{\mu,\bar{\mu},\varnothing} \sqrt{d_\mu},
    \label{eqn:kfin}
\end{align}
where $\tilde{\nu} = N_{d-1}=N^{(d-1)}_{d-1}$, $\nu = N_d$ and $\mu \sqsubseteq \lambda$ means that $\mu$ interlaces $\lambda$, and (\ref{eqn:jljl}) follows from Lemma \ref{lemma:magic}. In the third line, $M_{\mu}$ uses the notation for the maximal GT pattern defined in (\ref{eqn:maxgtpat}). In the last line, we have used the fact that $\varnothing$ appears in the decomposition of $\mu\otimes \bar{\mu}$ with multiplicity 1, so $r=1$.
This calculation implies that actually the only non-zero matrix element (if it exists) in a given row of the matrix $K_\nu$ corresponds to $N$ such that $N^{(d-1)}$ is trivial, 
i.e. all entries of this GT pattern are zeros and, in particular, $\tilde{\nu} = (0,\dotsc,0) \equiv \varnothing$. 
However, for $RC^{\lambda,\bar{\lambda},\nu,t}_{\mu,\bar{\mu},\varnothing}$ to be non-zero it must be $\varnothing \sqsubseteq \nu$, meaning that $\nu = (k,0,0,\dotsc,0,-l)$ for some non-negative $k$ and $l$, where we use the staircase convention for labeling $\nu \in \hat{\mathcal{A}}^{d}_{n,n}$. 
On the other hand, since $\nu \in \lambda \otimes \bar{\lambda}$ then it must have the form $\nu = (k,0,0,\dotsc,0,-k)$ since the weight of $N$ must be zero. 
Therefore, we can conclude that the dimension of the support of $\widetilde{\mathcal{T}(\rho)}$ is at most 
\begin{equation}
    D(n,d) = \sum_{k = 0}^{n} d_{\nu_k},
\end{equation}
where $\nu_k \defeq (k,0,\dotsc,0,-k)$. Due to \emph{Pieri rule} for $\theta_n \defeq (n,0,\dotsc,0) \in \Airreps{d}{n,0}$ the following identity holds
\begin{equation}
    \theta \defeq \theta_n \otimes \bar{\theta}_n \cong \bigoplus_{k = 0}^{n} \nu_k,
\end{equation}
so it implies 
\begin{equation}
    D(n,d) = \sum_{k = 0}^{n} d_{\nu_k} = d_\theta^2 = \binom{n + d - 1}{d-1}^2.
    \label{eqn:dnd}
\end{equation}
To sum up, the matrix $K_\nu$ only has one row: $ K_\nu = \bra{v_{\nu}}$, where we define
\begin{align}
    \bra{v_{\nu}} &\defeq \sum_{\lambda,t} \of[\bigg]{\sqrt{\frac{q_\lambda}{d_\lambda d_\nu}} (-1)^{n - \varphi_{d-1}(L_\lambda)} \sum_{\mu \sqsubseteq \lambda} (-1)^{\varphi_{d-2}(M_\mu)} RC^{\lambda,\bar{\lambda},\nu,t}_{\mu,\bar{\mu},\varnothing} \sqrt{d_\mu}} \bra{(\lambda,t)}.
    \label{eqn:vvec}
\end{align}
The norm of this vector is given by
\begin{align}\label{eq:norm_v_nu_d>2}
     \norm{\ket{v_{\nu}}}^2 &= \sum_{\lambda} \frac{q_\lambda}{d_\lambda d_\nu} \sum_{t} \abs[\bigg]{\sum_{\mu \sqsubseteq \lambda} (-1)^{\varphi_{d-2}(M_\mu)} RC^{\lambda,\bar{\lambda},\nu,t}_{\mu,\bar{\mu},\varnothing} \sqrt{d_\mu}}^2.
\end{align}
Using this expression and our simplified expression for the reflected ensemble in the mixed Schur basis (\ref{eqn:sumstate}), we can maximize the entropy
\begin{equation}
    \entropy{\widetilde{\mathcal{T}(\rho)}} = - \sum_\nu d_\nu \norm{\ket{v_{\nu}}}^2 \log_2 \norm{\ket{v_{\nu}}}^2
\end{equation}
by adjusting the probability distribution $\{q_\lambda\}_\lambda$. 

\begin{figure}[tb]
    	\centering
    	\includegraphics[scale=0.5]{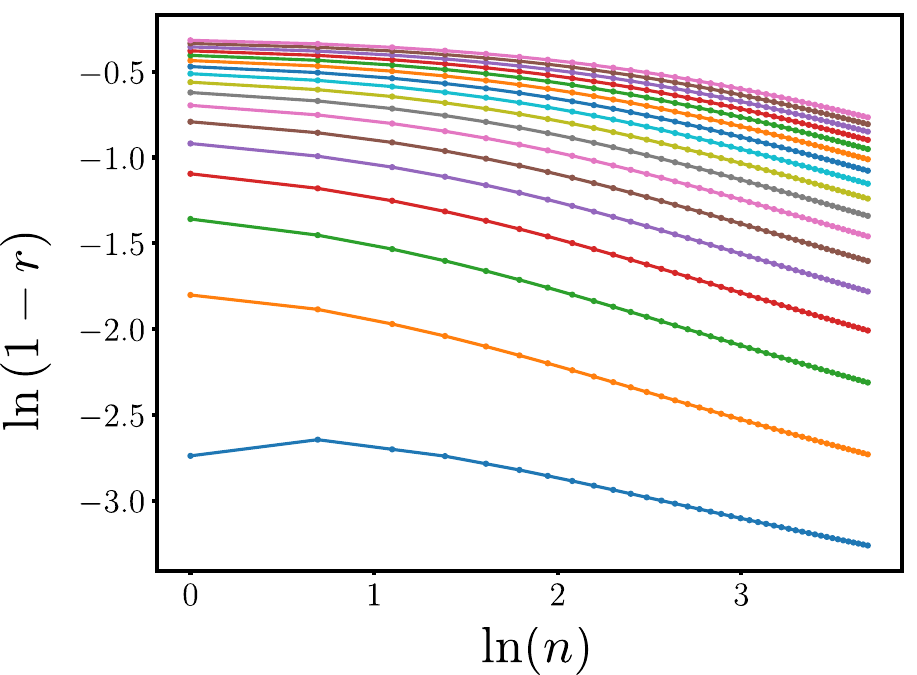}
    	\caption{Ratio $r$ (log scale) of the Holevo information $S\left(\widetilde{\mathcal{T}(\rho)} \right)$ of the state (\ref{eqn:specialrho}) to the upper bound $2\log_{2}D(n,d)$ for $d=4,\ldots, 20$ (increasing from bottom). $d=3$ is omitted due to saturation of the bound at $n=1$. Largest program size is $n=40$.
    	}\label{fig:entfig}
    \end{figure}

From the expressions (\ref{eqn:refens1}), (\ref{eqn:sumstate}) for the reflected ensemble, one can see that the maximal information that a program state can carry about a reflection is given by $2\log_{2}D(n,d)$ with $D(n,d)$ in (\ref{eqn:dnd}). However, for dimension $d$ and program size $n$ amenable to numerical calculation of the $U(d)$ CG coefficients appearing in $K_{\nu}$, optimization of the probe state symmetry weights $q_{\lambda}$ does not result in a maximal Holevo information $$\displaystyle \max_{\substack{\lbrace q_{\lambda} \rbrace_{\lambda \vdash_{d}n}, \: \sum_{\lambda} q_{\lambda}=1}} \entropy{\widetilde{\mathcal{T}(\rho)}}$$ that is close to this value. However, if one considers just tensor product of a $U(d)$ irrep with a symmetric irrep or its conjugate, instead of a with a general $U(d)$ irrep, then the reduced CG coefficients from Fact \ref{fact:RCG} can be computed for much larger $n$ and $d$ from an explicit combinatorial expression Eq. (6) of Section 18.2.6 of \cite{vilenkin}. With this in mind, note that taking a suboptimal program state defined by weights $\lbrace q_{\lambda}\rbrace_{\lambda \vdash_{d}n}$ having support only on the symmetric irrep $\theta_{n}$, i.e., $q_{\lambda}=\delta_{\lambda,\theta_{n}}$, results in a $K_{\nu}$ that can be written as a scalar $v_{\nu}$ (i.e., $\bra{v_{\nu}}$ has only one entry) and $v_{\nu}$ is a function of these easy-to-compute reduced CG coefficients (see (\ref{eqn:vvec})). Specifically, with this assumption, the $\bra{v_{\nu}}$ in (\ref{eqn:vvec}) reduces to a scalar $v_{\nu}$ that satisfies the following special case of (\ref{eq:norm_v_nu_d>2})
\begin{align}
    v_{\nu}^2 &= \frac{1}{d_{\theta_{n}} d_\nu}  \abs[\bigg]{\sum_{\mu \sqsubseteq \theta_{n}} (-1)^{\varphi_{d-2}(M_\mu)} RC^{\theta_{n},\overline{\theta_{n}},\nu}_{\mu,\bar{\mu},\varnothing} \sqrt{d_\mu}}^2 \nonumber \\
    &= \frac{1}{d_{\theta_{n}} d_\nu}  \left( \sum_{q=0}^{n}RC^{(\vec{0}_{d-1},-n),(n,\vec{0}_{d-1}),\nu_{k}}_{(\vec{0}_{d-2},-q),(q,\vec{0}_{d-2}),\varnothing}\sqrt{{q+d-2\choose d-2}}\right)^2
\end{align}
where in the first line the Pieri rule allows us to remove the sum over multiplicities that appears in (\ref{eq:norm_v_nu_d>2}), and in the second line we used the fact that in the present case $\mu$ has the form $\theta_{q}=(q,\vec{0}_{d-2})$ for $q=0,\ldots,n$ and the sum of the lowest $d-2$ rows of $M_{\mu}$ is zero. It is worth mentioning that in deriving the second line, we reflected the probe state ${1/ \sqrt{d_{\theta_{n}}}}\sum_{L\in \text{GT}(\theta_{n})}\ket{L}\ket{L}$ about a random state using $\overline{U_{\lambda}}R_{\lambda}U_{\lambda}^{T}$ instead of $U_{\lambda}R_{\lambda}U_{\lambda}^{\dagger}$ so that the conjugate representation appears in the first register. From (\ref{eqn:sumstate}) and (\ref{eqn:kfin}), the ensemble state is
\begin{small}
\begin{align}
    \widetilde{\mathcal{T}(\rho)}\big\vert_{q_{\lambda}=\delta_{\lambda,\theta_{n}}} &= \bigoplus_{\substack{\nu_{k} =(k,\vec{0}_{d-2},-k)  \\ k=0,\ldots,n}} \left( 
\sum_{q=0}^{n}RC^{(\vec{0}_{d-1},-n),(n,\vec{0}_{d-1}),\nu_{k}}_{(\vec{0}_{d-2},-q),(q,\vec{0}_{d-2}),\varnothing}\sqrt{{q+d-2\choose d-2}}  \right)^{2}{ \idm_{\nu_{k}} \over d_{\nu_{k}}d_{\theta_{n}}}
\label{eqn:specialrho}
\end{align}
\end{small}
where the $U(d)$ reduced CG coefficient $RC^{(\vec{n}_{d-1},0),(n,\vec{0}_{d-1}),(2n-k,\vec{n}_{d-2},k)}_{(\vec{n}_{d-2},n-q),(q,\vec{0}_{d-2}),\vec{n}_{d-1}}$ can also be used. One can see in \cref{fig:entfig} that, for each $d$, increasing the program size $n$ increases the ratio of the entropy of (\ref{eqn:specialrho}) to the upper bound $2\log_{2}D(n,d)$. The scaling of the ratio has the form $1-{c(d)\over n^{\alpha}}$ with $\alpha \approx 0.3$ independent of $d$\footnote{The code used to produce this figure and the related numerics is available at \cite{ReflectionsCode}.}. We therefore posit the following conjecture.

\begin{conjecture}\label{conj:entropy_d>2}
    For any $d \ge 3$ and for any $\delta > 0$ there exists $n_0$ such that for every $n \geq n_0$ the entropy of the state $\widetilde{\mathcal{T}(\rho)}$ is lower bounded as
    \begin{equation}
        \entropy{\widetilde{\mathcal{T}(\rho)}} \geq 2(1-\delta)\log_2 \binom{n+d-1}{d-1}.
    \end{equation}
\end{conjecture}
We are now ready to calculate the lower bound for $d_P$.

\subsection{Calculating the lower bound}

Assuming \cref{cor:entropy_d=2,conj:entropy_d>2}, our lower bound $\ln d_P \ge f_d(\epsilon,n)$ from \cref{lemma:lb}, where
\begin{equation}
f_d(\epsilon,n) \defeq \information*{\set*{\mathcal T \circ \mathcal R^{\otimes n}_\psi(\Phi_n), d\psi}} \ln2 - (4 n \sqrt{2\epsilon}) \ln d_n - \ln2,
\end{equation}
has the following lower bound for every $d$, $\delta > 0$ and sufficiently large $n$:
\begin{equation}
    f_d(\epsilon,n) \geq
     2 (1 - \delta) \ln \binom{n+d-1}{d-1} - 4n \sqrt{2\epsilon} \ln \binom{n+d^2-1}{d^2-1} - 2\ln 2.
\end{equation}
In the above expression, we weakened the lower bound by an additive contribution $\ln 2$ in order to incorporate the  the case $d=2$ from \cref{cor:entropy_d=2} into general $d$ case, due to the following inequality:
\begin{equation}
    \ln \binom{n+2}{2} \geq \ln \frac{(n+1)^2}{2} = 2\ln \binom{n+1}{1} -  \ln 2
\end{equation}
The task now is to choose the correct relationship between $\epsilon$ and $n$, with the aim to maximize the lower bound as much as possible. 
We summarize the result of this section in the following Theorem.
\begin{theorem}\label{thm:lbReflection}
    Assuming \cref{conj:d=2,conj:entropy_d>2},
    the program dimension of programming reflection operators on $\mathbb{C}^{d}$ for $d\ge 2$ is lower bounded by
    \begin{equation}
        \ln d_P \geq (1-\delta) (d-1) \ln\of*{\frac{d^{-4}}{\epsilon}} 
    \end{equation}
    for all $\delta > 0$, and sufficiently small $\epsilon$.
\end{theorem}
\begin{proof}
To simplify the lower bound, we would need to use the following easy inequalities:
\begin{align}
    \frac{n^k}{k!} &\leq \binom{n+k}{k} \leq \frac{(n+k)^k}{k!}, \label{binom_ineq}\\
    k \ln k - k &< \ln k! < (k+1) \ln k - k,
    \label{factorial_ineq}
\end{align}
the latter inequality holding for $k>6$.
By using \cref{binom_ineq} we get
\begin{align}
    f_d(\epsilon,n) &\geq 2(1 - \delta)\ln \binom{n+d-1}{d-1} - 4n \sqrt{2\epsilon} \ln \binom{n+d^2-1}{d^2-1} - 2\ln 2 \\
    &\geq 2(1 - \delta) \ln \frac{n^{d-1}}{(d-1)!} -  4n \sqrt{2\epsilon} \ln \frac{(n+d^2-1)^{d^2-1}}{(d^2-1)!} - 2\ln 2 \\
    &\geq 2(1 - \delta)(d-1) \ln n - 2(1 - \delta)\ln (d-1)! + 4n \sqrt{2\epsilon} \ln (d^2-1)!  \nonumber \\
    & \qquad \qquad \qquad - 4n \sqrt{2\epsilon} (d^2-1) \ln (n+d^2-1) - 2\ln 2. \label{eq:f:3} \\
    \nonumber
    & = 2(1 - \delta)(d-1) \ln n  - 4n \sqrt{2\epsilon}(d^2-1) \left( \ln (n+d^2-1) - \frac{\ln(d^2-1)!}{(d^2-1)}\right)  \\
    & \qquad \qquad \qquad -( 2(1 - \delta)\ln (d-1)!+ 2\ln 2) \;.
    \label{eqn:lowerboundsimplify}
\end{align}
As in \cite{Yang2020}, the parameter $\epsilon$ is fixed to denote the error of an approximate programmable processor for a single reflection channel $\mathcal{R}_{\psi}$. 
We would like to choose the number of copies $n$ as a function of $\epsilon$ in such a way that the lower bound is as large as possible. 
One could do this by taking the derivative of the right hand side with respect to $n$ and set it equal to zero. 
However, the resulting equations cannot be solved analytically. 
Instead, we make some approximation to the expression first and then set the derivative to zero. 
The last term is independent of $n$ so it can be ignored. 
For the second term, notice that as $n\gg d^2$ the expression inside the parenthesis goes as $\ln(n)$. 
Thus, we want to maximize the function $g(n) = 2(1 - \delta)(d-1) \ln n  - 4n \sqrt{2\epsilon}(d^2-1) \ln (n)$. 
Setting the derivative $g'(n)$ to zero we obtain the following relation between $n$ and $\epsilon$ that approximately maximizes the lower bound:
\begin{equation}
    \label{eps_n_relation}
    n \ln n = \frac{1 - \delta}{2(d+1)\sqrt{2\epsilon}}.
\end{equation}
Therefore by using \cref{eps_n_relation} we simplify the lower bound as follows
\begin{multline}
    f_d(\epsilon(n),n) \geq 2(1 - \delta) (d-1) \ln n - 2(1 - \delta) \ln (d-1)!  - 2\ln 2  \\ + 2(1 - \delta) \of*{\frac{\ln (d^2-1)!}{(d+1) \ln n} - (d-1) \frac{\ln n+ \ln(1+\tfrac{d^2-1}{n})}{\ln n}} .
    \label{eqn:substitutedlowerbound}
\end{multline}
Comparing this expression to \cref{eqn:lowerboundsimplify}, one sees that the near-optimal choice of $n$ causes the lower bound to be dominated by the Holevo information  $\information{ \set{\mathcal{T}\circ \mathcal{R}_{\psi}^{\otimes n}(\Psi_{n}), d\psi }}$, while the $O(n\sqrt{\epsilon(n)}\ln n)$ contribution arising from the Alicki-Fannes-Winter inequality, which arises due to imperfect programming, is suppressed. Physically, the near-optimal $n$ in \cref{eps_n_relation} guarantees that the lower bound on $d_{P}$ asymptotically depends only on the amount of information about $\mathcal{R}_{\psi}$ that can be imprinted on $\Psi_{n}$.

In \cref{eqn:substitutedlowerbound} we note that for $n > 1$, the inequality
\begin{equation}
    \frac{ \ln (d^2-1)!}{(d+1) \ln n} -  (d-1) \frac{\ln(1+\tfrac{d^2-1}{n})}{\ln n} \geq 0
\end{equation}
is implied by 
\begin{equation}
    n \geq \ceil*{ \frac{d^2-1}{\exp{\tfrac{\ln (d^2-1)!}{d^2-1}} - 1}},
\end{equation}
if it holds for every $d \geq 3$. But using the lower bound from \cref{factorial_ineq} we see that requiring $n \geq 6$ is sufficient so show
\begin{multline}
    n \geq 6 > \frac{1}{\tfrac{1}{e}-\tfrac{1}{8}} + 1 \geq \frac{1}{\tfrac{1}{e}-\tfrac{1}{d^2-1}} + 1 = \frac{d^2-1}{\exp \left[\ln (d^2-1) - 1\right] -1} + 1 \\
    \geq \frac{d^2-1}{\exp{\tfrac{\ln (d^2-1)!}{d^2-1}} - 1} + 1  \geq  \ceil*{ \frac{d^2-1}{\exp{\tfrac{\ln (d^2-1)!}{d^2-1}} - 1}}
\end{multline} 
for every $d \geq 3$. 
This argument leads to a simplification of the lower bound. 
Namely for every  $n \geq 6$, and with $\epsilon$ taken such that \cref{eps_n_relation} holds, we get
\begin{align}
    f_d(\epsilon(n),n) &\geq 2(1 - \delta)  ((d-1) \ln n - \ln (d-1)!  - (d-1)) - 2\ln 2  \\
    &\geq 2(1 - \delta) ((d-1) \ln n - d \ln (d-1) + (d-1)  - (d-1)) - 2\ln 2 \nonumber \\
    &\geq 2(1 - \delta) ((d-1) \ln n - (d-1) \ln (d-1) - \ln(d-1)) - 2\ln 2  \nonumber \\
    &\geq 2(1 - \delta) (d-1) \ln \frac{n}{d-1} - 2(1 - \delta) \ln(d-1)  - 2\ln 2 \label{f_d:lb:ln n}
\end{align}
where we also used \cref{factorial_ineq}. However, we are really interested in $\epsilon$ dependence of the lower bound. 
To find this, we need to solve \cref{eps_n_relation} for $n$ in terms of $\epsilon$. 
First of all, the condition $n \geq 6$ translates into 
\begin{equation}
    \epsilon \leq \frac{(1 - \delta)^2}{2 (12(d+1) \ln 6) ^2},
\end{equation}
so it is enough to require
\begin{equation}
    \epsilon \leq \frac{10^{-3}(1 - \delta)^2}{(d+1)^2}.
\end{equation}
The solution of \cref{eps_n_relation} is given by so-called \emph{Lambert function} $W_0$:
\begin{equation}\label{eps_n_relation_solution}
    \ln n = W_0\of*{\sqrt{\tfrac{(1 - \delta)^2}{8(d+1)^2\epsilon}}},
\end{equation}
which has the following bounds for every $x \geq e$:
\begin{equation}
    \ln x - \ln \ln x + \frac{\ln \ln x}{2 \ln x} \leq W_0(x) \leq \ln x - \ln \ln x + \frac{e}{e - 1} \frac{\ln \ln x}{\ln x}.
\end{equation}
Combining the lower bound for $W_0$ with \cref{f_d:lb:ln n} leads to
\begin{align}
    &f_d(\epsilon, n(\epsilon)) \nonumber\\
    &\geq 2 (d-1)(1 - \delta) \of*{W_0\of*{\sqrt{\tfrac{(1 - \delta)^2}{8(d+1)^2\epsilon}}} - \ln(d-1)} - K \\
    &\geq 2 (d-1)(1 - \delta) \of*{\ln\of*{\sqrt{\tfrac{(1 - \delta)^2}{8(d+1)^2\epsilon}}} - \ln(d-1) - \ln \ln \of*{\sqrt{\tfrac{(1 - \delta)^2}{8(d+1)^2\epsilon}}}} - K \nonumber\\
    &\geq (1 - \delta)(d-1) \ln\of*{\tfrac{(1 - \delta)^2}{8(d^2-1)^2\epsilon}} \times \nonumber\\
    &\qquad \times \of*{1 - \frac{2\ln \ln \of*{\sqrt{\tfrac{(1 - \delta)^2}{8(d+1)^2\epsilon}}}}{\ln\of*{\tfrac{(1 - \delta)^2}{8(d^2-1)^2\epsilon}}} - \frac{K}{(d-1) \ln\of*{\tfrac{(1 - \delta)^2}{8(d^2-1)^2\epsilon}}}}
    \label{f_d:lb}
\end{align}
where $K \defeq 2(1 - \delta) \ln(d-1) + 2\ln 2$ does not depend on $\epsilon$. 
\Cref{f_d:lb} implies that for every $\delta > 0$ there exists sufficiently small $\epsilon_0(\delta,d)$ such that for every $0 < \epsilon < \epsilon_0(\delta,d)$ we get 
\begin{equation}\label{eq:lbQudit}
    \ln d_P \geq (1-\delta) (d-1) \ln\of*{\frac{d^{-4}}{\epsilon}}.
\end{equation}
\end{proof}

\section{Universal programming}\label{sec:universal}
\begin{figure}
	\centering
	\begin{quantikz}
	    & & & \gate[2]{\apxRef_{\pi a_1}} & \ \ldots \ & \gate[4]{\apxRef_{\pi a_{d-1}}} & \\
		\lstick{$\ket{\psi_{P,1}}$} & \gate{\decoder_{n_1}} & \qwbundle{n_1} &  & \ground{} \\
		\lstick{$\vdots$}\setwiretype{n}& &  & & \ddots & & && & \\
		\lstick{$\ket{\psi_{P,{d-1}}}$} & \gate{\decoder_{n_{d-1}}} & \qwbundle{n_{d-1}} & & & & \ground{} \\
	\end{quantikz}
	\caption{%
		We can implement universal programming by a sequence of $d-1$ "approximate reflection channels"
        $\apxRefCh_{\psi_j, R_{\theta_j}}$,
        where $\ket{\psi_j}$ are the (orthogonal) eigenstates of the target unitary.
        For all $j$, we approximate $\theta_j$ with a binary string of finite length
        representing the number $a_j \in (-1,1)$.
        Universal programming requires us to provide the number of copies $n_1, \dots, n_{d-1} \in \mathbb N$ and $a_1, \dots, a_{d-1}$ as part of the program state.
        We immediately measure these registers (here implicitly) so that these values are available classically.
        The symmetric states $\ket{\psi_j}^{\otimes n_j}$ are given to the processor
        as encoded programs $\ket{\psi_{P,j}} \coloneqq \decoder^\dagger_j \ket{\psi_j}^{\otimes n}$,
        defined in \cref{sec:savingSpace}, to save space.
	}%
	\label{fig:universalProgramming}
\end{figure}

In this section we describe an implementation of a universal quantum processor based on the rotations described in \cref{sec:unitaryreflections}.
We begin by noting that an arbitrary unitary up to a global phase can be written as
\begin{equation}
    \label{eq:genericunitary}
    U = \sum_{j=0}^{d-1} e^{i \alpha_j} \ketbra{\psi_j}{\psi_j} = \prod_{j=0}^{d-1} e^{i \alpha_j \ketbra{\psi_j}{\psi_j}}
\end{equation}
where $\alpha_j\in(-\pi,\pi]$ and $\alpha_0 = 0$.
We can construct an approximate quantum processor for $U$ with a program state consisting of three registers.
The first program register contains a binary encoding in the computational basis of certain optimally chosen phases $\theta_j$.
The second program register encodes in binary the number of copies of each eigenstate $n_j\in \mathbb N$.
The third program register encodes the copies  $\ket{\psi_j}^{\otimes n_j} \in \sym^n(\mathbb{C}^{d})$
using program states of dimension $\dim(\sym^{n_j}(\mathbb{C}^d))$, each.
Correspondingly, there are three sources of error in the algorithm:
one due to the binary representation of the phases using a finite number of qubits, one due to error in the reflection channel implemented using our algorithm with finite number of copies and finally one due to the imperfect implementation of the isometry that recovers $\ket{\psi_j}^{\otimes n_j}$ from the program states.

Specifically, the structure of the phase register is specified as follows:
let $\theta_j$ be the rotation angle of $\apxRefCh_{V, \apxRef_{\theta_j}}$ that approximately programs a rotation by $\alpha_j$ (found numerically in \cref{fig:diamond2} or simply use $\theta_j=\alpha_j$).
Then we can approximate $\theta_j$ by a binary string of length logarithmic in the error.

\begin{lemma}\label{lem:binaryApproximation}
    Let $\theta_j / \pi$ be approached by a series $(-1)^{b_{j,0}} 0.b_{j,1}b_{j,2}\dots$ where we used a binary fraction.
    The binary approximation $a_{j} \coloneqq (-1)^{b_{j,0}} 0.b_{j,1}\dots b_{j,K}$,
    for $K \in \mathbb N_+$,
    of $\theta_j/\pi$ with error
    \begin{equation}
        0 < \abs{\theta_j - \pi a_j} \le \epsilon 
    \end{equation}
    can be obtained using $K \ge \log_2\paren*{\frac{\pi}{\epsilon}}$ bits.
\end{lemma}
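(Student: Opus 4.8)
The plan is to treat this as a standard binary-truncation error estimate. First I would recall that the relevant angle $\theta_j$ lies in $[0,\pi]$ (the optimal angles appearing in \cref{fig:diamond2,thm:optimalDistance} all lie in this interval, and more generally $\alpha_j \in (-\pi,\pi]$), so that $\theta_j/\pi$ has magnitude in $[0,1]$ and genuinely admits a signed binary-fraction expansion $(-1)^{b_{j,0}} 0.b_{j,1} b_{j,2}\dots$ with each $b_{j,m}\in\{0,1\}$, as the statement assumes.

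Next I would bound the error incurred by truncating this expansion after $K$ fractional bits. Since the approximant $a_j = (-1)^{b_{j,0}} 0.b_{j,1}\dots b_{j,K}$ discards exactly the tail of the series and keeps the sign bit, its magnitude error is the geometric tail
\begin{equation}
    \abs*{\frac{\abs{\theta_j}}{\pi} - \abs{a_j}} = \sum_{m=K+1}^{\infty} b_{j,m}\, 2^{-m} \le \sum_{m=K+1}^{\infty} 2^{-m} = 2^{-K},
\end{equation}
so that $\abs{\theta_j/\pi - a_j} \le 2^{-K}$. Multiplying through by $\pi$ gives $\abs{\theta_j - \pi a_j} \le \pi\, 2^{-K}$.

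To force the right-hand side to be at most $\epsilon$, I would then impose $\pi\, 2^{-K} \le \epsilon$, which rearranges to $2^{K} \ge \pi/\epsilon$, i.e.\ $K \ge \log_2\paren*{\pi/\epsilon}$, establishing the claimed bit count. This is the entire content of the upper bound; it is a one-line geometric-series estimate and needs no further work.

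The one point I would treat with care, rather than as routine, is the strict lower bound $0 < \abs{\theta_j - \pi a_j}$. Truncation produces exact equality $\theta_j = \pi a_j$ precisely in the degenerate case where the binary expansion of $\theta_j/\pi$ terminates at or before the $K$-th bit. I would dispose of this by adopting the convention that $a_j$ is a \emph{strict} $K$-bit truncation from below: whenever the expansion would terminate we instead take the largest $K$-bit value strictly less than $\theta_j/\pi$. This keeps the discarded tail nonempty, hence $\abs{\theta_j - \pi a_j} > 0$, while its magnitude remains bounded by $2^{-K}$, and it also keeps $a_j$ inside the open interval $(-1,1)$ required by \cref{fig:universalProgramming}. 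With that convention the two-sided estimate holds for all $\theta_j$, and no genuine obstacle remains.
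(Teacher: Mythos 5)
Your proof is correct and follows essentially the same route as the paper's: the truncation error is the geometric tail of the binary expansion, bounded by $2^{-K}$, and multiplying by $\pi$ and solving $\pi\,2^{-K}\le\epsilon$ gives $K \ge \log_2(\pi/\epsilon)$. Your extra care about the degenerate terminating-expansion case (to secure the strict inequality $0 < \abs{\theta_j - \pi a_j}$) is a small refinement the paper's terse proof simply glosses over, not a different approach.
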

\begin{proof}
    The difference
    \begin{equation}
        \frac{\theta_j}{\pi} - a_j = (-1)^{b_{j,0}} 0.0\cdots0b_{j,K+1}\cdots
    \end{equation}
    such that $\abs{\frac{\theta_j}{\pi} - a_j} \le 2^{-K}$.
    The result follows easily.
\end{proof}

The algorithm first measures the registers containing $n_1$, $\ldots,$ $n_{d-1}$ and $a_1$, $\ldots,$ $a_{d-1}$
so that these values are available classically.
The structure of the rest of the algorithm is given in \cref{fig:universalProgramming}.
We now bound the size of the program dimension of this algorithm based on the desired error.

\begin{theorem}\label{thm:universalProgram}
    With additional registers classically encoding $n_1,\dots, n_{d-1} \in \mathbb N$ and $a_1,\dots, a_{d-1} \in (-1,1)$,
    \cref{fig:universalProgramming} implements a universal quantum processor with program dimension
    \begin{equation}\label{eq:progDimUniversal}
        \log_2(d_P) = (d-1)^2 \log_2 \paren*{\frac{\bigo{1}}{\epsilon}}.
    \end{equation}
\end{theorem}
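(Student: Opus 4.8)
The plan is to bound the total diamond-distance error of the composed circuit by a sum of $d-1$ per-rotation errors, each decomposed into three independent contributions, and then to count the program dimension, which is dominated by the symmetric-state registers.

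Since the orthogonal projectors $\ketbra{\psi_j}{\psi_j}$ commute, the target channel factors as $\mathcal U = \rotChannel_{\psi_{d-1}}(\alpha_{d-1}) \circ \cdots \circ \rotChannel_{\psi_1}(\alpha_1)$. By the telescoping triangle inequality for the diamond norm under composition of channels, $\opnorm{\mathcal U - \mathcal A} \le \sum_{j=1}^{d-1}\opnorm{\rotChannel_{\psi_j}(\alpha_j) - \tilde{\mathcal E}_j}$, where $\mathcal A$ is the full algorithm and $\tilde{\mathcal E}_j$ is the $j$-th implemented rotation, including decoding. I would then split each summand by a further triangle inequality into (i) the finite-copy error $\opnorm{\rotChannel_{\psi_j}(\alpha_j) - \apxRefCh_{\psi_j,\apxRef_{\theta_j}}}$, (ii) the angle-rounding error $\opnorm{\apxRefCh_{\psi_j,\apxRef_{\theta_j}} - \apxRefCh_{\psi_j,\apxRef_{\pi a_j}}}$, and (iii) the decoder-implementation error, and control each by $\epsilon/(3(d-1))$.

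For (i), \cref{lem:linearBound} (taking $\theta_j = \alpha_j$) gives the bound $3\alpha_j/n_j \le 3\pi/n_j$, so $n_j = \bigo{(d-1)/\epsilon}$ copies suffice. For (ii), I would use the explicit form $\apxRef_\theta = \idm + \frac{e^{i\theta}-1}{n_j+1}\sum_\ell \cyclicPermutation^\ell$ from \cref{thm:algorithm}: the difference $\apxRef_{\theta_j} - \apxRef_{\pi a_j} = \frac{e^{i\theta_j}-e^{i\pi a_j}}{n_j+1}\sum_\ell \cyclicPermutation^\ell$ has operator norm at most $\abs{\theta_j - \pi a_j}$ because $\frac{1}{n_j+1}\sum_\ell\cyclicPermutation^\ell$ is the projector onto the cyclic-invariant eigenspace; hence the two conjugate-then-trace channels differ by at most $2\abs{\theta_j - \pi a_j}$ in diamond norm, and by \cref{lem:binaryApproximation} a phase register of $K = \bigo{\log(1/\epsilon)}$ bits makes this small enough. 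For (iii), since $\decoder_{n_j}$ is state-independent and realizable via the Schur transform to any precision $\delta$ at a cost only in gate count, I choose $\delta \le \epsilon/(3(d-1))$; this affects gate complexity but not the program Hilbert-space dimension. Summing over $j$ then yields $\opnorm{\mathcal U - \mathcal A} \le \epsilon$ (with constant factors absorbed into $\bigo{1}$).

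For the dimension count, the symmetric-state register for rotation $j$ contributes $\log_2\dim\paren*{\sym^{n_j}(\mathbb C^d)} \le (d-1)\log_2\paren*{\frac{n_j + d-1}{d-1}} + \bigo{d}$ by \cref{eq:symmetricCost}. Substituting $n_j = \bigo{(d-1)/\epsilon}$ makes the logarithm's argument $\bigo{1/\epsilon}$ with a $d$-independent constant, giving $(d-1)\log_2(\bigo{1}/\epsilon)$ per rotation and $(d-1)^2\log_2(\bigo{1}/\epsilon)$ after summing the $d-1$ rotations; the phase and copy-count registers add only $\bigo{(d-1)\log(1/\epsilon)}$, a lower-order term supplied as the classically provided additional registers. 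The main obstacle I anticipate is ensuring that the three error sources compose additively while keeping the constant in $\bigo{1}$ independent of $d$: specifically, the Lipschitz-in-$\theta$ bound of step (ii) must survive the partial trace, and the $(d-1)$ factor in $n_j$ introduced by splitting the error budget across the $d-1$ rotations must cancel against the $1/(d-1)$ inside the symmetric-dimension logarithm.
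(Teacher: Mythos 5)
Your proof is correct and arrives at the same dimension count, but it organizes the error analysis along a genuinely different decomposition than the paper. The paper uses one global triangle inequality $\opnorm{\mathcal U - \mathcal{E}} \le \opnorm{\mathcal U - \tilde{\mathcal{U}}} + \opnorm{\tilde{\mathcal{U}} - \tilde{\mathcal E}} + \opnorm{\tilde{\mathcal{E}} - \mathcal E}$, where $\tilde{\mathcal U}$ is the \emph{exact} unitary with rounded phases $\pi a_j$ and $\tilde{\mathcal E}$ is the implementation with ideal decoders; the phase-rounding term is then a unitary-versus-unitary diamond distance, which the paper bounds by citing an external spectral result \cite[Theorem 12]{Johnston2009}, and \cref{lem:linearBound} is applied at the rounded angles $\pi a_j$ (using $\abs{\pi a_j} \le \abs{\alpha_j}$). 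You instead telescope over the $d-1$ rotations first and handle the rounding at the level of the \emph{implemented channels}, bounding $\opnorm{\apxRefCh_{\psi_j,\apxRef_{\theta_j}} - \apxRefCh_{\psi_j,\apxRef_{\pi a_j}}}$ through the operator-norm estimate $\norm{\apxRef_{\theta_j}-\apxRef_{\pi a_j}}_\infty = \abs{e^{i\theta_j}-e^{i\pi a_j}}$, valid because $\frac{1}{n_j+1}\sum_{\ell}\cyclicPermutation^{\ell}$ is a projector, followed by contractivity of the diamond norm under pre- and post-composition with channels (state preparation and partial trace). This buys self-containedness: your rounding bound needs only \cref{thm:algorithm} and elementary facts, with no appeal to the cited result on diamond distances of perturbed unitaries; the cost is that your bound is tied to the specific implementation $\apxRef_\theta$, whereas the paper's $\opnorm{\mathcal U - \tilde{\mathcal U}}$ estimate is implementation-independent. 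One small correction that does not affect the conclusion: to meet the per-rotation budget $\epsilon/(3(d-1))$, \cref{lem:binaryApproximation} requires $K = \bigo{\log\paren{d/\epsilon}}$ bits rather than $\bigo{\log\paren{1/\epsilon}}$; since the phase and copy-count registers contribute only $\bigo{d\log(d/\epsilon)}$ qubits in total, this remains lower order against the $(d-1)^2\log_2\paren{\bigo{1}/\epsilon}$ main term, and your cancellation of the $(d-1)$ factor in $n_j$ against the $1/(d-1)$ inside the symmetric-dimension logarithm is exactly the mechanism the paper uses.
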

\begin{proof}
Let us then define the unitary
\begin{equation}
    \tilde{U} \coloneqq \sum_{j=1}^{d-1} e^{i \pi a_j} \ketbra{\psi_j}{\psi_j} = \prod_{j=1}^{d-1} e^{i \pi a_j \ketbra{\psi_j}{\psi_j}}
\end{equation}
for some $a_1,\dots,a_{d-1} \in \mathbb (-1,1)$ specified later.
The diamond distance between the target unitary channel $\mathcal U$ and the channel $\mathcal E$ implemented by the universal processor can be upper bounded by the triangle inequality as 
\begin{align}
    \label{eq:diamdistbound1}
    \opnorm{\mathcal U - \mathcal{E}} \le  \opnorm{\mathcal U - \tilde{\mathcal{U}}} + \opnorm{\tilde{\mathcal{U}} - \tilde{\mathcal E}} + \opnorm{\tilde{\mathcal{E}} - \mathcal E},
\end{align}
where we have indicated by $\tilde{\mathcal{E}}$ the channel implemented by the universal processor if the isometries $\decoder_n$ were implemented without error. We will bound each term on the right-hand side by $\epsilon/3$. This can be done by approximating each term in \cref{eq:genericunitary} with error $\epsilon/(3(d-1))$. 
Now, \cref{lem:binaryApproximation} implies $K = \ceil{\log_2 \paren{\frac{6\pi(d-1)}{\epsilon}}}$ bits is sufficient for each $a_j$ to approximate $\theta_j$ up to error $\epsilon/(6(d-1))$.
Then \cite[Theorem 12]{Johnston2009}\footnote{Theorem 26 in the ArXiv version.} implies $\opnorm{\mathcal U - \tilde{\mathcal{U}}} \le \epsilon/3$.
In order to ensure that also the second term on the right-hand side of \cref{eq:diamdistbound1} is bounded by $\epsilon/3$,
it suffices \cref{eq:diamdistboundalpha} to choose $n_j= \ceil{9(d-1) \abs{\alpha_j}/\epsilon}$ since $\abs{\pi a_j} \le \abs{\alpha_j}$.
Therefore, the second register encodes the size of $n_j$ in $\ceil{\log_2 \ceil{9\pi(d-1)/\epsilon}}$ qubits each.
Finally, each isometry $\decoder_n$ can be implemented to $\epsilon/(3(d-1))$ error by calling the quantum Schur transform algorithm of \cite{Bacon2007} with error parameter $\epsilon/(6(d-1))$ \cite[Lemma 12.6]{Aharonov1998}.

Each rotation is programmed using symmetric states in $\sym^{n_j}(\mathbb C^d)$.
Since we need to encode a total of $d-1$ rotations, the number of qubits needed for the total program register is
\begin{multline}
    \log_2(d_P) \le (d-1) \paren*{ \ceil*{\log_2 \ceil*{\frac{6\pi(d-1)}{\epsilon}}}+ \ceil*{\log_2 \ceil*{\frac{ 9\pi (d-1)}{\epsilon}}} } \\
    + \sum_{j=1}^{d-1} \ceil*{\log_2 \paren{\dim\paren{\sym^{n_j}(\mathbb C^d)}}}.
\end{multline}
The program cost is clearly dominated by the last term that, upon substituting $n_j= \ceil{9(d-1) \abs{\alpha_j}/\epsilon}$ in \cref{eq:symmetricCost}, can be upper bounded as
\begin{align}
    \sum_{j=1}^{d-1} \ceil*{\log_2 \paren{\dim\paren{\sym^{n_j}(\mathbb C^d)}}} \le (d-1) \sum_{j=1}^{d-1} \ceil*{\log_2 \paren*{\frac{9|\alpha_j|}{\epsilon} +  1}}  + \bigo*{d^2}
\end{align}
Since we can bound $|\alpha_j|= \bigo{1}$ for all $j$, \cref{eq:progDimUniversal} follows as required.
\end{proof}

Now we derive the gate complexity of $\mathcal E$
assuming all $n_j$ are powers of two.
(Note that we may always at most double all $n_j$ values to satisfy this condition, leading to a constant overhead.)
The construction of $\decoder_{n_j}$ \cref{eq:decoderConstruction},
together with our construction of $\apxRef_\theta$ (\cref{thm:efficientRotation}),
gives a gate complexity of
\begin{equation}
    \bigo*{\sum_{j=1}^{d-1} \poly*{n_j, d, \log \frac{1}{\delta_j}} + n_j \log \paren{n_j}} 
    = \bigo*{d \cdot \poly*{\frac{d}{\epsilon}}}
\end{equation}
since we had set $\delta_j = \bigo{\epsilon/d}$ and $n_j=9d/\epsilon$.
It is easy to see that our universal processor (\cref{fig:universalProgramming}), when supplied just copies of $\ket{\psi_j}^{\otimes n_j}$ instead of program states encoding the symmetric subspace,
has a gate complexity of $\bigo{d^2/\epsilon}$
because we then do not need the computationally expensive Schur transform.

\textcite{Yang2020} give an algorithm for universal programming with
\begin{equation}\label{eq:yangUB}
    \log_2 (d_P) \le \paren*{\frac{d^2-1}{2}} \log_2 \paren*{\frac{\bigtheta{d^2}}{\epsilon}}
\end{equation}
that requires entangled program states prepared with query access to the unitary $U^{\otimes n}$.
We observe that \cref{thm:universalProgram} has a worse scaling in $d_P$ for the limit of small $\epsilon$
but improved scaling in the limit of large $d$.

Finally, we obtain a lower bound on programming reflections by our reduction to universal programming
due to a lower bound on universal programming~\cite{Yang2020}.
The lower bound is weaker than \cref{thm:lbReflection}, but requires much less effort.
Therefore, showing a reduction to universal programming may be a straightforward way to derive lower bounds on other gate families.
By \cref{thm:universalProgram}, the unitary requiring the largest program size in number of qubits to program to error $\epsilon > 0$
can be decomposed into some $d-1$ unitary rotations $\mathcal{R}_{\psi_{j}}(\alpha_{j})$, each with error $\epsilon/(d-1)$.
Let $c_{\alpha_j}(\epsilon/(d-1))$ be the number of qubits to program $\mathcal R_{\psi_j}(\alpha_j)$ to error $\epsilon/(d-1)$,
then we use a lower bound on universal programming~\cite{Yang2020} as
\begin{equation}
    \sum_{j=1}^{d-1} c_{\alpha_j}(\epsilon/(d-1)) \ge \frac{d^2-1}{2}\log \paren*{\frac{\Theta(d^{-4})}{\epsilon}}.
\end{equation}
Setting $\epsilon \to (d-1)\epsilon$ we get 
\begin{equation}\label{eq:universalLBSum}
    \sum_{j=1}^{d-1} c_{\alpha_j}(\epsilon) \ge \frac{d^2-1}{2}\log \paren*{\frac{\Theta(d^{-5})}{\epsilon}}.
\end{equation}
There exists at least one rotation $k \in [d-1]$ that is greater than or equal to the average value
such that
\begin{equation}\label{eq:newbound}
    c_{\alpha_k}(\epsilon) \ge \sum_{j=1}^{d-1} \frac{c_{\alpha_j}(\epsilon)}{d-1} \ge \frac{d+1}{2}\log \paren*{\frac{\Theta(d^{-5})}{\epsilon}}.
\end{equation}
The family of programmable rotations requires at least $c_{\alpha_k}(\epsilon)$ qubits to program.

Although it is conceivable that the worst case rotation angle depends on the details of the programmable quantum processor,
we suspect that it might be always $\pi$
since this is the case for the algorithms described in this paper.
Assuming this, we compare \cref{eq:newbound} to the lower bound on the program cost for rotations with angle $\pi$ given in \cref{thm:lbReflection}
\begin{align}\label{eq:ourlowerbound}
    (d-1) \ln\left(\frac{\Theta(d^{-4})}{\epsilon}\right)
\end{align}
and see that the latter bound is better by a factor of 2 overall and a factor of $d$ inside the logarithm.

\section{Conclusion}\label{sec:conclusion}
We have seen how program states of the form $\ket{\psi}^{\otimes n}$ can be used to efficiently program reflections and rotations.
We have also obtained a lower bound on the program dimension without assuming a particular form for the program states. 
Our approximate programmable processors utilizing program states of the form $\ket{\psi}^{\otimes n}$ almost saturate the lower bound and thus are optimal up to logarithmic factors in $d$.
Finally, we showed that approximate programming of rotations, $\mathcal{R}_{\psi_{j}}(\alpha_{j})$,
also give a construction for a universal processor in \cref{thm:universalProgram}
with a better scaling of $d$ inside the logarithm.

\textcite{Yang2020} conjecture that the program dimension obeys
\begin{equation}\label{eq:yangDimension}
	\log d_P \sim \frac{\nu}{2} \log\paren*{\frac{C_{\nu,d}}{\epsilon}},
\end{equation}
for $\nu$ the number of real parameters of the gate family being programmed,
and $C_{\nu,d}$ is a parameter independent of $\epsilon$.
In the case of reflections about a pure state,
we confirm the conjecture \cref{eq:yangDimension} in \cref{cor:reflectionProgramDimension}
with $C_{\nu, d} = C_{2d-2,d} = \bigo{d^{-1}}$ 
since the number of real parameters is the real dimension of the manifold of pure states.
Compared to universal programming,
programming reflections requires a leading factor $(d+1)/2$ fewer samples that is both necessary and sufficient.

Our probe state \cref{eqn:rys} in the calculation of the Holevo information of a uniform ensemble of reflections has the same form as the optimal state for estimation of a general $U\in U(d)$ using an invariant cost function~\cite{PhysRevA.72.042338}.
When restricting to reflection operations,
we cannot count out the possibility that different entangled states in the $\lambda$ sectors can be used which could make the optimal sector amplitudes $\sqrt{q_{\lambda}}$ easier to identify.
Regardless, as one can see from taking the supremum of (\ref{eqn:lbchain}) over probe states $\Phi_{n}$,
the lower bound we obtain in \cref{thm:lbReflection} is a lower bound on the maximal amount of information about a reflection that can be stored in entangled states of the full space $(\mathbb{C}^{d})^{\otimes n}\otimes (\mathbb{C}^{d})^{\otimes n}$,
and we have obtained a matching upper bound to within a logarithmic factor.

There are some open questions from this work.
It could be of interest to see if some of our assumptions on the structure of an "approximate reflection channel" can be removed.
In particular, it could be interesting to allow for an isometry $V$ instead of unitary $V$.
This would be a step towards showing a bound on an arbitrary quantum channel as a processor for reflections.

Measure-and-operate algorithms, while optimal for universal programming of $U(d)$,
may not be optimal for programming proper submanifolds of $U(d)$. 
In \cref{sec:mr} we show that $n$ for a particular measure-and-operate algorithm acting on $\ket{\psi}^{\otimes n}$ has a $d$ dependence.
It is an open question whether a measure-and-operate algorithm operating on this or a different program state can achieve the performance of the algorithms described in this work.

There are also various generalizations of rotations that may be worth investigating.
We could consider programming a rotation around a subspace specified by a projector $P \subseteq \mathbb C^d$
such that the rotation approximates $e^{i \alpha P}$.
If we are provided an orthonormal basis $\set{\ket{\psi_i}}_i$ of $P$ as program states,
then this is just a special case of our universal programming algorithm (\cref{sec:universal}).
Yet, when the program states are not orthogonal it is unclear how to implement such a subspace reflection.

\subsection*{Acknowledgements}
We thank Touheed Anwar Atif, Yuxiang Yang, Giulio Chiribella, and Akihito Soeda for discussions and Rolando Somma for early stages work.
We thank Alireza Seif for pointing out \cite{Lloyd2014} and Maris Ozols for discussions on \cite{Kimmel2017}.
ES, YS, and TJV acknowledge initial support from
the Laboratory Directed Research
and Development program of Los Alamos National Laboratory (LANL) under project number 20210639ECR. 
This material is based upon work supported by the U.S. Department of Energy, Office of Science, National Quantum Information Science Research Centers, Quantum Science Center (QSC).
ES and YS were funded by the QSC to perform the analytical calculations and to write the manuscript along with the other authors.
ES finished proving bounds on rotations, added universal programming, and finished writing the manuscript at IBM.
DG was supported by the U.S.\ DOE through a quantum computing program sponsored by the LANL Information Science \& Technology Institute.
DG and YS acknowledge support from Laboratory Directed Research and Development (LDRD) program of Los Alamos National Laboratory (LANL) under project number 20230049DR.
DG was also supported by an NWO Vidi grant (Project No.\ VI.Vidi.192.109) and NWO grant NGF.1623.23.025 (“Qudits in theory and experiment”).

\printbibliography%

\appendix
\section{Dimension-dependent copy complexity: measure and reflect}\label{sec:mr}
A major result of \cite{Yang2020} is the fact that an optimal approximate universal processor for gates in $U(d)$ can be obtained by using a measure-and-operate method in which $U(d)$-valued measurement is carried out on a program state of the form $U^{\otimes n}\ket{\psi_{\text{probe}}}$.
Although program states do not necessarily have to take this restricted form,
optimality of the measure-and-operate algorithm establishes a connection between optimal approximate universal unitary programming and optimal quantum metrology.
Specifically, the connection was made by choosing $\ket{\psi_{\text{probe}}}$ according to the optimal probe states for $U(d)$ estimation (viz., the ``class states'') \cite{PhysRevA.72.042338,PhysRevLett.93.180503,PhysRevA.75.022326} in such a way that the entanglement fidelity of the measure-and-operate algorithm  has a similar mathematical structure to the average case error for covariant estimation of $U(d)$ or elapsed time estimation by a quantum clock \cite{PhysRevLett.82.2207}.
One is then left with the (non-trivial) task of porting the optimal probe state coefficients from the estimation setting to the entanglement fidelity expression.
Note that the representation-theoretic framework for finding the optimal probe state for estimation of $U(d)$ was developed by extending the covariant measurement theory for optimal estimation of covariant state families \cite{Chiribella_2011,HolevoBook}.

Motivated by the measurement-based approximate universal-NOT gate for qubits \cite{Buzek1999}, we formulate a measurement-based approximate programmable processor for $R_{\psi}$ (programmable simply by the state $\psi^{\otimes n}$, in lieu of $R_{\psi}^{\otimes n}\ket{\psi_{\text{probe}}}$) and show that it exhibits a dimension-dependent copy complexity. This result, combined with the analyses of covariant approximate reflection algorithms above, shows that measure-and-operate algorithms, while optimal for universal programming of $U(d)$, are not necessarily optimal for programming proper submanifolds of $U(d)$. 

Specifically, we define the measure-and-reflect algorithm $\mathcal{E}_{\text{MR},\psi}$
\begin{equation}
    \mathcal{E}_{\text{MR},\psi}(\phi):= \left( \text{tr}P_{n} \right) \int d\xi \, \vert \langle \xi \vert \psi \rangle\vert^{2n} \, \mathcal{R}_{\xi}(\phi),
\end{equation}
where $P_{n}$ is the projection to the symmetric subspace of $(\mathbb{C}^{d})^{\otimes n}$ and the integral over $\ket{\xi}$ is with respect to the uniform measure. $\mathcal{E}_{\text{MR}}$ can be interpreted as the expected outcome of a reflection-valued measurement. One can straightforwardly see that $\mathcal{E}_{\text{MR},\psi}$ is covariant with respect to $\langle R_{\psi},\idm\rangle'$ either by explicitly verifying the covariance condition $[U\otimes \bar{U},J_{\mathcal{E}_{\text{MR},\psi}}]=0$ for any $U\in \langle R_{\psi},\idm\rangle'$ using the Choi state $J_{\mathcal{E}_{\text{MR},\psi}}$ of $\mathcal{E}_{\text{MR},\psi}$, or directly from the definition by noting
\begin{align}
    \left( \int d\xi \, \vert \langle\xi\vert\psi\rangle\vert^{2n} \mathcal{R}_{\xi} \circ \mathcal{U} \right)(\phi) &= \int d\xi \, \vert \langle\xi\vert\psi\rangle\vert^{2n} R_{\xi}U\phi U^{\dagger}R_{\xi} \nonumber \\
    &= \int dv \, \vert \langle v \vert U^{\dagger}\vert \psi\rangle\vert^{2n} UR_{v}\phi R_{v}U^{\dagger} \nonumber\\
    &= \left( \mathcal{U}\circ \int dv \, \vert \langle v \vert \psi\rangle\vert^{2n} \mathcal{R}_{v} \right)(\phi)
\end{align}
where $\ket{v}=U^{\dagger}\ket{\xi}$.

\begin{theorem} For $n$ the number of copies of $\ket{\psi}$ used in the measure and reflect algorithm and for any dimension $d\ge 2$,
\begin{equation}
    {8(n+1)(d-1)\over (n+d+1)(n+d)} \le \opnorm{\mathcal{E}_{\mathrm{MR},\psi} - \mathcal{R}_{\psi}}
    \end{equation}
    with equality obtained for $d=2$.
For all $d\ge 2$, the following asymptotic holds as $n\rightarrow \infty$:
    \begin{equation}
     \opnorm{\mathcal{E}_{\mathrm{MR},\psi} - \mathcal{R}_{\psi}} \sim 
{4(d+\sqrt{d(d-2)+1}-1)\over n} 
\label{eqn:mrasymp}
\end{equation}
\end{theorem}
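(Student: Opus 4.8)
The plan is to use covariance to pass to the worst-case family $\ket{\phi_p}$ and then compute both channels on it explicitly. Since $\mathcal E_{\mathrm{MR},\psi}$ is covariant with respect to $\langle R_\psi,\idm\rangle'$ (just verified above) and so is $\mathcal R_\psi$, and since the proof of \cref{lem:twirledStates} uses \emph{only} this covariance (never the specific form $\apxRefCh_{\psi,V}$), it applies verbatim and gives
\[
  \opnorm{\mathcal E_{\mathrm{MR},\psi}-\mathcal R_\psi}=\max_{p\in[0,1]}\norm*{\idm_R\otimes(\mathcal R_\psi-\mathcal E_{\mathrm{MR},\psi})(\phi_p)}_1 .
\]

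Setting $\ket\psi=\ket0$ and writing $\xi_i:=\langle i\vert\xi\rangle$, I would first record that $\binom{n+d-1}{d-1}\abs*{\langle\xi\vert\psi\rangle}^{2n}\,d\xi$ is a probability measure (because $\binom{n+d-1}{d-1}\int d\xi\,\proj{\xi}^{\otimes n}$ is the symmetric projector), whose only relevant statistic is $t:=\abs{\xi_0}^2$ with Beta marginal $\propto t^n(1-t)^{d-2}$; this gives $\mathbf E[t]=\tfrac{n+1}{n+d}$, $\mathbf E[t^2]=\tfrac{(n+1)(n+2)}{(n+d)(n+d+1)}$, and the fourth moments (e.g.\ $c_2:=\mathbf E[\abs{\xi_0}^2\abs{\xi_i}^2]$, $c':=\mathbf E[\abs{\xi_i}^2\abs{\xi_j}^2]$, $\mathbf E[\abs{\xi_i}^4]$) from the Dirichlet distribution. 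Expanding $R_\xi\proj a R_\xi$ and integrating term by term (only index pairings that survive the $U(d-1)$ symmetry contribute) yields $\Delta:=\mathcal R_\psi-\mathcal E_{\mathrm{MR},\psi}$ in closed form, e.g.\ $\Delta(\proj0)=4(d-1)c_2\proj0-4c_2\sum_{k\ge1}\proj k$ with $c_2=\tfrac1{d-1}(\mathbf E[t]-\mathbf E[t^2])=\tfrac{n+1}{(n+d)(n+d+1)}$, while $\Delta(\ketbra0i)$ is a real multiple $\kappa\,\ketbra0i$. Assembling $\idm_R\otimes\Delta(\phi_p)$ and block-diagonalizing by the symmetry, the output splits into a $2\times2$ block coupling $\ket{\phi_0}$ and $\ket{\phi_1}$, an eigenvalue of size $O(1/n^2)$ and multiplicity $d-2$, and a family of diagonal entries on the remaining product vectors, so the trace norm is the $2\times2$ trace norm plus the sum of the moduli of the diagonal entries.

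For the lower bound I would simply use the product input $\ket\psi$ (equivalently $p=1$): then $\norm{\Delta(\psi)}_1=8(d-1)c_2=\tfrac{8(n+1)(d-1)}{(n+d)(n+d+1)}$, and since the diamond norm dominates the trace norm on any input this is a lower bound. For $d=2$ the higher sectors are absent and every coefficient is a multiple of $c_2$: the $2\times2$ block equals $4c_2\,\big(\begin{smallmatrix}p&-\sqrt{p(1-p)}\\-\sqrt{p(1-p)}&1-p\end{smallmatrix}\big)$, which is rank one with trace norm $4c_2$, and the two diagonal entries contribute $4c_2p+4c_2(1-p)=4c_2$; hence the trace norm is $8c_2$ for \emph{every} $p$ and the lower bound is an equality.

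For the asymptotics I would maximize the exact trace norm $T(p)$ over $p$. To leading order in $1/n$ the $2\times2$ block has trace norm $\tfrac4n\sqrt{1+d(d-2)p}$—the discriminant $(1+(d-2)p)^2+(d-2)^2p(1-p)$ collapses to $1+d(d-2)p$—and the surviving diagonal entries contribute $\tfrac4n\big(1+(d-2)p\big)$, so $nT(p)\to 4\big((1+(d-2)p)+\sqrt{1+d(d-2)p}\big)$ uniformly on $[0,1]$; this is increasing in $p$, hence maximized at $p=1$, giving $\tfrac4n\big((d-1)+\sqrt{d(d-2)+1}\big)=\tfrac{4(d+\sqrt{d(d-2)+1}-1)}{n}$. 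The main obstacle is twofold: the fourth-moment bookkeeping that assembles the block structure is routine but error-prone, and pinning down the exact finite-$n$ maximizer is delicate. Indeed, whether the optimum sits at the boundary $p=1$ (making the lower bound tight) or strictly inside is governed by the sign of $\partial_p T|_{p=1}$, which I expect to vary with $d$ and $n$; correctly handling this, and thereby confirming both the equality at $d=2$ and the precise asymptotic constant, requires carrying the subleading corrections to $c_2$, $\kappa$, and the fourth moments rather than their leading values alone.
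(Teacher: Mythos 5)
Your proposal is correct and follows essentially the same route as the paper: reduce to the one-parameter family $\ket{\phi_p}$ via covariance and \cref{lem:twirledStates}, compute the difference channel explicitly, split the output into a $2\times 2$ block plus diagonal entries, take $p=1$ for the lower bound, verify the $d=2$ case exactly (trace norm $8c_2$ independent of $p$), and obtain the asymptotic constant by uniform convergence of $nT(p)$ and maximization of the limit at $p=1$ --- your Beta/Dirichlet moments are just the paper's symmetric-projector trace ratios in different packaging (e.g.\ $c_2 = \tfrac{n+1}{(n+d)(n+d+1)} = \tfrac{\mathrm{tr}P_n}{(n+2)\,\mathrm{tr}P_{n+2}}$). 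Your closing worry about the exact finite-$n$ maximizer is moot for all three claims, for precisely the reasons your own argument supplies: the lower bound needs no maximizer, the $d=2$ trace norm is constant in $p$, and the asymptotics need only uniform convergence, which is also how the paper handles it.
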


\begin{proof}
Use orthonormal basis $\lbrace \ket{\psi},\ket{\psi_{1}},\ldots,\ket{\psi_{d-1}}\rbrace$ for $\mathbb{C}^{d}$ and consider the pure state $\ket{\phi_{p}}_{RS}$ from Lemma \ref{lem:twirledStates}
which, for some $p$, attains the maximum diamond distance due to the covariance under $\langle R_{\psi},\idm\rangle'$ of the measure and reflect channel  (\cref{lem:twirledStates}).
The action of $\mathbf{1}_{R}\otimes \mathcal{R}_{\psi}$ on $\phi_{p}$ is fully defined by
\begin{align}
    \mathcal{R}_{\psi}(\ket{\psi}\bra{\psi}) &= \ket{\psi}\bra{\psi} \nonumber \\
    \mathcal{R}_{\psi}(\ket{\psi_{i}}\bra{\psi}) &= -\ket{\psi_{i}}\bra{\psi}\nonumber \\
    \mathcal{R}_{\psi}(\ket{\psi_{i}}\bra{\psi_{j}}) &= \ket{\psi_{i}}\bra{\psi_{j}}
    \label{eqn:rhrh}
\end{align}
for $i,j = 1,\ldots, d-1$.
To calculate $\mathbf{1}_{R}\otimes \mathcal{E}_{\text{MR},\psi}(\phi_{p})$, we use the expression $P_{n} = {n+d-1\choose d-1}\int d\xi \, \ket{\xi}\bra{\xi}^{\otimes n}$ for the projection on the symmetric subspace of $(\mathbb{C}^{d})^{\otimes n}$,
and obtain for $d\ge 3$
{\allowdisplaybreaks
\begin{align}
    \mathcal{E}_{\text{MR},\psi}(\ket{\psi}\bra{\psi}) &= \left( 1-{4\text{tr}P_{n}\over \text{tr}P_{n+1}}+{4\text{tr}P_{n}\over \text{tr}P_{n+2}} \right)\ket{\psi}\bra{\psi} + {4\text{tr}P_{n}\over (n+2)\text{tr}P_{n+2}}\sum_{j=1}^{d-1}\ket{\psi_{j}}\bra{\psi_{j}}  \\
    \mathcal{E}_{\text{MR},\psi}(\ket{\psi}\bra{\psi_{j}})&= \left(1 - {2\text{tr}P_{n}\over \text{tr}P_{n+1}} - {2\text{tr}P_{n}\over (n+1)\text{tr}P_{n+1}} +{4\text{tr}P_{n}\over (n+2)\text{tr}P_{n+2}}\right) \ket{\psi}\bra{\psi_{j}} \nonumber \\
    &= \left(1  - {2(n+2)\text{tr}P_{n}\over (n+1)\text{tr}P_{n+1}} +{4\text{tr}P_{n}\over (n+2)\text{tr}P_{n+2}}\right) \ket{\psi}\bra{\psi_{j}}  \\ 
    \mathcal{E}_{\text{MR},\psi}(\ket{\psi_{i}}\bra{\psi_{i}})&= \left( 1-{4\text{tr}P_{n}\over (n+1)\text{tr}P_{n+1}} +{8\text{tr}P_{n}\over (n+2)(n+1)\text{tr}P_{n+2}} \right)\ket{\psi_{i}}\bra{\psi_{i}} \nonumber \\
    &{} +{4\text{tr}P_{n}\over (n+2)\text{tr}P_{n+2}}\ket{\psi}\bra{\psi} \\
    &{} + \sum_{i'\neq i}{4\text{tr}P_{n}\over (n+1)(n+2)\text{tr}P_{n+2}}\ket{\psi_{i'}}\bra{\psi_{i'}}\label{eqn:hsdhsd} \\
    &= \left( 1-{4\text{tr}P_{n}\over (n+1)\text{tr}P_{n+1}} +{4\text{tr}P_{n}\over (n+2)(n+1)\text{tr}P_{n+2}} \right)\ket{\psi_{i}}\bra{\psi_{i}}\\
    &{} +{4\text{tr}P_{n}\over (n+2)\text{tr}P_{n+2}}\ket{\psi}\bra{\psi} + \sum_{i'=1}^{d-1}{4\text{tr}P_{n}\over (n+1)(n+2)\text{tr}P_{n+2}}\ket{\psi_{i'}}\bra{\psi_{i'}}  \\
    \mathcal{E}_{\text{MR},\psi}(\ket{\psi_{i}}\bra{\psi_{j}})&=\left( 1-{4\text{tr}P_{n}\over (n+1)\text{tr}P_{n+1}} +{ 4\text{tr}P_{n}\over (n+1)(n+2)\text{tr}P_{n+2}} \right)\ket{\psi_{i}}\bra{\psi_{j}},\nonumber \\
    &{}\text{ for } i\neq j .\label{eqn:gsdgsd}
\end{align}
}
Using (\ref{eqn:rhrh}) and taking $p=1$ in $\Vert (\mathbf{1}_{R}\otimes (\mathcal{E}_{\text{MR},\psi}-\mathcal{R}_{\psi})(\phi_{p}) \Vert_{1}$ gives the stated lower bound by a straightforward computation. For general $p$, the trace norm decomposes into the sum of the trace norm of a diagonal matrix and a $2\times 2$ matrix. The full result, up to $O(n^{-2})$ terms coming from the trace norm of the $2\times 2$ matrix is
\begin{align}
\Vert (\text{Id}_{R}\otimes (\mathcal{E}_{\text{MR},\psi}-\mathcal{R}_{\psi})(\phi_{p}) \Vert_{1} &= {4(d-1)p(n+1)\over(n+d)(n+d+1)} + {4(1-p)\over n+d+1}\nonumber \\
&{}+ {4\sqrt{1+dp(d-2)}\over n} + O(n^{-2}) \nonumber \\
&= {4(d-1)p\over n} + {4(1-p)\over n} + {4\sqrt{1+dp(d-2)}\over n}\nonumber\\
&{} + O(n^{-2})\label{eqn:werwer}
\end{align}
where the second line considers the asymptotic in $n$ of the exact expressions in the first line. For all $d$, the supremum over $p$ is obtained for $p=1$, so
\begin{align}
\lim_{n\rightarrow \infty}n\sup_{p}\Vert (\text{Id}_{R}\otimes (\mathcal{E}_{\text{MR},\psi}-\mathcal{R}_{\psi})(\phi_{p}) \Vert_{1}&= \lim_{n\rightarrow \infty}\sup_{p}n\Vert (\mathbf{1}_{R}\otimes (\mathcal{E}_{\text{MR},\psi}-\mathcal{R}_{\psi})(\phi_{p}) \Vert_{1}\nonumber \\
&= \sup_{p}\lim_{n\rightarrow \infty}n\Vert (\mathbf{1}_{R}\otimes (\mathcal{E}_{\text{MR},\psi}-\mathcal{R}_{\psi})(\phi_{p}) \Vert_{1}\nonumber \\
&= 4(d+\sqrt{1+d(d-2)}-1)
\end{align}
where the second equality holds due to the uniform convergence shown in (\ref{eqn:werwer}).

To show that $\opnorm{\mathcal{E}_{\text{MR},\psi} - \mathcal{R}_{\psi}} \sim 
{8\over n}$ for $d=2$, and therefore satisfies the asymptotic in (\ref{eqn:mrasymp}), one notes that (\ref{eqn:hsdhsd}) and (\ref{eqn:gsdgsd}) do not occur for $d=2$. One finds that 
\begin{equation}
\Vert (\text{Id}_{R}\otimes (\mathcal{E}_{\text{MR},\psi}-\mathcal{R}_{\psi})(\phi_{p}) \Vert_{1} = {8(n+1)\over (n+2)(n+3)}
\end{equation}
regardless of $p$, so $\lim_{n\rightarrow \infty}n\sup_{p}\Vert (\mathbf{1}_{R}\otimes (\mathcal{E}_{\text{MR},\psi}-\mathcal{R}_{\psi})(\phi_{p}) \Vert_{1} = 8$. \end{proof}

Note that given a program state $\bigotimes_{j=1}^{K}\ket{\psi_{j}}^{\otimes n}$, one can define a measure-and-reflect version of subspace reflection.

\section{Proof of \cref{lem:compchan}}\label{sec:compproof}

The preparation channel $\Prep(\psi)_{S\rightarrow P}$ was defined in (\ref{eqn:prepchannelstine}).
With $G=\langle \idm,R_{\psi}\rangle'$ defining the $U(d-1)$ subgroup that commutes with $\psi$, one verifies the $G$-covariance of $\widehat{\mathcal{E}}_{\psi,R_{\theta}} - \Prep(\psi)_{S\rightarrow P}$ and uses the twirling argument from the proof of \cref{lem:twirledStates} to conclude that the optimization over states of $SR$ that appears in diamond distance $\opnorm{\widehat{\mathcal{E}}_{\psi, \apxRef_{\theta}} - \Prep(\psi)_{S\rightarrow P}}$ can be restricted, without loss of generality, to the one-parameter set of states $\phi_{p}$, $p\in [0,1]$ in \cref{lem:twirledStates}
\begin{equation}
 \Vert \widehat{\mathcal{E}}_{\psi, \apxRef_{\theta}} - \Prep(\psi)_{S\rightarrow P}\Vert_{\diamond}=\max_{p\in [0,1]}\Vert \text{Id}_{R}\otimes \left( \widehat{\mathcal{E}}_{\psi, \apxRef_{\theta}} - \Prep(\psi)_{S\rightarrow P}\right) (\phi_{p})_{RS}\Vert_{1}.
 \end{equation}
One finds that
\begin{equation}
    \text{Id}_{R}\otimes \Prep(\psi)_{S\rightarrow P}(\phi_{p}) = \left( p\ketbra{0}{0}_{R} + {1-p\over d-1}(\idm_{R} - \ketbra{0}{0}_{R})\right)\otimes \psi^{\otimes n}_{P}.
\end{equation}
Expanding the expression $\text{Id}_{R}\otimes \widehat{\mathcal{E}}_{\psi,R_{\theta}}(\phi_{p})$, one obtains
\begin{align}
&{}\text{tr}_{S}\left[ \left( \idm_{R}\otimes \left( \idm_{SP} + {e^{i\theta}-1\over n+1}\sum_{\ell=0}^{n}C^{\ell} \right) \right)  ((\phi_{p})_{RS} \otimes \psi^{\otimes n}_{P}) \left( \idm_{R}\otimes \left( \idm_{SP} + {e^{-i\theta}-1\over n+1}\sum_{\ell=0}^{n}C^{-\ell} \right) \right) \right]\nonumber \\
 &= \text{Id}_{R}\otimes \Prep(\psi)_{S\rightarrow P}(\phi_{p}) \nonumber \\
 &+ \left( {e^{i\theta}-1\over n+1} \ptrace*{S}{ \left((n+1)\sqrt{p}\ket{0}_{R}\ket{\psi}_{S}\ket{\psi}^{\otimes n}_{P} + \sqrt{1-p\over d-1}\sum_{i=1}^{d-1}\sum_{\ell=0}^{n}\ket{i}_{R}C^{\ell}\ket{\psi_{i}}_{S}\ket{\psi}^{\otimes n}_{P}\right) \bra{\phi_{p}}_{RS}\bra{\psi}^{\otimes n}_{P}} \right. \nonumber \\
 &{} \qquad \left. + \text{h.c.} \vphantom{{e^{i\theta}-1\over n+1}}\right) \nonumber \\
 &+ {2-2\cos\theta \over (n+1)^{2}} \ptrace*{S}{\left( (n+1)\sqrt{p}\ket{0}_{R}\ket{\psi}_{S}\ket{\psi}^{\otimes n}_{P} + \sqrt{1-p\over d-1}\sum_{i=1}^{d-1}\sum_{\ell=0}^{n}\ket{i}_{R}C^{\ell}\ket{\psi_{i}}_{S}\ket{\psi}^{\otimes  n}_{P} \right) \left( \vphantom{\sum_{\ell=0}^{n}} \cdots \vphantom{\sum_{\ell=0}^{n}}\right)^{\dagger} }
\end{align}
where $(\cdots)^{\dagger}$ refers to the conjugate vector of the immediately preceding vector.

In applying the cyclic permutations to the $n+1$ qudit register $SP$, we decompose the register $P=P_{1}\cdots P_{n}$ and use the notation $P\setminus P_{\ell} = P_{1}\cdots P_{\ell-1}P_{\ell+1}\cdots P_{n}$.
We now explicitly write the partial trace $\text{tr}_{S}$ of $\text{Id}_{R}\otimes( \widehat{\mathcal{E}}_{\psi,R_{\theta}} 
 - \Prep_{S\rightarrow P} ) (\phi_{p})$ explicitly so that the support of the state on $RP$ becomes clear. 
\begin{align}
    &{} \left( {e^{i\theta}-1\over n+1} \left( \vphantom{\sqrt{p(1-p)\over d-1}} (n+1)p\ket{0}\bra{0}_{R}\otimes \psi^{\otimes n}_{P}  + {1-p\over d-1}(\idm_{R}-\ket{0}\bra{0}_{R})\otimes \psi^{\otimes n}_{P} \right. \right. \nonumber \\
    & \left.\left. +\sqrt{p(1-p)\over d-1}\sum_{i=1}^{d-1}\ket{i}\bra{0}_{R} \otimes \sum_{\ell=1}^{n}\ket{\psi_{i}}_{P_{\ell}}\ket{\psi}^{\otimes n-1}_{P\setminus P_{\ell}}\bra{\psi}^{\otimes n}_{P}\right)  +h.c. \right)  \nonumber \\
    &+ {2-2\cos\theta \over (n+1)^{2}}\left[ \vphantom{\sqrt{p(1-p)\over d-1}} (n+1)^{2}p\ket{0}\bra{0}_{R}\otimes \psi^{\otimes n}  \right. \nonumber \\
    & \left. +\left( (n+1)\sqrt{p(1-p)\over d-1}\sum_{i=1}^{d-1}\ket{i}\bra{0}_{R}\otimes \sum_{\ell=1}^{n-1}\ket{\psi_{i}}_{P_{\ell}}\ket{\psi}^{\otimes n-1}_{P\setminus P_{\ell}}\bra{\psi}^{\otimes n}_{P} + h.c. \right) \right. \nonumber \\
    &{} \left. +{1-p\over d-1} \sum_{i,j=1}^{d-1}\ket{i}\bra{j}_{R}\otimes \left(\delta_{i,j}\psi^{\otimes n}_{P} + \sum_{\ell=1}^{n}\sum_{\ell'=1}^{n}\ket{\psi_{i}}_{P_{\ell}}\ket{\psi}^{\otimes n-1}_{P\setminus P_{\ell}}\bra{\psi_{j}}_{P_{\ell'}}\bra{\psi}^{\otimes n-1}_{P\setminus P_{\ell'}}\right)\right].
\end{align}
One notices that the terms with the projection $\ket{0}\bra{0}$ in the $R$ register  (which appear only in the first term of the first line, the Hermitian conjugate of the the first term of the first line, and in the third line) sum to zero.  Collecting also the contribution from ${1-p\over d-1}(\idm_{R} - \ket{0}\bra{0}_{R})\otimes \psi^{n}_{P}$, we obtain
\begin{align}
    &{} {1-p\over d-1}\left( {2\cos\theta -2 \over n+1} + {2-2\cos\theta \over (n+1)^{2}}\right) (\idm_{R} - \ket{0}\bra{0}_{R})\otimes \psi^{n}_{P}  \nonumber \\
    & + \paren*{{e^{i\theta}-1\over n+1}\sqrt{np(1-p)}\ket{\xi}\bra{\sigma}_{RP} + h.c.} \nonumber \\
    &+ {2-2\cos\theta \over (n+1)^{2}} \left( (n+1)\sqrt{np(1-p)}\ket{\xi}\bra{\sigma}_{RP} +h.c. \right) \nonumber \\
    &+ {(2-2\cos\theta)n(1-p) \over (n+1)^{2}}\ket{\xi}\bra{\xi}_{RP}
    \label{eqn:ttyy}
\end{align}
where we introduced the orthonormal states
\begin{align}
    \ket{\sigma}_{RP}&:=\ket{0}_{R}\ket{\psi}^{n}_{P} \nonumber \\
    \ket{\xi}_{RP}&:= {1\over \sqrt{n(d-1)}}\sum_{i=1}^{d-1}\ket{i}_{R}\otimes \sum_{\ell=1}^{n}\ket{\psi_{i}}_{P_{\ell}}\ket{\psi}^{\otimes n-1}_{P\setminus P_{\ell}}.
\end{align}

The first line of (\ref{eqn:ttyy}) is proportional to a projection that is orthogonal to the last three lines of the expression. Therefore, the trace norm of (\ref{eqn:ttyy}) becomes
\begin{align}\Vert & \text{Id}_{R}\otimes (\widehat{\mathcal{E}}_{\psi, \apxRef_{\theta}} - \Prep(\psi)_{S\rightarrow P})(\phi_{p})_{RS}\Vert_{1} = {n(1-p)\vert w(\theta)\vert^{2}\over (n+1)^{2}} \nonumber \\
    &+ \norm*{{\sqrt{np(1-p)}\over n+1}\left( w(\theta)\ket{\xi}\bra{\sigma} + h.c. \right) + {n(1-p)\vert w(\theta)\vert^{2}\over (n+1)^{2}}\ket{\xi}\bra{\xi}}_{1}\nonumber \\
    &= {n(1-p)\vert w(\theta)\vert^{2}\over (n+1)^{2}} \nonumber \\
&+ \sqrt{ {\vert w(\theta)\vert^{4}n^{2}(1-p)^{2}\over (n+1)^{4}}+{4np(1-p)(2-2\cos\theta)\over (n+1)^{2}}}.
    \label{eqn:interminterm}
\end{align}
where $w(\theta):=1-2\cos\theta + e^{i\theta}$, $\vert w(\theta)\vert^{2}=4\sin^{2}\left( {\theta\over 2} \right)$.
For large $n$, (\ref{eqn:interminterm}) is asymptotically equal to 
\begin{equation}
    2\vert w(\theta) \vert \sqrt{{p(1-p)\over n}}
\end{equation}
which is maximized by $p=1/2$, yielding the asymptotic in (\ref{eqn:compldiam}).

\section{Derivation of diamond distance bounds using Kraus operators}

In this appendix we first provide an alternative derivation of the Kraus operators of the approximate reflection channel and its complementary channel. Then we use the Kraus operators to upper bound the diamond distance between the approximate reflections channel and the exact reflection channel and lower bound the diamond distance between the complementary channel and the state preparation channel for the program state.

\subsection{Alternative Derivation of Kraus Operators}

\subsubsection{Approximate Reflection Channel\label{sec:directkraus}}

 Let us consider the following basis for the entire program register $P\cong (\mathbb{C}^{d})^{\otimes n}$
\begin{align}
    \ket{\Psi_{\vec{j}}}_{P} = \bigotimes_{m=1}^{n} \ket{\psi_{j_m}}_{P_{m}}
\end{align}
where $\vec{j}=(j_1,\dots,j_n)$ with $j_m \in \{0,\dots,d-1\}$ and $\{\ket{\psi_j}\}_{j=0}^{d-1}$ is an orthonormal basis for the Hilbert space of the system such that $\ket{\psi_0} = \ket{\psi}$.
Given the Stinespring dilation of the programmable reflection channel \begin{align}
\mathcal{E}(\rho) =\text{tr}_{P} \left( V_{SP} \left(\rho_S \otimes \psi_{P}^{\otimes n}\right) V^\dagger_{SP} \right),
\end{align} the Kraus operators can be expressed as
\begin{align}
    K_{\vec{j}} = \bra{\Psi_{\vec{j}}}_P V_{SP} \ket{\psi}^{\otimes  n}_P.
\end{align}
For $V=\sum_{\ell=0}^n c_\ell C^\ell$ it can be seen by inspection that $K_{\vec{j}}=0$ whenever more than one of the indices $j_i$ are nonzero. The only nonzero Kraus operators are then 
\begin{align}
    K_0 &:= K_{(0,\dots,0)} = c_0 \idm + \sum_{\ell=1}^n c_\ell \ketbra{\psi}{\psi} = c_0 \idm -(c_0-\tilde{c}_0 )\ketbra{\psi}{\psi} \\
    K_{\ell,j} &:= K_{(0,\dots,j,\dots,0)} = c_\ell \ketbra{\psi}{\psi_j} \qquad j > 0 \, \text{in the }\ell\text{-th position}.
\end{align}
This makes a total of $1+n(d-1)$ Kraus operators. It turns out these can be further grouped together by noticing:
\begin{align}
    \sum_{\ell=1}^n K_{\ell,j} (\cdot) K_{\ell,j}^\dagger 
    &=\sum_{\ell=1}^n |c_\ell|^2 \bra{\psi_j}(\cdot)\ket{\psi_j} \ketbra{\psi}{\psi} \\
    &= \sum_j K_j (\cdot) K_j^\dagger,
    \intertext{where we defined, for $j>$0,}
    K_j &\coloneqq \sqrt{\sum_{\ell=1}^n |c_\ell|^2} \, \ketbra{\psi}{\psi_j}.
\end{align}
This way we end up with $d$ Kraus operators $\{ K_j\}_{j=0}^{d-1}$.
These agree with \cref{eqn:krausdirect} from the Choi matrix.

\subsubsection{Complementary Channel\label{sec:cckraus}}

To obtain the Kraus operators of the complementary channel we use the orthonormal basis for the system register $\{\ket{\psi_j}\}_{j=0}^{d-1} $ with $\ket{\psi_0} = \ket{\psi}$  and write
\begin{align}
    K_j = \bra{\psi_j}_S V \ket{\psi}_P^{\otimes  n}. 
\end{align}
Since $V=\sum_{\ell=0}^n c_\ell C^l$
\begin{align}
    K_j = \sum_{\ell=0}^{n} c_\ell \bra{\psi_j}_S C^\ell \ket{\psi}_P^{\otimes n} 
\end{align}
By inspections we see that
\begin{align}
    K_0 &= c_0 \ket{\psi}^{\otimes n}_{P} \bra{\psi}_{S} + \sum_{\ell=1}^{n} c_\ell \ket{\psi}^{\otimes  n-1}_{P\setminus P_{\ell}} \otimes I_{S\rightarrow P_{\ell}}\\
    K_{j} &= c_0 \ket{\psi}^{\otimes  n}_{P} \bra{\psi_j}_{S}, & \text{for } j &> 0.
\end{align}
In the above, $I_{S\rightarrow P_{\ell}} = \sum_{j=0}^{d-1} \ket{\psi_j}_{P_l}\bra{\psi_j}_S$ is the identity isometry from the system register to the $l$-th program register on the left. One has $I^{\dagger}_{S\rightarrow P} = I_{P\rightarrow S}$.

We now verify the normalization $\sum_{j=0}^{d-1} K_j^\dagger K_j = \idm_{S}$:
\begin{align}
    \sum_{j=0}^{d-1} K_j^\dagger K_j &= \sum_{j=1}^{d-1} K_j^\dagger K_j + K_0^\dagger K_0 \\
    \sum_{j=1}^{d-1} K_j^\dagger K_j &= |c_0|^2 \sum_{j=1}^{d-1} \ketbra{\psi_j}{\psi_j}_{S} \label{eqn:iuiuiu}\\
    K_0^\dagger K_0 &= \left( \bar{c}_0 \ket{\psi}_S\bra{\psi}^{\otimes n}_{P} + \sum_{\ell'=1}^n \bar{c}_{\ell'} \bra{\psi}^{\otimes  n-1}_{P\setminus P_{\ell'}}\otimes I_{P_{\ell'}\rightarrow S}  \right) \left( c_0 \ket{\psi}^{\otimes  n}_{P}\bra{\psi}_{S} + \sum_{\ell=1}^n c_{\ell} \ket{\psi}^{\otimes  n-1}_{P\setminus P_{\ell}}\otimes I_{S\rightarrow P_{\ell}}\right) \\
    &= |c_0|^2 \ketbra{\psi}{\psi}_{S} + \left(\bar{c}_0\sum_{\ell=1}^n c_\ell + c_0\sum_{\ell=1}^n \bar{c}_\ell  \right) \ketbra{\psi}{\psi}_{S} \\
    &\qquad \qquad+ \sum_{\ell,\ell'=1}^n \bar{c}_{\ell'} c_l \left( \bra{\psi}^{\otimes  n-1}_{P\setminus P_{\ell'}}\otimes I_{P_{\ell'}\rightarrow S}  \right) \left( \ket{\psi}^{\otimes  n-1}_{P\setminus P_{\ell}}\otimes I_{S\rightarrow P_{\ell}} \right)\\
    &= |c_0|^2 \ketbra{\psi}{\psi}_{S} + \left(\bar{c}_0\sum_{\ell=1}^n c_\ell + c_0\sum_{\ell=1}^n \bar{c}_\ell + \left| \sum_{\ell=1}^n c_\ell\right|^2 \right) \ketbra{\psi}{\psi}_{S} + \sum_{\ell=1}^n |c_\ell|^2 \sum_{j=1}^{d-1} \ketbra{\psi_j}{\psi_j}_{S} \nonumber \\
    &= \vert \tilde{c}_{0}\vert^{2} \ket{\psi}\bra{\psi}_{S} + \sum_{\ell=1}^n |c_\ell|^2 \sum_{j=1}^{d-1} \ketbra{\psi_j}{\psi_j}_{S}
    \label{eqn:iuiu}
\end{align}
Adding together (\ref{eqn:iuiuiu}) and (\ref{eqn:iuiu}) gives
\begin{align}
    \sum_{j=0}^{d-1} K_j^\dagger K_j &= 
    \ketbra{\psi}{\psi} + \sum_{\ell=0}^n |c_\ell|^2 \sum_{j=1}^{d-1} \ketbra{\psi_j}{\psi_j}_{S}\\
    &= \idm_{S}.
\end{align}

\subsection{Bounds on Diamond Distance\label{sec:bdd}}

\subsubsection{Approximate Reflection Channel\label{sec:chanupper}}

We define $\mathcal{K}_j(\cdot) = K_j (\cdot) K_j^\dagger$.
\begin{align}
    \opnorm{\rotChannel - \apxRefCh} &= \opnorm{\rotChannel - \sum_{j=0}^{d-1} \mathcal{K}_j} \\
    &\le \opnorm{\rotChannel-\mathcal{K}_0} + \opnorm{\sum_{j=1}^{d-1}\mathcal{K}_j}
\end{align}
For the first term
\begin{align}
    \opnorm{\rotChannel-\mathcal{K}_0} &= \max_\rho \norm{(R_\psi\otimes \idm)\rho(R_\psi^\dagger\otimes \idm) - (K_0\otimes \idm)\rho(K_0^\dagger\otimes \idm)}_1 \\
    &= \max_\rho \norm{(R_\psi\otimes \idm)\rho(R_\psi^\dagger\otimes \idm) - (K_0\otimes \idm)\rho (R_\psi^\dagger\otimes \idm) + (K_0\otimes \idm)\rho (R_\psi^\dagger\otimes \idm) - (K_0\otimes \idm)\rho(K_0^\dagger\otimes \idm)}_1 \\
    &\le \max_{\rho} \norm{((R_\psi - K_0)\otimes \idm)\rho (R_\psi^\dagger\otimes \idm)}_1 + \max_{\rho'} \norm{(K_0\otimes \idm)\rho' ((R_\psi^\dagger- K_0^\dagger)\otimes \idm)}_1, \\
    \intertext{where we used triangle inequality and the fact that maximization of each term individually can only increase the sum of the norms.
    The maximum is achieved by a pure state~\cite[Theorem~3.39]{WatrousBook}, in}
    & = \max_{\ket{\psi}} \norm{((R_\psi - K_0)\otimes \idm)\ket{\psi}}\norm{(R_\psi \otimes \idm) \ket\psi} + \max_{\ket{\psi'}} \norm{(K_0\otimes \idm)\ket{\psi'}} \norm{((R_\psi - K_0)\otimes \idm) \ket{\psi'}}. \\
    \intertext{
    By maximizing $\ket\psi$ and $\ket{\psi'}$ separately between the product of norms,
we can upper bound the individual norms can be maximized by the individual spectral norms (Schatten-$\infty$ norms) of the operators, so}
    &\le \norm{R_\psi - K_0} \norm{R^\dagger} + \norm{K_0} \norm{R^\dagger-K_0^\dagger} \\
    &\le 2 \norm{R_\psi-K_0}. \label{eq:1}
\end{align}
Finally, for the last inequality we used the fact the $\norm{K_0} \le 1$.
\begin{align}
    \norm{R_\psi-K_0} &= \norm{\idm-2\ketbra{\psi}{\psi} -(c_0 \idm -(c_0-\tilde{c}_0)\ketbra{\psi}{\psi})} \\
    &= \norm{(1-c_0)(\idm-\ketbra{\psi}{\psi}) - (1+\tilde{c}_0)\ketbra{\psi}{\psi}} \\
    &= \max(|1-c_0|,|1+\tilde{c}_0|)
\end{align}
Plugging this back into Eq.~\eqref{eq:1} we obtain 
\begin{align}
    \opnorm{\rotChannel - \apxRefCh} \le 4 \max(|1-c_0|,|1+\tilde{c}_0|) \, .
\end{align}
Recall that 
\begin{align}
    c_0 &= 1 + \frac{e^{i\theta}-1}{n+1} \\
    \tilde{c}_0 &= \frac{n e^{i\theta} +1}{n+1}
\end{align}
Then
\begin{align}
    |1-c_0| &= \left| \frac{1-e^{i\theta}}{n+1} \right| = \frac{\sqrt{(1+\cos\theta)^2+\sin^2\theta}}{n+1} \\
    |1+\tilde{c}_0| &= \frac{\sqrt{(2+n(1+\cos\theta))^2+n^2 \sin^2\theta}}{n+1}
\end{align}
With straightforward algebra it can be verified that $|1+\tilde{c}_0|>|1-c_0|$ for all $n>0$ and $\theta \in [0,2\pi)$.
Thus
\begin{align}
    \opnorm{\rotChannel - \apxRefCh} \le 4 \frac{\sqrt{(2+n(1+\cos\theta))^2+n^2 \sin^2\theta}}{n+1} \, .
\end{align}
For all $n$ this is minimized by $\theta=\pi$ which corresponds to HL algorithm (can be seen setting the derivative to 0). The upper bound in this case is
\begin{align}
    \opnorm{\rotChannel - \apxRefCh} \le \frac{8}{n+1} \, .
\end{align}
compared to the exact expression we obtained $\opnorm{\rotChannel - \apxRefCh}= 8n/(n+1)^2$. These agree to leading order and only differ at $O(1/n^2)$.

If we plug in the $\theta^*$ corresponding to the optimal algorithm we get a looser bound
\begin{align}
    \opnorm{\rotChannel - \apxRefCh} \le 4\sqrt{\frac{n^2 \sin^2\theta +(n+n\cos\theta+2)^2}{(n+1)^2}} = \frac{16}{n}-\frac{32}{n^2} + O(\frac{1}{n^3}) \, .
\end{align}
compared to the exact expression we obtained $\opnorm{\rotChannel - \apxRefCh}= 8(n+2)/(8+4n+n^2)$. These don't agree at the leading order. In fact this upper bound is worse than the upper bound we got for HL algorithm which was $8/n$ to leading order. 

\subsubsection{Complementary Channel\label{sec:compchanlower}}
Here we want to lower bound (not upper bound) the diamond distance between the program state preparation channel and the complementary channel of the approximate reflection channel. 
The Kraus operators of the state preparation channel are given by:
\begin{align}
    M_j &= \ket{\psi}^{\otimes n} \bra{\psi_j}
    \label{eqn:uyuyuy}
\end{align}
Let $\mathcal{M}_j(\cdot) = M_j(\cdot)M_j^\dagger$ for any linear operator $M_{j}$. 
We want to upper bound
\begin{align}
    \opnorm{\Prep_{S\rightarrow P}(\psi) -  \widehat{\mathcal{E}}_{\psi, \apxRef_{\theta}}} &= \opnorm{\sum_{j=0}^{d-1} (\mathcal{M}_j-\mathcal{K}_j)} \\
    &\ge  \opnorm{ \mathcal{M}_0-\mathcal{K}_0}- \opnorm{\sum_{j=1}^{d-1} (\mathcal{M}_j-\mathcal{K}_j)}  \label{eq:2}
\end{align}
First we show that the second term in Eq.~\eqref{eq:2} is $O(1/n)$. Using the Kraus operators in (\ref{eqn:compkraus}) and (\ref{eqn:uyuyuy})
\begin{align}
    \opnorm{\sum_{j=1}^{d-1} (\mathcal{M}_j-\mathcal{K}_j)} &= |1-c_0| \opnorm{\sum_{j=1}^{d-1}\mathcal{M}_j} \\
    &\le |1-c_0| \opnorm{\Prep(\psi)_{S\rightarrow P}}\\
    &= |1-c_0|\\
    &= \frac{|1-e^{i\theta}|}{n+1} \\
    &= O\left(\frac{1}{n}\right)
\end{align}
where the inequality uses the fact that $$\sum_{j=1}^{d-1}\mathcal{M}_{j}(\rho_{S}) = \text{prep}(\psi)_{S\rightarrow P}\left( (\idm_{S}-\psi_{S})\rho_{S}(\idm_{S}-\psi_{S})\right).$$

Next we want to argue the first term in Eq.~\eqref{eq:2} is $\Omega(1/\sqrt{n})$. It will be convenient to use the Kraus operator $K_0$ from (\ref{eqn:compkraus}) 
\begin{align}
    \opnorm{ \mathcal{M}_0-\mathcal{K}_0} &= \max_{\rho_{RS}} \norm{(\idm_{R}\otimes M_0)\rho_{RS}(\idm_{R}\otimes M_0^\dagger)-(\idm_{R}\otimes K_0)\rho_{RS}(\idm_{R}\otimes K_0^\dagger) }_1 
\end{align}
where $I$ is the identity on an auxiliary register $R$.
Using the fact that $\vert \tilde{c}_{0}\vert^{2}=1$, and restricting the maximization to states on $S$, we have 
\begin{align}
\label{eq:3}
    \opnorm{ \mathcal{M}_0-\mathcal{K}_0} &\ge \max_\rho \| \sum_{j>0}( \tilde{c}_0\ket{\psi}^{ \otimes n}_{P}\bra{\psi}_{S})\rho( \sum_{\ell=1}^n \bar{c}_l \ket{\psi_{j}}_{S}\bra{\psi}^{\otimes  n-1}_{P\setminus P_{\ell}} \bra{\psi_{j}}_{P_{\ell}}  + \text{h.c.} ) \\ \nonumber
    &+ \sum_{j,j'>0}(\sum_{\ell=1}^n c_l  \ket{\psi}^{\otimes  n-1}_{P\setminus P_{\ell}}\ket{\psi_{j}}_{P_{\ell}}\bra{\psi_{j}}_{S})\rho(\sum_{\ell'=1}^n \bar{c}_{\ell'} \ket{\psi_{j'}}_{S}\bra{\psi}^{\otimes  n-1}_{P\setminus P_{\ell'}}\bra{\psi_{j'}}_{P_{\ell'}})\|_1
\end{align}
Let us choose \begin{align}
    \rho=\frac{\ket{\psi}+\ket{\psi_1}}{\sqrt{2}}\frac{\bra{\psi}+\bra{\psi_1}}{\sqrt{2}}. 
\end{align}
Plugging into Eq.~\eqref{eq:3} and using the normalized vector $\ket{v_{1}}$ from (\ref{eqn:vvvec}), one obtains
\begin{align}
    \opnorm{\mathcal{M}_0-\mathcal{K}_0} &\ge \norm{\frac{\sqrt{1-|c_0|^2}}{2}(\tilde{c}_0\ket{\psi}^{ \otimes n}\!\bra{v_1}+\bar{\tilde{c}}_0\ket{v_1}\!\bra{\psi}^{\otimes  n}) + \frac{(1-|c_0|^2)}{2}\ketbra{v_1}{v_1}}_1 \\
    &= \frac{2 \sqrt{n^3+3n^2+n}}{(1+n)^2} \\
    &= \frac{2}{\sqrt{n}} - O\left(\frac{1}{n^{3/2}}\right)
\end{align}
where in the second line we have specialized to the programmable processor defined by $\theta = \pi$.
Thus, we have proven that 
\begin{align}
    \opnorm{\Prep(\psi)_{S\rightarrow P} - \widehat{\mathcal{E}}_{\psi, \apxRef_{\pi}}} \geq \frac{2}{\sqrt{n}} - O\left(\frac{1}{n} \right)
\end{align}
at $\theta = \pi$ and this agrees with \cref{lem:compchan}.

\end{document}